\newlength{\defbaselineskip}
\newcommand{\setlinespacing}[1]%
           {\setlength{\baselineskip}{#1 \defbaselineskip}}
\theoremstyle{plain}
\newtheorem{thm}{Theorem}[section]
\newtheorem{cor}[thm]{Corollary}
\newtheorem{lem}[thm]{Lemma}
\newtheorem{prop}[thm]{Proposition}
\theoremstyle{definition}
\newtheorem{defn}{Definition}[section]
\newtheorem{rmk}{Remark}[section]
\newcommand{\eps}{\varepsilon}
\newcommand{\cS}{\mathcal{S}}
\newcommand{\cF}{\mathcal {F}}
\newcommand{\cO}{\mathcal {O}}
\newcommand{\bP}{\mathbb{P}}
\newcommand{\bR}{\mathbb{R}}
\newcommand{\sF}{\mathscr{F}}
\newcommand{\sP}{\mathscr{P}}
\newcommand{\sD}{\mathscr{D}}
\newcommand{\sS}{\mathscr{S}}
\DeclareMathOperator*{\esssup}{ess\,sup}
\makeatletter\@addtoreset{equation}{section} \makeatother
\begin{document}

\title{Forward-Backward Stochastic Differential Systems Associated to Navier-Stokes Equations in the Whole Space 
\thanks{This research is supported by the Natural Science Foundation of China (Grants
\#10325101 and \#11171076), the Science Foundation for Ministry of
Education of China (No.20090071110001), and the Chang Jiang Scholars
Programme. Part of the work was done when the second author visited Department of Mathematics, ETH, Z\"urich in the summer of 2011. The hospitality of ETH is greatly appreciated. He also would like to
 thank Professors Michael Struwe and  Alain-Sol Sznitman for very helpful discussions and comments related to Navier-Stokes equations.
 \textit{E-mail}: \texttt{delbaen@math.ethz.ch} (Freddy Delbaen), \texttt{qiujinn@gmail.com} (Jinniao Qiu), \texttt{sjtang@fudan.edu.cn} (Shanjian Tang).}
}
%\subtitle{Do you have a subtitle?\\ If so, write it here}

%\titlerunning{FBSDSs and Navier-Stokes equations}        % if too long for running head

\author{Freddy Delbaen \and Jinniao Qiu \and Shanjian Tang
}

%\authorrunning{Short form of author list} % if too long for running head
%
%\institute{Freddy Delbaen  \at
%              Department of Mathematics, ETH-Zentrum,
%               LEO D 8, CH-8092 Z\"{u}rich, Switzerland\\
%              \email{delbaen@math.ethz.ch}           %  \\
%%             \emph{Present address:} of F. Author  %  if needed
%           \and
%           Jinniao Qiu and Shanjian Tang \at
%           School of Mathematical Sciences, Fudan University,
%            200433 Shanghai, China\\
%           \email{qiujinn@gmail.com (J. Qiu), sjtang@fudan.edu.cn (S. Tang)}
%           \and
%           Jinniao Qiu \at
%              Department of Mathematics, Humboldt-Universit\"{a}t zu Berlin,
%               Unter den Linden 6, 10099 Berlin, Germany\\
%}

%\date{Received: date / Accepted: date}
% The correct dates will be entered by the editor

\maketitle

\begin{abstract}

A coupled forward-backward stochastic differential system (FBSDS) is formulated in spaces of fields for the incompressible Navier-Stokes equation in the whole space. It is shown to have a unique local solution, and further if either the Reynolds number is small or the dimension of the forward stochastic differential equation is equal to two, it can be shown to have a unique global solution.  These results are shown with probabilistic arguments to imply the known existence and uniqueness results for the Navier-Stokes equation, and thus provide probabilistic formulas to the latter. Related results and the maximum principle are also addressed for partial differential equations (PDEs) of Burgers' type. Moreover, from truncating the time interval of the above FBSDS, approximate solution  is derived for the Navier-Stokes equation by a new class of FBSDSs and their associated PDEs; our probabilistic formula is also bridged to the  probabilistic Lagrangian representations for the velocity field, given by Constantin and Iyer (Commun. Pure Appl. Math. 61: 330--345, 2008) and Zhang (Probab. Theory Relat. Fields 148: 305--332, 2010) ; finally, the solution of the Navier-Stokes equation is shown to be a critical point of controlled forward-backward stochastic differential equations.

%\subclass{60H30 \and 35Q30 \and 76D06}
\end{abstract}
 Keywords: forward-backward stochastic differential system, Navier-Stokes equation, Feynman-Kac formula, strong solution, Lagrangian approach, variational formulation.
\section{Introduction} \label{intro}

Consider the following Cauchy problem for deterministic backward Navier-Stokes equation for  the velocity field of an incompressible, viscous fluid:
\begin{equation}\label{backward NS}
  \left\{\begin{array}{l}
    \begin{split}
      &\partial_t u + \frac{\nu}{2}\Delta u + (u\cdot \nabla)u + \nabla p + f=0, \;\; t \leq T;\\
     &\nabla \cdot u = 0, \quad u(T) = G,
    \end{split}
  \end{array}\right.
\end{equation}
which is obtained from the classical Navier-Stokes equation via the time-reversing transformation
$$
    (u,p,f)(t,x)\longmapsto
    (-u,p,f)(T-t,x),\quad \textup{for} \; t \leq T.
$$
 Here, $T>0$, $u$ is the $d$-dimensional velocity field of the fluid, $p$ is the pressure field, $\nu\in(0,\infty)$ is the kinematic viscosity, and $f$ is the external force field which, without any loss of generality, is taken to be divergence free. It is well-known that the Navier-Stokes equation was introduced by Navier~\cite{Navier1822} and Stokes~\cite{Stokes1849} via adding a dissipative term $\nu \Delta u$ as the friction force
to Euler's equation, which is Newton's law for an infinitesimal volume element of the fluid.

Forward-backward stochastic differential equations (FBSDEs) are already well-known nowadays to be connected to systems of nonlinear parabolic partial differential equations (PDEs) (see among many others \cite{Antonelli93,cheridito2007second,delarue2002existence,HuPengFK95,MaProtterYong1994,ParPeng1992QPDE,PardouxTangFK99,Tang_05,YongFK1997}). Within such a theory, the $d$-dimensional Burgers' equation (in the backward form)
\begin{equation}\label{backward burgers-eq}
  \left\{\begin{array}{l}
          \partial_t v + \frac{\nu}{2} \Delta v + (v\cdot \nabla)v  +f=0 , \;\; t \leq T;\\
         v(T) = \phi,
    \end{array}\right.
\end{equation}
as a simplified version of Naview-Stokes equation~\eqref{backward NS}, is associated in a straightforward way to the following coupled FBSDE:
\begin{equation}\label{FBSDE-burgers}
  \left\{\begin{array}{l}
  \begin{split}
  dX_s(t,x)&
            =Y_s(t,x)\,ds+\sqrt{\nu}\,dW_s,\quad s\in[t,T];\\
  X_t(t,x)&=x;\\
  -dY_s(t,x)&
            =f(s,X_s(t,x))\,ds-\sqrt{\nu}Z_s(t,x)\,dW_s;\\
    Y_T(t,x)&=G(X_T(t,x)). \\
    \end{split}
  \end{array}\right.
\end{equation}
They are related to each other by the following:
\begin{align}\label{reltn-intro-burgers}
Y_s(t,x)=v(s,X_s(t,x)),\,Z_s(t,x)=\nabla v(s,X_s(t,x)), \quad s\in [t,T]\times \mathbb{R}^d
\end{align}
and (see \cite{Tang_05})
\begin{align}\label{reltn-intro-burgers}
Y_s(t,X_s^{-1}(t,x))=v(s,x),\,Z_s(t,X_s^{-1}(t,x))=\nabla v(s,x), \quad s\in [t,T]\times \mathbb{R}^d
\end{align}
with $X_{\cdot}^{-1}(t,x)$ being the inverse of the homeomorphism $X_{\cdot}(t,x), x\in \mathbb{R}^d$.

 It is a tradition to  represent solutions of PDEs as the expected  functionals of stochastic processes. The history is long and the literature is huge. Many studies have been devoted to %{\it partially}
  probabilistic representation to solution of Navier-Stokes equation \eqref{backward NS}, with the following three methodologies. The first is {\it the vortex method}, which aims to give a probabilistic representation first for the vorticity field, and then for the velocity field via the Biot-Savart law (which associates the vorticity field directly to the velocity field, see \cite[pages 71-73]{BertozziMajda2002}).
   In the two-dimensional case ($d=2$),  the vorticity turns out to obey a Fokker-Planck type parabolic PDE, and its probabilistic interpretation is straightforward. In this line, see Chorin \cite{Chorin73} who used random walks and a particle limit to represent the vorticity field, and Busnello~\cite{Busnello1999} who used the Girsanov transformation to give a probabilistic representation of the vorticity field, and used the Bismut-Elworthy-Li formula to derive a probabilistic interpretation for the Biot-Savart law. In the three-dimensional case ($d=3$), the vorticity field is still found to evolve as a parabolic PDE, but it is complicated by the addition of the stretching term; Esposito et al. \cite{Esposito-Sciarretta1988,EspositoMarra_1989} proposed a probabilistic representation formula for the vorticity field and then for the velocity field without any further probabilistic representation for the Biot-Savart law.  Busnello et al.~\cite{BusnelloFlandoliRomito05} used the Bismut-Elworthy-Li formula  to give a probabilistic interpretation for the Biot-Savart law and then for the velocity field. The second is {\it the Fourier transformation method}.  Le Jan and Sznitman \cite{LeJanSznitman1997} interpreted the Fourier transformation of the Laplacian of the three-dimensional velocity field in terms of a backward branching process and a composition rule along the associated tree, and got a new existence theorem, and their approach was extensively studied and generalized by others (see, for instance \cite{Bhattacharya-Waymire2003,Ossiander2005}). The third is {\it the Lagrangian flows method}, and see Constantin and Iyer~\cite{Constantin-Iyer-2008} and Zhang~\cite{Zhang-bNS-2010}.

Navier-Stokes equation~\eqref{backward NS} typically has a divergence-free constraint and contains a pressure potential to complement the thus-lost degree of freedom.  Since the pressure in equation~\eqref{backward NS} turns out to be determined by the Poisson equation (as a consequence of the incompressibility):
$$
\Delta {p}=-\textrm{div div }(  {u}\otimes {u}),
$$
the Navier-Stokes equations \eqref{backward NS} has the following equivalent form:
 \begin{equation}\label{NS-Intg-Opr-form}
  \left\{\begin{array}{l}
          \partial_t  {u} + \frac{\nu}{2} \Delta  {u} + ( {u}\cdot \nabla) {u}
           +\nabla(-\Delta)^{-1}\textrm{div div }( {u}\otimes {u}) +f=0 , \;\; t \leq T;\\
          {u}(T) = G.
    \end{array}\right.
\end{equation}
 In comparison with Burgers' equation~\eqref{backward burgers-eq}, it has an extra nonlocal operator appearing in its dynamic equation. To give a fully probabilistic representation for the Navier-Stokes equations, we have to incorporate this additional term. In this paper, we associate the Navier-Stokes equation~\eqref{backward NS} to the following coupled forward-backward stochastic differential system (FBSDS):
\begin{equation}\label{1.1}
  \left\{\begin{array}{l}
  \begin{split}
  dX_s(t,x)&
            =Y_s(t,x)\,ds+\sqrt{\nu}\,dW_s,\quad s\in[t,T];\\
  X_t(t,x)&=x;\\
  -dY_s(t,x)&
            =\left[f(s,X_s(t,x))+\widetilde{Y}_0(s,X_s(t,x))\right]\,ds-\sqrt{\nu}Z_s(t,x)\,dW_s;\\
    Y_T(t,x)&=G(X_T(t,x));\\
  -d\widetilde{Y}_{s}(t,x)&=\!\!\sum_{i,j=1}^d \frac{27}{2s^3}Y^i_tY^j_t(t,x+B_{s})\left(B^i_{\frac{2s}{3}}-B^i_{\frac{s}{3}}\right)
              \left(B_{s}^j-B_{\frac{2s}{3}}^j\right)B_{\frac{s}{3}} \,ds\\
              &\ \ -\widetilde{Z}_s(t,x)dB_s,\quad s\in(0,\infty);\\
    \widetilde{Y}_{\infty}(t,x)&=0.
    \end{split}
  \end{array}\right.
\end{equation}
Here, $B$ and $W$ are two independent  $d$-dimensional standard Brownian motions, $Y$ and $\widetilde{Y}$ satisfy backward stochastic differential equations
 and  $X$ satisfies a forward one. The forward SDE describes a stochastic particle system, and the BSDE in the finite time interval specifies the evolution of the velocity.   The drift part of $\{Y_s(t,x), s\in [t,T]\}$ (see the third equality of FBSDS~\eqref{1.1}) at time $s$  depends on ${\widetilde Y}_0$, and that of $\{{\widetilde Y}_s(t,x), s\in (0,\infty)\}$ depends on $Y_t(t, x+B_s)$, which make our system \eqref{1.1} differ from the conventional coupled forward-backward stochastic differential equations (FBSDEs) (see \cite{Antonelli93,HuPengFK95,MaProtterYong1994,ParPeng1992QPDE,Peng1991_QPDE,PardouxTangFK99,YongFK1997}). Furthermore, a BSDE in the infinite time interval is introduced to express the integral operator $\nabla(-\Delta)^{-1}\textrm{div div}$ in a probabilistic manner, and both backward stochastic differential equations (BSDEs) in  FBSDS~\eqref{1.1} are defined on two different time-horizons $[t,T]$ and $(0, \infty)$.

We have

\begin{thm}\label{thm local-intro}
  Let $G\in H^m_{\sigma},$ and $f\in L^2(0,T;H^{m-1}_{\sigma})$ with $m>d/2$. Then there is  $T_0<T$ which depends on $\|f\|_{L^2(0,T;H^{m-1})}$, $\nu$, $m$, $d$, $T$ and $\|G\|_m$, such that FBSDS \eqref{1.1} admits a unique $H^m$-solution $(X,Y,Z,\widetilde{Y}_0)$ on $(T_0,T]$.
  For the solution, there hold the following representations
  \begin{align}
  Z_t(t,\cdot)=\nabla Y_t(t,\cdot),Y_s(t,\cdot)= Y_s(s,X_s(t,\cdot)) \textrm{ and } Z_{s}(t,\cdot):=Z_s(s,X_s(t,\cdot)),
  \label{thm-relat-Y-Z}
  \end{align}
  for $T_0<t\leq s\leq T$. Moreover, there exists some scalar-valued function $ {p}$ such that $\nabla  {p}=\widetilde{Y}_0$, and  $( {u}, {p})$ with $ {u}(t,x):=Y_t(t,x)$ coincides with the unique strong solution to the Navier-Stokes equation \eqref{backward NS}.
  \end{thm}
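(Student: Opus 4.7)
The plan is to solve \eqref{1.1} by a Picard contraction on the velocity field $u(t,x):=Y_t(t,x)$. The key observation is that, once $u$ is frozen in the driver of the infinite-horizon BSDE for $\widetilde{Y}$, that equation decouples from the rest and $\widetilde{Y}_0$ becomes an explicit functional of $u$; the finite-horizon part of \eqref{1.1} then reduces to a Burgers-type FBSDE of the form \eqref{FBSDE-burgers} with external force $f+\widetilde{Y}_0$, for which the theory recalled in \eqref{reltn-intro-burgers} is already at hand. I would accordingly (i) identify the operator hidden in $\widetilde{Y}_0$, (ii) solve the finite-horizon Burgers subsystem for each fixed $u$, and (iii) close the loop by contraction on a sufficiently small interval $(T_0,T]$.

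\textbf{Step 1 (pressure representation).} Taking expectation in the infinite-horizon BSDE and using $\widetilde{Y}_\infty=0$ gives
\begin{equation*}
\widetilde{Y}_0(t,x)=\mathbb{E}\!\left[\int_0^\infty \tfrac{27}{2s^3}\sum_{i,j}Y^i_tY^j_t(t,x+B_s)\bigl(B^i_{\frac{2s}{3}}-B^i_{\frac{s}{3}}\bigr)\bigl(B^j_s-B^j_{\frac{2s}{3}}\bigr)B_{\frac{s}{3}}\,ds\right].
\end{equation*}
The three Brownian factors correspond to disjoint increments of common length $s/3$, hence are independent; three successive Gaussian integrations by parts (equivalently, three applications of Bismut--Elworthy--Li) convert them into three spatial derivatives acting on $(u\otimes u)(t,\cdot)$ convolved with the Gaussian kernel of variance $s$. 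The prefactor $27/(2s^3)=(3/s)^3/2$ exactly cancels the three variances $(s/3)^3$, and using $\int_0^\infty P_s\,ds=2(-\Delta)^{-1}$ (with $P_s$ the semigroup of $\tfrac12\Delta$) on the range of $\textrm{div div}$ yields
\[
\widetilde{Y}_0(t,x)=\nabla(-\Delta)^{-1}\textrm{div div}\bigl(u(t,\cdot)\otimes u(t,\cdot)\bigr)(x).
\]
The Sobolev algebra property ($u\otimes u\in H^m$ for $m>d/2$) and the $H^s$-boundedness of this Riesz-type operator deliver the required $H^{m-1}$ bound on $\widetilde{Y}_0$, together with its Lipschitz dependence on $u$.

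\textbf{Step 2 (contraction).} With $\delta:=T-T_0>0$ small, consider the closed ball $\mathcal{K}\subset C([T_0,T];H^m_\sigma)$ of $u$ satisfying $u(T)=G$ and $\|u\|_{L^\infty_t H^m_x}\le 2\|G\|_m$. For $u\in\mathcal{K}$, define $\widetilde{Y}_0^u$ via Step 1 and solve the finite-horizon coupled FBSDE obtained from \eqref{1.1} by replacing the driver of $Y$ by $f+\widetilde{Y}_0^u$. This system is of the Burgers type \eqref{FBSDE-burgers}--\eqref{reltn-intro-burgers}, hence produces a unique $H^m$-solution $(X^u,Y^u,Z^u)$ with $Y^u_s(t,x)=v^u(s,X^u_s(t,x))$ and $Z^u_s(t,x)=\nabla v^u(s,X^u_s(t,x))$ for some $v^u\in\mathcal{K}$. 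Setting $\Phi(u):=v^u$ and combining standard stability estimates for Burgers-type FBSDEs with the Lipschitz bound from Step 1 yields
\[
\|\Phi(u_1)-\Phi(u_2)\|_{L^\infty_t H^{m-1}_x}\le C\,\delta^{1/2}\,\|u_1-u_2\|_{L^\infty_t H^{m-1}_x},
\]
with $C=C(\|f\|_{L^2 H^{m-1}},\|G\|_m,\nu,m,d,T)$. For $\delta$ small enough $\Phi$ is a contraction, and its fixed point $u^*$ furnishes the unique $H^m$-solution of \eqref{1.1} together with the representations \eqref{thm-relat-Y-Z}.

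\textbf{Step 3 (identification with Navier--Stokes) and principal obstacle.} Applying It\^o's formula to $v(s,X_s(t,x))$ with $v=u^*$ and matching drifts pointwise with the BSDE for $Y$ shows that $u:=v$ satisfies $\partial_tu+\tfrac\nu2\Delta u+(u\cdot\nabla)u+f+\widetilde{Y}_0=0$ in the strong sense on $(T_0,T]$. Setting $p:=(-\Delta)^{-1}\textrm{div div}(u\otimes u)$ gives $\nabla p=\widetilde{Y}_0$ by Step 1; taking the divergence of the PDE and using $\nabla\cdot G=0$ together with parabolic uniqueness forces $\nabla\cdot u\equiv 0$, so $(u,p)$ coincides with the unique strong solution of \eqref{backward NS}. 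I expect the main technical difficulty to be the precise analysis of Step 1: the apparent $s^{-3/2}$ singularity of the raw integrand at $s\to 0$ is eliminated only by the exact triple Gaussian cancellation, while $s\to\infty$ integrability rests on the decay $\|\nabla^3 P_s(u\otimes u)\|_{L^2}\lesssim s^{-3/2}\|u\otimes u\|_{L^2}$. Turning these identifications into uniform, Lipschitz bounds on $H^{m-1}$ along the Picard iteration is the heart of the proof.
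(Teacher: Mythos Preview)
Your Step~1 is essentially the content of Lemma~\ref{lem singular operator} in the paper and is correct. The gap is in Step~2: the map $\Phi$ does \emph{not} send $\mathcal{K}\subset C([T_0,T];H^m_\sigma)$ into itself. With $u\in\mathcal{K}$ frozen only in the pressure term, $v^u$ solves the Burgers equation $\partial_t v^u+\tfrac{\nu}{2}\Delta v^u+(v^u\cdot\nabla)v^u+f+\nabla p^u=0$ where $-\Delta p^u=\textrm{div\,div}(u\otimes u)$. Taking the divergence and writing $\theta:=\nabla\cdot v^u$ yields a backward parabolic equation for $\theta$ whose source is $\textrm{tr}[(\nabla v^u)^2]-\textrm{div\,div}(u\otimes u)$; this vanishes only when $u=v^u$. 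Hence $\Phi(u)\notin H^m_\sigma$ in general and your contraction is not a self-map of~$\mathcal{K}$.

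The paper circumvents this by a different decomposition: it freezes the velocity $v$ in \emph{both} the forward SDE drift and the infinite-horizon BSDE, and runs a two-level contraction (inner on $\zeta$, outer on $v$). With $v$ fixed in the drift, the transport term $(v\cdot\nabla)Y$ is linear in $Y$, and after the inner fixed point $\zeta=Y$ the effective PDE becomes $\partial_t Y+\tfrac{\nu}{2}\Delta Y+\textbf{P}\bigl((v\cdot\nabla)Y\bigr)+f=0$. Now the Leray projection exactly removes the non-solenoidal part, so $\nabla\cdot Y$ solves the homogeneous backward heat equation with zero terminal data and vanishes identically; this is why the outer map $v\mapsto Y^v$ genuinely lands in the divergence-free ball $U_R^\eps$. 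A side benefit is that each iterate is a \emph{decoupled} linear FBSDE (Lemma~\ref{lem_verif}), giving existence on all of $[0,T]$ at every step, whereas your Burgers subproblem already needs the local theory of Proposition~\ref{prop FBSDEandPDE} with $\alpha=1$ plus a uniform-in-$u$ lifespan bound.

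Your scheme is repairable: run the contraction on a ball in $C([T_0,T];H^m)$ without the solenoidal constraint, and verify $\nabla\cdot u^*=0$ a posteriori at the fixed point as in your Step~3. But note that at the fixed point the divergence satisfies $\partial_t\theta+\tfrac{\nu}{2}\Delta\theta-(u^*\cdot\nabla)\theta-\theta^2=0$ with $\theta(T)=0$, which is semilinear; the ``parabolic uniqueness'' you invoke needs either smallness of $T-T_0$ or a Gronwall argument using the a priori $L^2_tH^{m+1}_x$ bound on $u^*$, not just linear uniqueness.
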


  In the last theorem, all unknown forward and backward states of the concerned FBSDS evolve in spaces of fields, and the conditions on  $f$ and $G$ are much weaker than those of the existing related results on coupled FBSDEs (see \cite{Antonelli93,delarue2002existence,HuPengFK95,MaProtterYong1994,PardouxTangFK99,pengwu1999fully,YongFK1997})---which usually require that  $f$ and $G$  are either bounded or uniformly Lipschitz continuous in the space variable $x$. Related results and the maximum principle for PDEs of Burgers' type are also presented in this paper.

 FBSDS~\eqref{1.1} is a complicated version of FBSDE~\eqref{FBSDE-burgers}, including an additional nonlinear and nonlocal term in the drift of the BSDE to keep the backward state living in the divergence-free subspace. While the additional term causes difficulty in formulating probabilistic representations, it helps us to obtain the global solutions if either the Reynolds number is small or the dimension is equal to two.

   Our relationship between FBSDS~\eqref{1.1}  and Navier-Stokes equation~\eqref{backward NS} is shown to imply a probabilistic Lagrangian representation for the velocity field, which coincides with the formulas given by \cite{Constantin-Iyer-2008,Constantin-Iyer-arxiv-2011,Zhang-bNS-2010} and weakens the regularity assumptions required in the references (see Remark \ref{rmk-connct-Lagrgn} below). On the other hand, in the spirit of the variational interpretations for Euler equations by Arnold \cite{Arnold-1966}, Ebin and Marsden~\cite{ebin1970groups} and Bloch et al. \cite{bloch2000optimal},  Inoue and Funaki~\cite{inoue1979new},  Yasue \cite{yasue1983variational} and Gomes~\cite{gomes2005variational} formulated different stochastic variational principles for the Navier-Stokes equations. Along this direction, we give a new stochastic variational formulation for the Navier-Stokes equations on basis of our probabilistic Lagrangian representation.

    Other quite related works include Albeverio and Belopolskaya \cite{albeverio2007generalized} who constructed a weak solution of the 3D Navier-Stokes equation by solving the associated stochastic system with the approach of stochastic flows, Cruzeiro and Shamarova \cite{Cruzeiro-Shamarova-2009} who established a connection between the strong solution to the spatially periodic Navier-Stokes equations and a solution to a system of FBSDEs on the group of volume-preserving diffeomorphisms of a flat torus, and Qiu, Tang and You \cite{QiuTangYou-SPA-2011} who considered a similar non-Markovian FBSDS to ours \eqref{1.1} in the two-dimensional spatially periodic case, and studied the well-posedness of the corresponding backward stochastic PDEs.

    The rest of this paper is organized as follows. In Section 2, we introduce notations and functional spaces, and give auxiliary  results. In Section 3, the solution to FBSDS \eqref{1.1} is defined in suitable spaces of fields, and our main result (Theorem \ref{thm local}) is stated on the FBSDS associated to the Navier-Stokes Equation. In Section 4,  we discuss the coupled FBSDEs for PDEs of Burgers' type, and related results and the maximum principle are presented. In Section 5, Theorem \ref{thm local} is proved,  and the  global existence and uniqueness of the solution  is given if either the Reynolds number is small or the dimension of the forward SDE is equal to two. By truncating the time interval of the FBSDS, we approximate in Section 6 the Navier-Stokes equation by a class of FBSDSs and associated PDEs. In Section 7, from our relationship between FBSDS and the Navier-Stokes equation, we derive a probabilistic Lagrangian representation for the velocity field, which is shown to imply those of~\cite{Constantin-Iyer-2008,Zhang-bNS-2010}, and we also give a variational characterization of the the Navier-Stokes equation. Finally in Section 8 as an appendix,  we prove Lemmas \ref{lem norm-equivalence} and \ref{lem_verif}.

\section{Preliminaries}\label{sec:prelm}
\subsection{Notations}
Let $(\Omega,\bar{\sF},\{\bar{\sF}_t\}_{t\geq0},\bP)$ be a complete filtered
probability space on which are defined two independent $d$-dimensional standard Brownian motions $W=\{W_t:t\in[0,\infty)\}$ and $B=\{B_t:t\in[0,\infty)\}$ such that $\{\bar{\sF}_t\}_{t\geq 0}$ is the natural
filtration generated by $W$ and $B$, and augmented by all the $\bP$-null sets in
$\bar{\sF}$. By $\{\sF\}_{t\geq 0}$ and $\{\sF^B\}_{t\geq 0}$, we denote the natural filtration generated by $W$ and $B$ respectively, and they are both augmented by all the $\bP$-null sets.  $\sP$ is the $\sigma$-algebra of the predictable sets on $\Omega\times[0,T]$ associated with $\{\sF_t\}_{t\geq0}$.

  The set of all the integers is denoted by $\mathbb{Z}$, with $\mathbb{Z}^+$ the subset of the positive elements and $\mathbb{N}:=\mathbb{Z}^+\cup \{0\}$. Denote by $|\cdot|$ (respectively,
$\langle \cdot,\cdot \rangle$ or $\cdot$) the norm (respectively, scalar product) in
finite-dimensional Hilbert space such as $\bR,\bR^k,\bR^{k\times l}$, $k,l\in \mathbb{Z}^+$ and
$$|x|:=\left( \sum_{i=1}^k x_i^2  \right)^{1/2} \quad
\textrm{and}\quad  |y|:=\left(\sum_{i=1}^k\sum_{j=1}^l y_{ij}^2
\right)^{1/2} \quad \textrm{for}~ (x,y)\in \bR^k \times
\bR^{k\times l}.
$$

For each Banach space $(\mathcal {X},\|\cdot\|_{X})$ and real $q\in[1,\infty]$, we denote by $S^q([t,\tau];\mathcal{X})$ the set of $\mathcal{X}$-valued, $\sF_t$-adapted and c\`{a}dl\`{a}g processes $\{X_s\}_{s\in[t,\tau]}$ such that
%$$
%\|X\|_{S^q([t,\tau];\mathcal{X})}:=E\Big[  \sup_{s\in[t,\tau]} \|X_s\|_{\mathcal{X}}^q  \Big]^{1/q}<\infty.
%$$
 \begin{equation*}
  \|X\|_{S^q([t,\tau];\mathcal{X})}:=
  \left\{\begin{array}{l}
          E\Big[  \sup_{s\in[t,\tau]} \|X_s\|_{\mathcal{X}}^q  \Big]^{1/q}<\infty,\quad q\in[1,\infty);\\
         \esssup_{\omega\in\Omega}  \sup_{s\in[t,\tau]} \|X_s\|_{\mathcal{X}}<\infty,\quad q=\infty.
    \end{array}\right.
\end{equation*}
$L_{\sF}^q(t,\tau;\mathcal{X})$ denotes the set of (equivalent classes of) $\mathcal{X}$-valued predictable processes
$\{X_s\}_{s\in[t,\tau]}$ such that
%$$
%\|X\|_{L^q_{\sF}(t,\tau;\mathcal{X})}:=E\Big[  \int_t^\tau \|X_s\|_{\mathcal{X}}^q\,ds  \Big]^{1/q}<\infty.
%$$
 \begin{equation*}
  \|X\|_{L^q_{\sF}(t,\tau;\mathcal{X})}:=
  \left\{\begin{array}{l}
          E\Big[  \int_t^\tau \|X_s\|_{\mathcal{X}}^q\,ds  \Big]^{1/q}<\infty,\quad q\in[1,\infty);\\
         \esssup_{(\omega,s)\in\Omega\times{[t,\tau]}} \|X_s\|_{\mathcal{X}}  <\infty,\quad q=\infty.
    \end{array}\right.
\end{equation*}
 Both $\left(S^q([t,\tau];\mathcal{X}),\|\cdot\|_{S^q([t,\tau]\mathcal{X})}\right)$ and $\left(L_{\sF}^q(t,\tau;\mathcal{X}),\|\cdot\|_{L_{\sF}^q(t,\tau;\mathcal{X})}\right)$ are Banach spaces.

Define the set of multi-indices
$$\mathcal {A}:=\{\alpha=(\alpha_1,\cdots,\alpha_d): \alpha_1, \cdots, \alpha_d \textrm{
are nonnegative integers}\}.$$ For any $\alpha\in \mathcal{A}$ and
$x=(x_1,\cdots,x_d)\in \bR^d,$ denote
$$ |\alpha|=\sum_{i=1}^d
\alpha_i,\ x^{\alpha}:=x_1^{\alpha_1}x_2^{\alpha_2}\cdots x_d^{\alpha_d},\ D^{\alpha}:=\frac{\partial^{|\alpha|}}
{\partial x_1^{\alpha_1}\partial x_2^{\alpha_2}\cdots\partial x_d^{\alpha_d}}. $$
For  differentiable transformations $\phi,\psi$ on $\bR^d$,  define the Jacobi matrix  $\nabla \phi$ of $\phi$:
\begin{equation*}
  \nabla \phi =
  \left(\begin{array}{l}
  \begin{split}
    &\partial_{x_1}\phi^1,\partial_{x_2}\phi^1,\cdots,\partial_{x_d}\phi^1\\
    &\partial_{x_1}\phi^2,\partial_{x_2}\phi^2,\cdots,\partial_{x_d}\phi^2\\
    &\cdots,\ \ \cdots,\ \ \cdots,\ \ \cdots\\
    &\partial_{x_1}\phi^d,\partial_{x_2}\phi^d,\cdots,\partial_{x_d}\phi^d
    \end{split}
  \end{array}\right)
\end{equation*}
whose transpose is denoted by $\nabla^{\mathcal {T}}\phi$, the divergence
$\textrm{div} \phi=\nabla\cdot\phi$, and the matrix
\begin{equation*}
  \phi\otimes\psi =
  \left(\begin{array}{l}
  \begin{split}
    &\phi^1\psi^1,\phi^1\psi^2,\cdots,\phi^1\psi^d\\
    &\phi^2\psi^1,\phi^2\psi^2,\cdots,\phi^2\psi^d\\
    &\cdots,\ \ \cdots,\ \ \cdots,\ \ \cdots\\
    &\phi^d\psi^1,\phi^d\psi^2,\cdots,\phi^d\psi^d
    \end{split}
  \end{array}\right).
\end{equation*}

Now we extend several  spaces of real-valued functions to those of vector-valued functions. For $l,k\in\mathbb{Z}^+$, we denote by $C^{\infty}_{c}(\bR^l;\bR^k)$ the set of all infinitely
differentiable $\bR^k$-valued functions with compact supports on  $\bR^{l}$ and by $\mathscr{D}'(\bR^l;\bR^k)$ the totality of all the $\bR^k$-valued general functions with each component being Schwartz distribution. For simplicity,  we write $C_c^{\infty}$ and $\mathscr{D}'$ for the case $l=k=d$.
On $\bR^d$ we denote by $\sS$ ($\sS'$, respectively) the set of
all the $\bR^d$-valued functions whose elements are Schwartz functions (tempered
distributions, respectively).
Then the Fourier transform $\cF(f)$ of $f\in \sS$ is given by
$$\cF(f)(\xi)=(2\pi)^{-d/2}\int_{\bR^d}\exp{\left(-\sqrt{-1}\langle x,\,\xi\rangle \right)}f(x)\,dx,
~~~\xi\in \bR^d,$$ and the inverse Fourier transform $\cF^{-1}(f)$ is given
by
$$
    \cF^{-1}(f)(x)=(2\pi)^{-d/2}\int_{\bR^d}\exp{\left(\sqrt{-1}\langle x,\,\xi\rangle\right)}f(\xi)\,d\xi,
~~~x\in \bR^d. $$
Extended to the general function space $\sS'$, the Fourier transform defines an isomorphism from
$\sS'$ onto itself. As usual, for each $s\in \bR$ and $f\in \sS'$, we denote the Bessel potential
$I_{s}(f):=(1-\Delta)^{s/2}f=\cF^{-1}((1+|\xi|^2)^{s/2}\cF(f)(\xi)).$

For $l\in\mathbb{Z}^+$, $m\in \mathbb{N}$, $q\in [1,\infty)$ and  $s\in [1,\infty]$, by $L^s(\bR^l)$ and $H^{m,q}(\bR^l)$ ($L^s$ and $H^{m,q}$,
with a little notional abuse), we denote the usual $\bR^l$-valued Lebesgue and Sobolev spaces on $\bR^d$, respectively. $H^{m,q}$ is equipped with the norm:
% $$\|\phi\|_{m,q}:=\left(\|\phi\|_q^q+\sum_{|\alpha|=1}^m \|D^{\alpha}\phi\|_q^q  \right)^{1/q},\ \phi\in H^{m,q},$$
 $$
 \|\phi\|_{m,q}:=\Bigg(\|\phi\|_{L^q}^q+\sum_{|\alpha|=1}^m \|D^{\alpha}\phi\|_{L^q}^q  \Bigg)^{1/q},\ \phi\in H^{m,q},
 $$
 which is equivalent to the norm:
 $$\|\phi\|_{m,q}:=\|(1-\Delta)^{\frac{m}{2}}\phi\|_{L^q},\ \phi\in H^{m,q},\quad\textrm{for }q\in (1,\infty).$$
 Both norms will not be distinguished  unless there is a confusion.
% Furthermore, using the Bessel potentials, we define the Sobolev space $H^{m,q}:=I_{-m} (L^{q})$ for $m\in \mathbb{Z}\setminus \mathbb{N}$ and $q\in (1,\infty)$.
 In particular, for the case of $q=2$, $H^{m,2}$ is a Hilbert space with the inner product:
 $$
 \langle\phi,\,\psi\rangle_m:
 =\int_{\bR^d}\langle I_{m/2}\phi(x),\,I_{m/2}\psi(x)\rangle\,dx,\ \phi,\psi\in H^{m,2}.
 $$
 We define the duality between $H^{s,q}$ and $H^{r,q'}$ for $q\in (1,\infty)$ and $q'=q/(q-1)$ as:
 $$
 \langle \phi,\,\psi\rangle_{s,r}:
 =\int_{\bR^d}\langle I_{s/2}\phi(x),\,I_{r/2}\psi(x)\rangle\,dx,\ \phi\in H^{s,q},\psi\in H^{r,q'}.
 $$
 For simplicity, we write the space $H^m$ and the norm $\|\cdot\|_m$ for $H^{m,2}$ and $\|\cdot\|_{m,2}$, respectively.

Define
 $$
 \sD_{\sigma}:=\left\{\phi\in C_c^{\infty}:\ \nabla\cdot\phi=0\right\}.
 $$
 Denote by $H^{m,q}_{\sigma}$  the completion of $\sD_{\sigma}$
 under the norm $\|\cdot\|_{m,q}$,  which is a complete subspace of $H^{m,q}$.

 Now we introduce several spaces of  continuous functions. For  $l\in\mathbb{Z}^+$, $k\in\mathbb{N}$ and domain $\cO\subset \bR^d$, we denote by $C(\cO,\bR^l)$, $C^k(\cO,\bR^l)$ and $C^{k,\delta}(\cO,\bR^l)$ with $\delta\in(0,1)$ the continuous function spaces equipped with the following norms respectively:
\begin{equation*}
  \begin{split}
    &\|\phi\|_{C(\cO,\bR^l)}:=\sup_{x\in \cO}|\phi(x)|,\quad
    \|\phi\|_{C^k(\cO,\bR^l)}:=\|\phi\|_{C(\cO,\bR^l)}+\sum_{|\alpha|=1}^k\|D^{\alpha}\phi\|_{C(\cO,\bR^l)}, \\
    &\|\phi\|_{C^{k,\delta}(\cO,\bR^l)}:=\|\phi\|_{C^{k}(\cO,\bR^l)}
                +\sum_{|\alpha|=k}\sup_{x,y\in\cO,x\neq y}\frac{|D^{\alpha}\phi(x)-D^{\alpha}\phi(y)|}{|x-y|^{\delta}},
  \end{split}
\end{equation*}
 with the convention that $C^0(\cO,\bR^l)\equiv C(\cO,\bR^l)$.  Whenever there is no confusion, we write $C(\bR^d),C^k,$ and $C^{k,\delta}$ for $C(\bR^d,\bR^l),C^k(\bR^d,\bR^l) $ and $C^{k,\delta}(\bR^d,\bR^l)$, respectively.

In an obvious way, we define spaces of Banach space valued functions such as $C(0,T;H^{m,q})$ and $L^r(0,T;H^{m,q})$ for $m\in\mathbb{Z},r,q\in (1,\infty)$, and related local spaces like the following ones:
\begin{align*}
L^r_{\textrm{loc}}(T_0,T;H^{m,q}):&=\mathop{\bigcap}_{T_1\in (T_0, T]}L^r(T_1,T;H^{m,q}),\\
C_{\textrm{loc}}((T_0,T];H^{m,q}):&=\mathop{\bigcap}_{T_1\in (T_0, T]}C([T_1,T];H^{m,q}).
\end{align*}

\subsection{Auxiliary results}

 In the remaining part of the work, we shall use $C$ to denote a constant whose value may vary from line to line, and when needed, a bracket will follow immediately after $C$ to indicate what parameters $C$ depends on. By $A\hookrightarrow B$ we mean that normed space $(A,\|\cdot\|_{A})$ is embedded into  $(B,\|\cdot\|_B)$ with a constant $C$ such that
$$\|f\|_B\leq C \|f\|_A,\,\,\,\forall f\in A.$$

\begin{lem}\label{lem sobolev}
  There holds the following assertions:

  (i)  For integer $n>d/q+k$ with $k\in \mathbb{N}$ and $q\in(1,\infty)$, we have  $H^{n,q}\hookrightarrow C^{k,\delta}$, for any $\delta\in (0,(n-d/q-k)\wedge 1).$

  (ii) If $1<r<s<\infty$ and $m,n\in\mathbb{N}$ such that  $\frac{d}{s}-m=\frac{d}{r}-n$, then $H^{n,r}\hookrightarrow H^{m,s}$.

  (iii) For any $s>d/2$, $H^s$ is a Banach algebra, i.e., there is a constant $C>0$ such that,
  $$
  \|\phi\psi\|_s\leq C\|\phi\|_s\|\psi\|_s,\ \ \forall \phi,\psi\in H^s.
  $$
\end{lem}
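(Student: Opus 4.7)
The three statements are classical Sobolev embedding and algebra results, and my plan is to give self-contained sketches for each, reducing the general fractional/non-integer cases to Bessel potential and Fourier-side arguments.

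For part (i), I would first reduce the claim by differentiation: if $\phi\in H^{n,q}$ then $D^\alpha\phi\in H^{n-|\alpha|,q}\subset H^{n-k,q}$ for every multi-index with $|\alpha|\le k$, so it suffices to show $H^{n-k,q}\hookrightarrow C^{0,\delta}$ for $\delta\in(0,(n-d/q-k)\wedge 1)$. Under the hypothesis $n-k>d/q$, this is Morrey's inequality. I would prove it by the standard route: write the difference $\phi(x)-\phi(y)$ as an average of $\nabla\phi$ over the segment joining $x$ and $y$, estimate the resulting integral on a ball of radius $|x-y|$ via a Poincar\'e-type inequality combined with H\"older, and use $(n-k)q>d$ to integrate the kernel. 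When $n-k-d/q\ge 1$ one may loose regularity to obtain any $\delta\in(0,1)$. Lower-order derivatives inherit the H\"older regularity since $C^{k',\delta'}\hookrightarrow C^{k,\delta}$ when $k'+\delta'\ge k+\delta$.

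For part (ii), the cleanest approach is via Bessel potentials. Writing $H^{n,r}=I_{-n/2}(L^r)$ (in the sense that $\phi\in H^{n,r}$ iff $(1-\Delta)^{n/2}\phi\in L^r$, with equivalence of norms by the Mikhlin multiplier theorem for $1<r<\infty$), the embedding $H^{n,r}\hookrightarrow H^{m,s}$ is equivalent to saying that the operator $(1-\Delta)^{-(n-m)/2}$ maps $L^r$ to $L^s$ continuously. Since the kernel of $(1-\Delta)^{-(n-m)/2}$ is comparable to the Riesz kernel $|x|^{-(d-(n-m))}$ near zero (with rapid decay at infinity), the Hardy--Littlewood--Sobolev inequality gives boundedness precisely when $\tfrac{1}{s}=\tfrac{1}{r}-\tfrac{n-m}{d}$, which is the stated scaling condition $d/r-n=d/s-m$.

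For part (iii), the Banach algebra property, my plan is to work on the Fourier side. With the equivalent norm $\|\phi\|_s^2=\int_{\mathbb{R}^d}(1+|\xi|^2)^{s}|\widehat{\phi}(\xi)|^2\,d\xi$ and $\widehat{\phi\psi}=(2\pi)^{-d/2}\,\widehat{\phi}*\widehat{\psi}$, I would exploit the Peetre inequality
\begin{equation*}
(1+|\xi|^2)^{s/2}\le C_s\bigl[(1+|\xi-\eta|^2)^{s/2}+(1+|\eta|^2)^{s/2}\bigr],\qquad s\ge 0,
\end{equation*}
to split $\widehat{\phi\psi}$ into two pieces, placing the weight on whichever factor carries the larger frequency. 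Applying Cauchy--Schwarz to each piece and using the critical embedding $H^s\hookrightarrow L^\infty$ for $s>d/2$ (which itself follows from $\|\phi\|_{L^\infty}\le(2\pi)^{-d/2}\|\widehat{\phi}\|_{L^1}$ together with $\int(1+|\xi|^2)^{-s}d\xi<\infty$ when $s>d/2$) one gets $\|\phi\psi\|_s\le C\|\phi\|_s\|\psi\|_s$. The main obstacle I anticipate is handling the Peetre splitting cleanly so that one side of each bilinear piece is controlled in $L^2$ with the $s$-weight while the other is absorbed by $L^\infty$, exactly where the threshold $s>d/2$ is used; for integer $s$ an alternative is to expand $D^\alpha(\phi\psi)$ by the Leibniz rule and close via the Gagliardo--Nirenberg interpolation inequality, which avoids the Peetre trick but requires slightly more bookkeeping.
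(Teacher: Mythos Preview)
Your sketches are correct and follow the standard textbook routes. The paper itself does not prove this lemma: it simply states that (i) and (ii) are ``borrowed from the well-known embedding theorem in Sobolev space'' with a reference to Triebel, and that (iii) is referred to \cite[Lemma 3.4, Page 98]{BertozziMajda2002}. So your proposal is strictly more detailed than the paper's own treatment, which amounts to a citation.
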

The first two assertions are borrowed from the well-known embedding theorem in Sobolev space (see~\cite{TRiebel_83}), and the last one is referred to \cite[Lemma 3.4, Page 98]{BertozziMajda2002}.

\begin{rmk}\label{rmk_m=2}
   Note that $d=2$ or $3$ throughout this work. For any $h,g\in H^2$, we have
   \begin{align*}
     &\|h\cdot g\|_1^2
     \\
     =&\,\|h\cdot g\|_0^2+\sum_{i=1}^d\|\partial_{x_i}h\cdot g\|_0^2
        +\sum_{i=1}^d\|h\cdot\partial_{x_i}g\|_0^2
     \\
     \leq&\,
        C\Big\{
        \|h\|_{C(\bR^d)}^2\|g\|_0^2
        +\|\nabla h\|_0^{2\beta}\|\nabla h\|_1^{2-2\beta}\| g\|_0^{2\beta}\|g\|_1^{2-2\beta}
        +\|h\|_{C(\bR^d)}^2\| g\|_1^2
        \Big\}
        \\
     &\big(\textrm{ using Gagliardo-Nirenberg Inequality (see \cite{FriedmanPDEs,Ladyzhenskaia_68,Nirenberg1959})}\big)
     \\
     \leq&\,
        C \|h\|_2^2\|g\|_1^2,
   \end{align*}
   where $\beta:=1-d/4$, and $H^2\hookrightarrow C^{0,\delta}$ for some $\delta\in(0,1)$. In view of Lemma \ref{lem sobolev}, we have for any integer $m>d/2$,
   $$
   \|hg\|_{m-1}\leq C(m,d) \|h\|_{m}\|g\|_{m-1},\ \forall h\in H^m,g\in H^{m-1}.
   $$
\end{rmk}

Next, we  discuss the the composition of generalized functionals with stochastic flows with Sobolev space-valued coefficients. Assume that $\nu>0$ and that
\begin{equation}\label{Assump on b}
b\in C([0,T];H^m)\cap L^2(0,T;H^{m+1}),\phi\in L^2(0,T;L^2(\bR^d)),\psi\in L^2(\bR^d),
\end{equation}
for some integer $m>d/2$. Consider the following FBSDE:
\begin{equation}\label{FBSDS_simpp}
  \left\{\begin{array}{l}
  \begin{split}
  dX_s(t,x)&
            =b(s,X_s(t,x))\,ds+\sqrt{\nu}\,dW_s,\quad s\in[t,T];\\
  X_t(t,x)&=x;\\
  -dY_s(t,x)&
            =\phi(s,X_s(t,x))\,ds-\sqrt{\nu}Z_s(t,x)\,dW_s,\quad s\in[t,T];\\
    Y_T(t,x)&=\psi(X_T(t,x)).
    \end{split}
  \end{array}\right.
\end{equation}
 Since $H^m\hookrightarrow C^{0,\delta}$ and $H^{m+1}\hookrightarrow C^{1,\delta}$ for $m>d/2$, in view of \cite[Theorems 3.4.1 and 4.5.1]{Kunita_book}, the \emph{forward} SDE is well posed for each $(t,x)\in [0,T]\times\bR^d$,  and the unique solution in relevance to the initial data $(t,x)\in [0,T]\times\bR^d$ defines a stochastic flow of homeomorphisms. Since the function $\phi$ is only  measurable, the following lemma serves to justify the composition $\phi(s,X_s(t,x))$.

 \begin{lem}\label{lem norm-equivalence}
   Assume that $m>d/2$ and $b\in C([0,T];H^m)\cap L^2(0,T;H^{m+1})$. Then
   for all $t\in [0,T], s\in [t, T], (\varphi, \eta)\in L^1(\bR^l)\times L^1([0,T]\times\bR^d;\bR^l)$, $l\in\mathbb{Z}^+$, we have
   \begin{align}
     \kappa \|\varphi\|_{L^1(\bR^l)}
                \leq &  \int_{\bR^d}\!\! E\big[ |\varphi(X_s(t,x))| \big]\,dx
                        \leq  \kappa^{-1}  \|\varphi\|_{L^1(\bR^l)},\label{eq_special}\\
     \lambda \|\eta\|_{L^1([t,T]\times\bR^l)}
                \leq \int_{\bR^d}&  \int_t^T\!\! E\big[ |\eta(s,X_s(t,x))| \big]\,dsdx
                        \leq \lambda^{-1}  \|\eta\|_{L^1([t,T]\times\bR^l)}, \label{eq_time}
   \end{align}
   with $\kappa=e^{-\|\textrm{div}\,b\|_{L^1(t,s;L^{\infty})}}$ and
   $\lambda=e^{-\|\textrm{div}\,b\|_{L^1(t,T;L^{\infty})}}$.
     \end{lem}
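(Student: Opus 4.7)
The plan is to exploit that the diffusion coefficient $\sqrt{\nu}\,I$ is constant: this forces the spatial derivative $\nabla X_s(t,x)$ to obey a pathwise linear ODE in $s$, so the Jacobian determinant of the flow has a clean Liouville-type exponential representation. Both bounds then follow by a pathwise change of variables in $x$ together with Fubini.

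First I would check that the forward SDE in \eqref{FBSDS_simpp} has, almost surely, $C^1$-diffeomorphic flows. By Lemma \ref{lem sobolev}(i) and $m>d/2$, the hypothesis $b\in C([0,T];H^m)\cap L^2(0,T;H^{m+1})$ gives $b\in C([0,T];C^{0,\delta})$ and $\nabla b\in L^2(0,T;C^{0,\delta})$ for some $\delta\in(0,1)$; in particular $\textrm{div}\,b\in L^2(0,T;L^\infty)\subset L^1(0,T;L^\infty)$, so $\kappa$ and $\lambda$ are strictly positive. Kunita's theorems then guarantee that $X_s(t,\cdot)$ is almost surely a $C^1$-diffeomorphism with differentiable inverse $X_s^{-1}(t,\cdot)$. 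Differentiating the forward SDE in $x$ and using that the noise is $x$-independent kills the martingale increment, producing the pathwise linear equation
$$
\frac{d}{ds}\nabla X_s(t,x)=\nabla b(s,X_s(t,x))\,\nabla X_s(t,x),\qquad \nabla X_t(t,x)=I,
$$
and hence, by Liouville's formula,
$$
\det\bigl(\nabla X_s(t,x)\bigr)=\exp\!\Bigl(\int_t^s \textrm{div}\,b(r,X_r(t,x))\,dr\Bigr),
$$
which is positive and satisfies $\kappa\le\det(\nabla X_s(t,x))\le \kappa^{-1}$ uniformly in $(\omega,x)$.

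Next, for each $\omega$ outside a null set I would make the change of variables $y=X_s(t,x,\omega)$ in the $x$-integral. Since the inverse Jacobian $\det(\nabla X_s(t,X_s^{-1}(t,y,\omega),\omega))^{-1}$ also lies in $[\kappa,\kappa^{-1}]$, one obtains the pathwise sandwich
$$
\kappa\,\|\varphi\|_{L^1(\bR^l)}\le \int_{\bR^d}|\varphi(X_s(t,x,\omega))|\,dx\le \kappa^{-1}\,\|\varphi\|_{L^1(\bR^l)}.
$$
Taking expectations and applying Fubini's theorem yields \eqref{eq_special}. For \eqref{eq_time}, note that $s\mapsto\|\textrm{div}\,b\|_{L^1(t,s;L^\infty)}$ is nondecreasing in $s$, so the constant $\lambda=e^{-\|\textrm{div}\,b\|_{L^1(t,T;L^\infty)}}$ dominates the analogous $\kappa(s)$ uniformly for $s\in[t,T]$; applying the pathwise bound to $\eta(s,\cdot)$, integrating in $s$, and using Fubini to exchange $ds$, $dx$ and $E$ gives the claim.

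The only mildly technical point I expect is justifying the pathwise change of variables when the integrand is merely $L^1$ rather than continuous. For $\varphi\in C_c(\bR^d;\bR^l)$ this is classical; for general $\varphi\in L^1$ one approximates by $C_c^\infty$ functions in $L^1$-norm and passes to the limit using the uniform-in-$\omega$ determinant bound, which makes the composition map $\varphi\mapsto\varphi(X_s(t,\cdot,\omega))$ a bounded operator $L^1(\bR^l)\to L^1(\bR^d)$ with operator norm at most $\kappa^{-1}$. The same density argument handles $\eta$ in \eqref{eq_time}.
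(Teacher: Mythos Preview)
Your argument is correct and considerably more direct than the paper's, but the route is genuinely different.

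The paper does \emph{not} compute the Jacobian of the flow. Instead it runs a purely probabilistic argument: for nonnegative $\varphi\in C_c^\infty$ it sets up the trivial BSDE $Y_r=E[\varphi(X_s(t,x))\mid\sF_r]$, applies the Girsanov transformation that removes the drift $b$, and then uses the resulting representation $-dY_r=(b\cdot\nabla)Y_r\,dr-\sqrt{\nu}Z_r\,dW'_r$ to show that $r\mapsto\int_{\bR^d}Y_r(r,x)\,dx$ satisfies a differential inequality governed by $\|\textrm{div}\,b\|_{L^\infty}$; Gronwall then gives \eqref{eq_special} at $r=t$. General $\varphi$ is reached via $(\varphi^2+\varepsilon h)^{1/2}\tilde\varphi$ and density. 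Your approach instead exploits the additive-noise structure to reduce everything to a pathwise ODE, reads off the determinant via Liouville, and does a pointwise change of variables. This is shorter, makes the geometric content (volume compression/dilation under the flow) explicit, and yields the stronger pathwise sandwich rather than just its expectation. The paper's route, by contrast, stays entirely inside the BSDE machinery developed for the rest of the article and avoids any appeal to flow differentiability.

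One small point to tighten: you cite ``Kunita's theorems'' for the $C^1$-diffeomorphism property, but the paper (and the Kunita results it quotes) only claims a flow of \emph{homeomorphisms} under these hypotheses, because $\nabla b$ is merely $L^2$ in time rather than continuous. The cleanest fix is exactly what your argument already implicitly uses: since the noise is additive, $x\mapsto X_s(t,x,\omega)$ solves a pathwise Carath\'eodory ODE with $L^1$-in-time Lipschitz constant $\|\nabla b(\cdot)\|_{L^\infty}\in L^2(0,T)\subset L^1(0,T)$, and classical ODE flow theory then gives $C^1$-dependence on $x$ together with the variational equation for $\nabla X_s$. State it that way rather than invoking Kunita.
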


  Lemma \ref{lem norm-equivalence} weakens the  assumptions on $b$ of   \cite[Theorem 14.3]{BarlesLesigne_BSDEPDE97_inbook}, where $b(t, \cdot)\equiv b(\cdot)$ is time invariant and is  required to lie in $C^1(\bR^d)\cap L^2(\bR^d)$. Since $b(t,x)$ is not necessarily  uniformly Lipschitz continuous in $x$, the stability of $X$ with respect to the coefficient $b$ has to be proved very carefully  and the proof of \cite[Theorem 14.3]{BarlesLesigne_BSDEPDE97_inbook} has to be generalized accordingly. We give a probabilistic proof in the appendix.

 \begin{rmk}\label{rmk_norm}
   From Lemma \ref{lem norm-equivalence}, we see that Lebesgue's measure transported by the flow $\{X_s(t,x),s\in[t,T]\}$ results in a group of measures $\{\mu_s,s\in[t,T]\}$ satisfying for any Borel measurable set $A\subset \bR^d$,
   $$
   \mu_s(A)=\int_{\bR^d}E\left[1_{A}(X_s(t,x))\right]\,dx.
   $$
   These measures are all equivalent to Lebesgue measure and the exponential rate of compression or dilation are governed by the divergence of $b$. In particular, when $b$ is divergence free, $X_s(t,\cdot)$ preserves the Lebesgue measure for all times. This is similar to that of a system of ordinary differential equations (see \cite{DipernaLions1989}). On the other hand, thanks to Lemma \ref{lem norm-equivalence}, our FBSDE \eqref{FBSDS_simpp} makes sense under assumption \eqref{Assump on b}, i.e., the \emph{forward} SDE is well posed for each $(t,x)\in[0,T]\times\bR^d$ and for each $t\in[0,T]$ the BSDE is well posed for almost every $x\in\bR^d$.
 \end{rmk}

\section{FBSDS associated with Navier-Stokes Equation}\label{sec:conet}

\begin{defn}\label{def Hm solution}
 Let $T_0<T$. $(X,Y,Z,\widetilde{Y}_0)$ is called an  $H^{m}$-solution to  FBSDS~\eqref{1.1} in $[T_0,T]$ if  for almost every $(t,x)\in[T_0,T]\times\bR^d$,
   $$
   (X_{\cdot}(t,x),Y_{\cdot}(t,x),Z_{\cdot}(t,x))\in
   S^2([t,T];\bR^d)\times S^2([t,T];\bR^d)\times L^2_{\sF}(t,T;\bR^{d\times d})
   $$
   and for each $t\in[T_0,T]$  and almost every $x\in\bR^d$, $\{Y_{s}(t,x),\,s\in[t,T]\}\in L^2(t,T;L^{\infty}(\Omega;\bR^d))$, such that all the stochastic differential equations of \eqref{1.1} hold in It\^o's sense and $\widetilde{Y}_0(t,x):=\lim_{\eps \downarrow 0} E\widetilde{Y}_{\eps}(t,x)$ exists for almost every $(t,x)\in(T_0,T]\times\bR^d$ with $\widetilde{Y}_0\in L^2_{\textrm{loc}}(T_0,T;H^{m-1})$.
\end{defn}

Our main result is stated as follows.

\begin{thm}\label{thm local}
  Let $\nu>0, G\in H^m_{\sigma},$ and $f\in L^2(0,T;H^{m-1}_{\sigma})$ with $m>d/2$. Then there is $T_0<T$ which depends on $\|f\|_{L^2(0,T;H^{m-1})}$, $\nu$, $m$, $d$, $T$ and $\|G\|_m$, such that FBSDS~\eqref{1.1} has a
  unique $H^m$-solution $(X,Y,Z,\widetilde{Y}_0)$ on $(T_0,T]$ with the function
  $$\{Y_t(t,x),\,(t,x)\in(T_0,T]\times\bR^d\}\in C_{loc}((T_0,T];H^m_{\sigma})\cap L^2_{loc}(T_0,T;H^{m+1}_{\sigma}).$$
   Moreover, we have the following representations
  \begin{align}
   Z_t(t,\cdot)=\nabla Y_t(t,\cdot),Y_s(t,\cdot)= Y_s(s,X_s(t,\cdot)) \textrm{ and } Z_{s}(t,\cdot):=Z_s(s,X_s(t,\cdot)),
  \label{thm-relat-Y-Z}
  \end{align}
  for $T_0<t\leq s\leq T$, and there is some scalar-valued function ${p}$ such that $\nabla {p}=\widetilde{Y}_0$ and $(Y,Z,{p})$ satisfies
  \begin{align}
    Y_r(r,X_r(t,x))=&
    \,G(X_T(t,x))
    +\int_r^T \left[f(s,X_s(t,x))+\nabla {p}(s,X_s(t,x))    \right]\,ds
    \nonumber\\
    &-\sqrt{\nu}\int_r^T Z_s(s,X_s(t,x))\,dW_s,\,T_0<t\leq r\leq T,\,a.e.x\in\bR^d,\,a.s.,
  \label{Eq thm bNS-FBSDS}
  \end{align}
and  $({u},{p})$ with ${u}(t,x):=Y_t(t,x)$ is the unique strong solution to Navier-Stokes equation~\eqref{backward NS} on $(T_0,T]$.
  \end{thm}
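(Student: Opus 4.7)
The plan is to decouple the FBSDS through the ansatz $u(t,x) := Y_t(t,x)$, reduce the problem to the classical backward Navier-Stokes PDE for $u$, and then reconstruct the full probabilistic solution from $u$ via Feynman-Kac.

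First I would identify the pressure by evaluating the infinite-horizon BSDE at $s = 0$. Conditioning on $\mathcal{F}^B_0$ and iteratively applying the Bismut-Elworthy-Li formula $\nabla P_\tau h(x) = E[\tfrac{B_\tau}{\tau} h(x+B_\tau)]$ over the three equal subintervals $[0,s/3]$, $[s/3,2s/3]$, $[2s/3,s]$---whose independent Brownian increments match the three factors in the driver---I would obtain
$$\widetilde{Y}_0^{\,k}(t,x) \;=\; \tfrac12 \sum_{i,j} \int_0^\infty \partial_k\partial_i\partial_j P_s\bigl(Y_t^i Y_t^j\bigr)(x)\,ds \;=\; \partial_k (-\Delta)^{-1}\textrm{div div}(u\otimes u)(x),$$
where $u = Y_t$ and the factor $\tfrac{27}{2s^3} = \tfrac{1}{2(s/3)^3}$ provides exactly the three BEL weights. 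Convergence is secured by the algebra property $H^m \cdot H^m \hookrightarrow H^m$ for $m > d/2$ (Lemma \ref{lem sobolev}) together with standard decay of heat semigroup derivatives. Thus $\widetilde{Y}_0 = \nabla p$ with $-\Delta p = \textrm{div div}(u\otimes u)$.

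Second, I would formally apply It\^o's formula to $u(s, X_s(t,x))$, with $X$ solving the forward SDE whose drift is $u$. Comparing the resulting drift to the BSDE driver $-(f + \nabla p)(s, X_s)$ forces $u$ to satisfy
$$\partial_t u + \tfrac{\nu}{2}\Delta u + (u\cdot\nabla)u + \nabla p + f = 0, \qquad u(T) = G,$$
and the divergence-free constraint is preserved since $\widetilde{Y}_0$ is constructed as a gradient and $G \in H^m_\sigma$. At this point I invoke the classical local well-posedness of the backward Navier-Stokes equation in $H^m_\sigma$ (Leray projection plus energy estimates): there exists $T_0 < T$, depending on $\|G\|_m$, $\|f\|_{L^2(0,T;H^{m-1})}$, $\nu$, $m$, $d$, $T$, such that a unique strong solution $u \in C_{\textrm{loc}}((T_0,T]; H^m_\sigma) \cap L^2_{\textrm{loc}}(T_0,T; H^{m+1}_\sigma)$ exists.

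Given such $u$, I would reconstruct the FBSDS. Since $u(s,\cdot) \in H^m \hookrightarrow C^{1,\delta}$, Kunita's theorem produces a stochastic flow of homeomorphisms $X_{\cdot}(t,x)$ solving the forward SDE; setting $Y_s(t,x) := u(s, X_s(t,x))$ and $Z_s(t,x) := \nabla u(s, X_s(t,x))$, It\^o's formula together with the PDE verifies the $Y$-BSDE, while Lemma \ref{lem norm-equivalence} controls the composition $f(s, X_s)$ and yields $Y \in L^2(t,T;L^\infty(\Omega;\bR^d))$. The $\widetilde{Y}$-BSDE is then solved by the explicit conditional-expectation formula driven by $B$, independent of $W$, whose $s\downarrow 0$ limit recovers $\nabla p$ by the BEL computation of the first step. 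Uniqueness is obtained by running the argument in reverse: any $H^m$-solution $(X,Y,Z,\widetilde{Y}_0)$ produces $u(t,\cdot) := Y_t(t,\cdot)$ solving the same Navier-Stokes equation, so PDE uniqueness propagates back.

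The main obstacle is the non-Markovian, nonlocal coupling in the $Y$-BSDE: the driver contains $\widetilde{Y}_0(s, X_s)$, which depends on the \emph{entire} profile $Y_s(s,\cdot)$ rather than on the pointwise state $(X_s, Y_s, Z_s)$, placing the system outside the reach of standard coupled FBSDE theory (\cite{Antonelli93,HuPengFK95,MaProtterYong1994,PardouxTangFK99,YongFK1997}). This is precisely what forces the solution to be sought in a spatial regularity class and justifies the detour through the Navier-Stokes PDE; the BEL identification of $\widetilde{Y}_0$ as $\nabla p$, and the careful verification via Lemma \ref{lem norm-equivalence} that composition along the flow is meaningful under only $H^m$ regularity of the drift, are the delicate technical ingredients that tie the probabilistic and analytic pictures together.
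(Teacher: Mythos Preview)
Your approach is correct in outline and shares with the paper the key identification of $\widetilde{Y}_0$ via the Bismut--Elworthy--Li computation (this is exactly Lemma~\ref{lem singular operator}). Where you diverge is in invoking classical Navier--Stokes local well-posedness in $H^m_\sigma$ as a black box and then reconstructing the FBSDS from the PDE solution. The paper instead runs a self-contained two-step contraction mapping entirely within the FBSDS framework: first, for a frozen divergence-free drift $v$, a contraction in an auxiliary variable $\zeta$ determines $Y^{v}_t(t,\cdot)$, using the energy identity~\eqref{eq:lemma6:energy} of Lemma~\ref{lem_verif} together with the orthogonality $\langle \textbf{P}^{\perp}(\cdot),\,Y\rangle_{m-1,m+1}=0$ for divergence-free $Y$ (Remark~\ref{rmk2 after lem SIO}); second, a contraction in $v$ on a ball $U_R^{\eps}\subset C([T-\eps,T];H^m_\sigma)\cap L^2(T-\eps,T;H^{m+1}_\sigma)$ closes the loop. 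This buys the paper a genuinely probabilistic derivation of Navier--Stokes existence and uniqueness rather than a mere equivalence---which is the advertised contribution. Your route is shorter and valid as a proof of the theorem as stated, but it imports the analytic result the paper sets out to recover.

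One minor slip: you write $H^m\hookrightarrow C^{1,\delta}$, but for $m>d/2$ only $H^m\hookrightarrow C^{0,\delta}$ is guaranteed (Lemma~\ref{lem sobolev}(i)); the forward flow is controlled via the combined regularity $C([T_0,T];H^m)\cap L^2(T_0,T;H^{m+1})$ exactly as in Lemma~\ref{lem norm-equivalence} and Lemma~\ref{lem_verif}, not by a pointwise-in-time $C^{1,\delta}$ bound on the drift.
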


As indicated in the introduction,  Navier-Stokes equation~\eqref{backward NS} is equivalent to \eqref{NS-Intg-Opr-form}:
 \begin{equation*}
  \left\{\begin{array}{l}
          \partial_t  {u} + \frac{\nu}{2} \Delta {u} + ( {u}\cdot \nabla) {u}
           +\nabla(-\Delta)^{-1}\textrm{div div }( {u}\otimes {u}) +f=0 , \;\; t \leq T;\\
          {u}(T) = G.
    \end{array}\right.
\end{equation*}
To give a fully probabilistic solution of Navier-Stokes equation~\eqref{backward NS}, we shall first give a probabilistic representation for the nonlocal operator $\nabla(-\Delta)^{-1}\textrm{div div}$. Note that a different probabilistic formulation for $\nabla {p}=\nabla(-\Delta)^{-1}\textrm{div div }( {u}\otimes {u})$ was given by Albeverio and Belopolskaya \cite{albeverio2007generalized} for $d=3$.

\begin{lem}\label{lem singular operator}
  For $\phi,\psi\in H^m$ with $m> \frac{d}{2}+1$, the following BSDE :
   \begin{equation}\label{bsde lem SIO}
  \left\{\begin{array}{l}
  \begin{split}
	  -d\widetilde{Y}_s(x)&=\frac{27}{2s^3}\sum_{i,j=1}^d\phi^i\psi^j(x+B_s)\left(B_s^j-B_{\frac{2s}{3}}^j\right)
\left(B^i_{\frac{2s}{3}}-B^i_{\frac{s}{3}}\right)
B_{\frac{s}{3}} \,ds\\
              &\ \ -\widetilde{Z}_s(x)dB_s,\ s\in(0,\infty);                         \\
	      \widetilde{Y}_{\infty}(x)&=0\ \
    \end{split}
  \end{array}\right.
\end{equation}
is well-posed on $(0,\infty)$ and $\widetilde{Y}_0(x):=\lim_{\eps\downarrow 0}E\widetilde{Y}_\eps(x)$ exists for each $x\in\bR^d$. Moreover,
$\widetilde{Y}_0\in C(\bR^d)$ and
\begin{equation*}
	\widetilde{Y}_0(x)=\nabla (-\Delta)^{-1}\textrm{div }\textrm{div}(\phi\otimes\psi)(x)
  =\sum_{i,j=1}^d\nabla (-\Delta)^{-1}\partial_{x^i}\partial_{x^j}(\phi^i(x)\psi^j(x)),\ \forall x\in\bR^d.
\end{equation*}
\end{lem}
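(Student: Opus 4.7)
The plan is to exploit the lack of unknowns in the driver of \eqref{bsde lem SIO} to reduce the BSDE to an explicit conditional expectation, then identify the $\eps\downarrow 0$ limit of $E\widetilde{Y}_\eps$ by Gaussian integration by parts against the three independent Brownian increments and a Fourier-multiplier computation.

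First I would solve the equation on each finite window $[\eps,N]$ by
\[
\widetilde Y_s^{(N)}(x)=E\!\left[\int_s^N F_r(x)\,dr\,\Big|\,\sF^B_s\right],
\]
with $\widetilde Z^{(N)}$ from It\^o's martingale representation, where $F_s(x)$ denotes the $ds$-driver in \eqref{bsde lem SIO}. Since $m>d/2$ gives $\phi,\psi\in H^m\hookrightarrow L^\infty$ (Lemma \ref{lem sobolev}(i)), and the three increments $U:=B_{s/3}$, $V:=B_{2s/3}-B_{s/3}$, $W:=B_s-B_{2s/3}$ are i.i.d.\ $N(0,(s/3)I)$, H\"older yields $E|F_s(x)|\le C\|\phi\|_{L^\infty}\|\psi\|_{L^\infty}s^{-3/2}$, absolutely integrable on every $[\eps,\infty)$; letting $N\to\infty$ produces the desired pair $(\widetilde Y,\widetilde Z)$ on $(0,\infty)$.

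Next I would compute $EF_s$ via three successive Gaussian integrations by parts in $W$, $V$, $U$. The core identity
\[
\int_{\bR^d} w^j g(y+w)\,p_{s/3}(w)\,dw=(s/3)\,\partial_{y^j}(p_{s/3}*g)(y),
\]
valid for any bounded measurable $g$ (write $w^jp_{s/3}(w)=-(s/3)\partial_{w^j}p_{s/3}(w)$ and transfer the derivative onto $y$ through the convolution), combined with $p_{s/3}*p_{s/3}*p_{s/3}=p_s$, yields
\[
E\bigl[\phi^i\psi^j(x+B_s)\,W^jV^iU^k\bigr]=(s/3)^3\,\partial_{x^k}\partial_{x^i}\partial_{x^j}\bigl(p_s*(\phi^i\psi^j)\bigr)(x).
\]
Summing over $i,j$ and multiplying by $27/(2s^3)$ cancels the $s^3/27$ and leaves
\[
EF_s^k(x)=\tfrac12\,\partial_{x^k}(p_s*q)(x),\qquad q:=\mathrm{div\,div}(\phi\otimes\psi).
\]

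The hard part is controlling $\int_0^\infty|\nabla(p_s*q)(x)|\,ds$, since no pointwise-in-$s$ bound of the integrand is integrable near $s=0$. The trick is to integrate in $s$ first under Fubini: by Fourier inversion and $\int_0^\infty e^{-s|\xi|^2/2}\,ds=2/|\xi|^2$,
\[
\int_0^\infty|\nabla(p_s*q)(x)|\,ds\le C\!\int_{\bR^d}\!\frac{|\widehat q(\xi)|}{|\xi|}\,d\xi;
\]
using $|\widehat q(\xi)|\le|\xi|^2\sum_{ij}|\widehat{\phi^i\psi^j}(\xi)|$, the Sobolev algebra property $\phi^i\psi^j\in H^m$ (Lemma \ref{lem sobolev}(iii)), and Cauchy--Schwarz against the Sobolev weight, this is dominated by $C\|\phi\|_m\|\psi\|_m$ \emph{precisely when} $\int|\xi|^2(1+|\xi|^2)^{-m}\,d\xi<\infty$, i.e.\ when $m>d/2+1$---the hypothesis. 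Dominated convergence then gives $\widetilde Y_0(x)=\tfrac12\int_0^\infty\nabla(p_s*q)(x)\,ds$, and its Fourier symbol $i\xi\,|\xi|^{-2}\widehat q(\xi)$ identifies it with $\nabla(-\Delta)^{-1}q(x)=\sum_{ij}\nabla(-\Delta)^{-1}\partial_{x^i}\partial_{x^j}(\phi^i\psi^j)(x)$. Since the multiplier $\xi_k\xi_i\xi_j/|\xi|^2$ has growth at most $|\xi|$, we get $\widetilde Y_0\in H^{m-1}\hookrightarrow C(\bR^d)$ by Lemma \ref{lem sobolev}(i) (using $m-1>d/2$).
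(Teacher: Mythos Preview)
Your argument is correct and follows the same overall skeleton as the paper's proof: solve the BSDE on each $[\eps,\infty)$ by conditional expectation using $\|\phi\otimes\psi\|_{L^\infty}\lesssim s^{-3/2}$ integrability, then collapse $EF_s$ by Gaussian integration by parts against the three independent increments, and finally identify the $s$-integral with $\nabla(-\Delta)^{-1}\mathrm{div\,div}(\phi\otimes\psi)$. The difference lies in how you establish integrability of $s\mapsto EF_s(x)$ near $0$ and at $\infty$. The paper stops the integration by parts one step earlier, writing $EF_s^k(x)=\frac{1}{2s}E[\partial_i\partial_j(\phi^i\psi^j)(x+B_s)B_s^k]$, and then uses physical-space $L^q$ bounds: on $[1,\infty)$ the $L^1$ norm of $\partial_i\partial_j(\phi^i\psi^j)$ gives $s^{-(d+1)/2}$ decay, while on $(0,1]$ the embedding $H^m\hookrightarrow H^{2,\gamma}$ for some $\gamma>d$ gives $s^{-1/2-d/(2\gamma)}$ integrability; the limit is then recognised as the Riesz-potential convolution kernel for $\nabla(-\Delta)^{-1}$. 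Your Fourier--Tonelli argument is a clean alternative that makes the threshold $m>d/2+1$ appear transparently as the finiteness of $\int|\xi|^2(1+|\xi|^2)^{-m}\,d\xi$, and it handles both endpoints in one stroke; the paper's approach, by contrast, also yields the explicit singular-integral representation and (via $\phi^i\psi^j\in H^{m,1}$) the $L^1$ information used later. Your continuity argument via $\widetilde Y_0\in H^{m-1}\hookrightarrow C(\bR^d)$ is likewise different from, but equivalent to, the paper's use of continuity of translation in $L^p$ applied to the Riesz integral.
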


\begin{proof}
  For $m>d/2+1$, $H^m$ is a Banach algebra embedded into $H^{2,\gamma}$ for some $\gamma>d$ and also into $C^{1,\delta}(\bR^d)$ for some $\delta\in(0,1)$. Thus, $\phi^i\psi\in H^m\cap H^{m,1}$ and $\partial_{x^i}\partial_{x^j}(\phi^i(x)\psi^j(x))\in H^{0,\gamma}(\bR)\cap H^{0,1}(\bR)$, $i,j=1,\cdots,d$.

   For each  $\eps>0$,
	\begin{equation*}
		\begin{split}
			&E\left[\int_{\eps}^{\infty} \frac{27}{2s^3}\left|\phi^i\psi^j(x+B_s)\left(B^i_{\frac{2s}{3}}-B^i_{\frac{s}{3}}\right)
              	\left(B_s^j-B_{\frac{2s}{3}}^j\right)B_{\frac{s}{3}}\right| \,ds\right]\\
	      \leq &
	      	\ \|\phi^i\psi^j\|_{L^{\infty}}
			\int_{\eps}^{\infty} \frac{27}{2s^3}E\left[\left|\left(B^i_{\frac{2s}{3}}-B^i_{\frac{s}{3}}\right)
              	\left(B_s^j-B_{\frac{2s}{3}}^j\right)B_{\frac{s}{3}}\right|\right] \,ds\\
	      \leq &
	      \ C\,\|\phi\otimes\psi\|_2 \int_{\eps}^{\infty}\frac{1}{s^{3/2}} \ ds<\infty,\quad i,j=1,\cdots,d.
      \end{split}
	\end{equation*}
	Thus, BSDE \eqref{bsde lem SIO} is well-posed on $[\eps,\infty]$.

  Note that
  \begin{align*}
	  \widetilde{Y}_{\eps}(x)=E\left[\int_{\eps}^{\infty}  \frac{27}{2s^3}\sum_{i,j=1}^d\phi^i\psi^j(x+B_s)\left(B^i_{\frac{2s}{3}}-B^i_{\frac{s}{3}}\right)
              \left(B_s^j-B_{\frac{2s}{3}}^j\right)B_{\frac{s}{3}}   \,ds\Big|\sF^B_{\eps}\right].
  \end{align*}
  Applying the integration-by-parts formula, we obtain
  \begin{align*}
      &E\left[ \frac{27}{2s^3}\phi^i\psi^j(x+B_s)\left(B^i_{\frac{2s}{3}}-B^i_{\frac{s}{3}}\right)
              \left(B_s^j-B_{\frac{2s}{3}}^j\right)B_{\frac{s}{3}}   \right]\\
      =&
      \frac{27}{2s^3}\!\int_{\bR^d}\!\int_{\bR^d}\!\int_{\bR^d}\!\!\!
      \phi^i\psi^j(x+y+z+r)y^iz^jr^k(2\pi s/3)^{-3d/2}e^{-\frac{3(|y|^2+|z|^2+|r|^2)}{2s}}\,dydzdr  \\
      =&
      -\frac{9}{2s^2}\!\int_{\bR^d}\!\int_{\bR^d}\!\int_{\bR^d}\!\!\!
      \phi^i\psi^j(y)z^jr^k(2\pi s/3)^{-3d/2}\partial_{y^i}e^{-\frac{3(|y-x-z-r|^2+|z|^2+|r|^2)}{2s}}\,dydzdr\\
      =&\cdots\\
      =&
      \frac{1}{2s}\!\int_{\bR^d}\!\!\!
      \partial_{x^i}\partial_{x^j}(\phi^i\psi^j)(y+x)y^k(2\pi s)^{-d/2}e^{-\frac{|y|^2}{2s}}\,dy\\
      =&
      \frac{1}{2s}E\left[\partial_{x^i}\partial_{x^j}(\phi^i\psi^j)(x+B_s)B^k_s\right],\quad s>0,\,i,j,k=1,\cdots,d.
  \end{align*}
  As for $i,j=1,\cdots,d$
  \begin{equation}
    \begin{split}
      &s^{-1}\int_{\bR^d}
      |\partial_{x^i}\partial_{x^j}(\phi^i\psi^j)(y+x)y(2\pi s)^{-d/2}e^{-\frac{|y|^2}{2s}}|\,dy\\
      \leq \,&
      {C}\,s^{-\frac{d}{2}-1}\|\partial_{x^i}\partial_{x^j}(\phi^i\psi^j)\|_{0,q}\sqrt{s}s^{\frac{d}{2}(1-\frac{1}{q})}
      \\
      \leq \,&
      C\,\|\partial_{x^i}\partial_{x^j}(\phi^i\psi^j)\|_{0,q}s^{-\frac{1}{2}-\frac{d}{2q}},\ q\in[1,\gamma],
    \end{split}
  \end{equation}
  and
  \begin{align}
      &\int_0^{\infty}s^{-1}E\left[\left|\partial_{x^i}\partial_{x^j}(\phi^i\psi^j)(x+B_s)B^k_s
      \right|\right]\,ds
      \nonumber\\
      =\,&
      \int_0^{\infty}s^{-1}\int_{\bR^d}
      |\partial_{x^i}\partial_{x^j}(\phi^i\psi^j)(y+x)y^k(2\pi s)^{-d/2}e^{-\frac{|y|^2}{2s}}|\,dy\,ds\nonumber\\
      \leq\,&
      C\int_0^1\|\partial_{x^i}\partial_{x^j}(\phi^i\psi^j)\|_{0,\gamma}\,s^{-\frac{1}{2}-\frac{d}{2\gamma}}\,ds
      +C\int_1^{\infty}\|\partial_{x^i}\partial_{x^j}(\phi^i\psi^j)\|_{0,1}\,s^{-\frac{d+1}{2}}\,ds
      \nonumber\\
      \leq\,&
      C\left(\|\partial_{x^i}\partial_{x^j}(\phi^i\psi^j)\|_{m-2}
      +\|\partial_{x^i}\partial_{x^j}(\phi^i\psi^j)\|_{m-2,1}\right),\label{a}
  \end{align}
  we have
  \begin{align}
	    \lim_{\eps\downarrow 0} E\left[ \widetilde{Y}_{\eps}^k(x)\right]
	    =\,&\lim_{\eps\downarrow 0} \int_{\eps}^{\infty}\frac{1}{2s}\int_{\bR^d}
      \partial_{x^i}\partial_{x^j}(\phi^i\psi^j)(y+x)y^k(2\pi s)^{-d/2}e^{-\frac{|y|^2}{2s}}\,dyds
      \nonumber\\
      =\,&
      \int_{0}^{\infty}\frac{1}{2s}\int_{\bR^d}
      \partial_{x^i}\partial_{x^j}(\phi^i\psi^j)(y+x)y^k(2\pi s)^{-d/2}e^{-\frac{|y|^2}{2s}}\,dyds
      \nonumber\\
      &\textrm{(by \eqref{a} and Fubini Theorem)}
      \nonumber\\
      =\,&
      \int_{\bR^d}\partial_{x^i}\partial_{x^j}(\phi^i\psi^j)(x+y)y^k
      \int_0^{\infty}\frac{1}{2s}(2\pi s)^{-d/2}e^{-\frac{|y|^2}{2s}}\,dsdy
      \nonumber\\
      =\,&
  \frac{\Gamma(d/2)}{2\pi^{d/2}}
  \int_{\bR^d}\partial_{x^i}\partial_{x^j}(\phi^i\psi^j)(x+y)\frac{y^k}{|y|^d}
      dy
      \nonumber\\
      =\,&
      \partial_{x^k}(-\Delta)^{-1}\partial_{x^i}\partial_{x^j}
      (\phi^i\psi^j)(x),\nonumber
  \end{align}
  which coincides with the convolution representation of the operator $\nabla(-\Delta)^{-1}$ described in \cite[Page 31]{BertozziMajda2002}.
  Hence,
  BSDE \eqref{bsde lem SIO} are well-posed on $(0,\infty)$ and by \eqref{a}, one has $\widetilde{Y}_0\in C(\bR^d)$ due to the continuity of the translation operator on $L^p(\bR^d)$, $p\in[1,\infty)$. Moreover, we have
  $$\widetilde{Y}_0(x)=\nabla (-\Delta)^{-1}\partial_{x^i}\partial_{x^j}(\phi^i(x)\psi^j(x))=\lim_{\eps\downarrow 0} E\left[ \widetilde{Y}_{\eps}(x) \right], \ \forall x\in\bR^d.$$
The proof is complete.
\end{proof}

\begin{rmk}\label{rmk2 after lem SIO}
 Applying the integration-by-parts formulas in the above proof, we see that BSDE \eqref{bsde lem SIO} gives a probabilistic representation for the operator $\nabla(-\Delta)^{-1}\textrm{div div }$ in the spirit of the Bismut-Elworthy-Li formula (see \cite{ElworthyLi1994}). Its generator does not contain any of its own unknowns and is trivial in its form, while the existence of its solution goes beyond existing results on infinite horizon BSDEs (see \cite{peng2000infinite} and references therein) as the generator may fail to be integrable on the whole time horizon $[0,\infty)$. In fact, the operator $\textbf{P}:=\textbf{I}-\nabla \Delta^{-1} \textrm{div}$
 is the Leray-Hodge projection onto the space of divergence free vector fields, where $\textbf{I}$ is the identity operator.  Define $\textbf{P}^{\perp}:=\textbf{I}-\textbf{P}$.  We have in Lemma \ref{lem singular operator} that $\widetilde{Y}_0=-\textbf{P}^{\perp}(\textrm{div}(\phi\otimes\psi)).$
   Indeed, the singular integral operator $\textbf{P}$ (see \cite{BertozziMajda2002,Stein1970}) is a bounded transformation in  $H^{n,q}$ for $q\in (1,\infty)$ and $n\in\mathbb{Z}$.
    Note that for any $ g\in H^{m}_{\sigma} $, integration-by-parts formula yields
 $$
 \langle \textbf{P}^{\perp}(\textrm{div}(\phi\otimes\psi)), \, g\rangle_{m-2,m}
 =0.
 $$
\end{rmk}

\begin{rmk}\label{rmk after lem SIO}
  There is a scalar function $\eta$  such that $\widetilde{Y}_0=\nabla \eta$. It is sufficient to take
  $$\eta(x)=:(-\Delta)^{-1}\partial_{x^i}\partial_{x^j}(\phi^i\psi^j)(x)\in H^{m,2}(\bR)$$  by the theory of second order Elliptic PDEs (see \cite{GilbargTrud1983}).
\end{rmk}

\begin{rmk}\label{rmk-lem-SIO-eps}
  For any $\eps>0$, we have by Minkowski inequality
  \begin{align}
  \left\|E\left[\widetilde{Y}_{\eps}\right]\right\|_{0}
  =&
  \left\|  \sum_{i,j=1}^d
   E\bigg[
      \int_{\eps}^{\infty} \frac{27}{2s^3}
    \phi^i \psi^j(\cdot+B_s)\left(B^i_{\frac{2s}{3}}-B^i_{\frac{s}{3}}\right)
      	\left(B_s^j-B_{\frac{2s}{3}}^j\right)B_{\frac{s}{3}} \,ds
\bigg]\right\|_0
      \nonumber\\
      \leq &
      \sum_{i,j=1}^d \|\phi^i\psi^j\|_0
      E
      \int_{\eps}^{\infty} \frac{27}{2s^3}
\left|\left(B^i_{\frac{2s}{3}}-B^i_{\frac{s}{3}}\right)
      	\left(B_s^j-B_{\frac{2s}{3}}^j\right)B_{\frac{s}{3}} \right|\,ds
      \nonumber\\
      \leq&
      \frac{27}{\sqrt{\eps}}\sum_{i,j=1}^d \|\phi^i\psi^j\|_0.
  \end{align}
  Putting
  $$
  \textbf{P}^{\eps}(\phi\otimes\psi)=E\left[\widetilde{Y}_{\eps}\right], \quad \forall \phi,\psi\in H^m,\,m> d/2+1,
  $$
  we have
  $$
  \|\textbf{P}^{\eps}(\phi\otimes\psi)\|_k\leq \frac{C}{\sqrt{\eps}}
  \|\phi\otimes\psi\|_k,\quad \forall \,0\leq k\leq m,
  $$
  with the constant $C$ being independent of $\eps$. Then, the operator $\textbf{P}^{\eps}$ can be seen as a regular approximation of $-\textbf{P}^{\perp}\textrm{div}$. This approximation will be used to study  approximate solution of Navier-Stokes equation in Section \ref{sec:numr-apprx}.
\end{rmk}

\section{FBSDEs for PDEs of Burgers' type}

The PDE of Burgers' type:
  \begin{equation}\label{backward-burgers}
  \left\{\begin{array}{l}
          \partial_t u + \frac{\nu}{2} \Delta u + \big((b+\alpha u)\cdot \nabla\big)u +cu+\phi=0 , \;\; t \leq T;\\
          u(T) = \psi,
    \end{array}\right.
\end{equation}
is easily  connected to the following coupled FBSDE:
\begin{equation}\label{FBSDS_simp}
  \left\{\begin{array}{l}
  \begin{split}
  dX_s(t,x)&
            =\left[b(s,X_s(t,x))+\alpha Y_s(t,x)\right]\,ds+\sqrt{\nu}\,dW_s,\quad s\in[t,T];\\
  X_t(t,x)&=x;\\
  -dY_s(t,x)&
            =\left[\phi(s,X_s(t,x))+c(s,X_s(t,x))Y_s(t,x)\right]\,ds-\sqrt{\nu}Z_s(t,x)\,dW_s,\quad s\in[t,T];\\
    Y_T(t,x)&=\psi(X_T(t,x)),\\
    \end{split}
  \end{array}\right.
\end{equation}
where $\nu>0$ and $\alpha$ are constants. The classical Burgers' equation is the case where $\alpha=1$, $b\equiv 0$ and $c\equiv 0$.

 \begin{defn} Let $T_0<T$.
   We say $(X,Y,Z)$ is a solution to FBSDE~\eqref{FBSDS_simp} on $[T_0,T]$ if  for each $t\in [T_0,T]$  and almost every $x\in\bR^d$,
   $$
   (X_{\cdot}(t,x),Y_{\cdot}(t,x),Z_{\cdot}(t,x))\in
   S^2([t,T];\bR^d)\times S^2([t,T];\bR^d)\times L^2_{\sF}(t,T;\bR^{d\times d})
   $$
   and
 such that the \emph{forward} SDE and BSDE on $[t,T]$  hold in It\^o's sense. If for each $t\in[T_0,T]$ and almost every $x\in\bR^d$, we further have
 \begin{equation}\label{eq-def-loc-bd-sol}
     Y_{\cdot}(t,x)\in L^2(t,T;L^{\infty}(\Omega;\bR^d)),
   \end{equation}
   then $(X,Y,Z)$ is called a strengthened  solution.
\end{defn}

\begin{lem}\label{lem_verif}
  Let $b\in C([T_0,T];H^m)\cap L^2(T_0,T;H^{m+1}),\phi\in L^2(T_0,T;H^{m-1})$, and $\psi\in H^m$, with $m>d/2$ and $T_0\in[0,T)$. Then,  the following FBSDE:
   \begin{equation}\label{FBSDE_verif1}
  \left\{\begin{array}{l}
  \begin{split}
  dX_s(t,x)&
            =b(s,X_s(t,x))\,ds+\sqrt{\nu}\,dW_s,\quad T_0\leq t\leq s\leq T;\\
  X_t(t,x)&=x;\\
  -dY_s(t,x)&
            =\phi(s,X_s(t,x))\,ds-\sqrt{\nu}Z_s(t,x)\,dW_s,\quad s\in[t,T];\\
    Y_T(t,x)&=\psi(X_T(t,x))
    \end{split}
  \end{array}\right.
\end{equation}
has a unique  solution $(X,Y,Z)$ such that the function
$$\{Y_t(t,x), (t,x)\in [T_0,T]\times \bR^d\}\quad \in \quad C([T_0,T];H^m)\cap L^{2}(T_0,T;H^{m+1}).$$
 For each $t\in[T_0,T]$, almost all $x\in\bR^d$ and all $r\in [t,T]$, we have
  \begin{align}
  Y_r(r,X_r(t,x))
  &
  =\psi(X_T(t,x))+\!\int_r^T\!\!\!\phi(s,X_s(t,x))\,ds
  -\sqrt{\nu}\int_r^T\!\!\!Z_s(s,X_s(t,x))\,dW_s,\,a.s.,
  \label{PDE_Mild}\\
    Z_t(t,x)&=\nabla Y_t(t,x),\, (Y_r(t,x),Z_r(t,x))=(Y_r,Z_r)(r,X_r(t,x)),\,a.s.. \label{eq expresion}
  \end{align}
Moreover, for any $t\in[T_0,T]$
\begin{equation}\label{eq:lemma6:energy}
  \begin{split}
    &\|{Y}_t(t)\|_m^2+\nu\int_{t}^{T}\|{Z}_s(s)\|_m^2\,ds
    %\\
    =
    %&
    \ \|{Y}_T(T)\|_{m}^2
    +2\int_{t}^{T}
    \langle  \phi(s)+{Z}_sb(s),\,{Y}_s(s)
      \rangle_{m-1,m+1}\,ds.
  \end{split}
\end{equation}
\end{lem}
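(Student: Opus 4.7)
The plan is to exploit the fact that, since the forward drift $b$ does not depend on $(Y,Z)$, system~\eqref{FBSDE_verif1} is effectively decoupled: I would first solve the forward SDE, then recognise the BSDE as the probabilistic avatar of the linear backward parabolic PDE
\begin{equation*}
\partial_s u+\tfrac{\nu}{2}\Delta u+(b\cdot\nabla)u+\phi=0,\qquad u(T)=\psi,
\end{equation*}
and finally identify $(Y,Z)$ by composition with the forward flow. Since $m>d/2$ gives $b\in C([T_0,T];C^{0,\delta})$ (and $\nabla b\in L^2(T_0,T;C^{0,\delta})$) via Lemma~\ref{lem sobolev}, Kunita's results produce a unique strong solution $X_s(t,x)$ generating a stochastic flow of homeomorphisms, and Lemma~\ref{lem norm-equivalence} gives meaning to every composition $\phi(s,X_s),\psi(X_T),\ldots$ that appears below.

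For the PDE itself I would run a Galerkin / mollification argument driven by the $H^m$-energy estimate: testing with $(1-\Delta)^m u$ and using the algebra bound from Remark~\ref{rmk_m=2}, $\|(b\cdot\nabla)u\|_{m-1}\leq C\|b\|_m\|\nabla u\|_{m-1}$, together with Gronwall's inequality, yields a unique solution $u\in C([T_0,T];H^m)\cap L^2(T_0,T;H^{m+1})$. Carried out at the PDE level and integrated from $t$ to $T$, the same computation supplies the energy identity~\eqref{eq:lemma6:energy}: once $Y_s(s)=u(s)$ and $Z_s(s)=\nabla u(s)$ have been identified, the dissipative term $\nu\int\|Z_s(s)\|_m^2\,ds$ comes from $\langle u,-\Delta u\rangle_m$, and the remaining contributions combine into $2\int\langle\phi+Z_sb,Y_s(s)\rangle_{m-1,m+1}\,ds$ via the equivalence of the two Bessel-potential dualities.

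Next I would define $Y_s(t,x):=u(s,X_s(t,x))$ and $Z_s(t,x):=\nabla u(s,X_s(t,x))$ and verify the BSDE by an It\^o--Wentzell identity for the composition of the Sobolev field $u$ with the stochastic flow $X$. Concretely, I would approximate $u$ by smooth $u^n$ (solutions of the mollified PDE with data $b^n,\phi^n,\psi^n$); for each $u^n$ the classical It\^o formula gives
\begin{equation*}
du^n(s,X_s)=\bigl(\partial_s u^n+(b\cdot\nabla)u^n+\tfrac{\nu}{2}\Delta u^n\bigr)(s,X_s)\,ds+\sqrt{\nu}\,\nabla u^n(s,X_s)\,dW_s=-\phi^n(s,X_s)\,ds+\sqrt{\nu}\,\nabla u^n(s,X_s)\,dW_s,
\end{equation*}
and Lemma~\ref{lem norm-equivalence} converts Sobolev convergence of $u^n,\nabla u^n,\phi^n$ into $L^2$-convergence along trajectories, so passage to the limit recovers the BSDE of~\eqref{FBSDE_verif1}. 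The flow identity $X_s(r,X_r(t,x))=X_s(t,x)$ then yields both~\eqref{PDE_Mild} and the representation~\eqref{eq expresion}, with $Z_t(t,x)=\nabla Y_t(t,x)$ following from $Y_t(t,x)=u(t,x)$ and $Z_t(t,x)=\nabla u(t,x)$.

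Uniqueness becomes painless once the decoupling is exploited: any other solution $(\widetilde X,\widetilde Y,\widetilde Z)$ satisfies $\widetilde X=X$ by strong uniqueness of the forward SDE under the Sobolev regularity of $b$, after which the BSDE is linear with $L^2$-data (via Lemma~\ref{lem norm-equivalence}), and standard BSDE theory forces $(\widetilde Y,\widetilde Z)=(Y,Z)$. The single genuine obstacle is the It\^o--Wentzell step above, namely composing a merely Sobolev field with a flow driven by a non-Lipschitz vector field; this is exactly what the mollification-plus-Lemma~\ref{lem norm-equivalence} device is designed to handle.
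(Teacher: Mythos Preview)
Your proposal is correct and shares the paper's core strategy: mollify $(b,\phi,\psi)$, use the classical BSDE--PDE correspondence for smooth data, obtain $H^m$-energy estimates, and pass to the limit with Lemma~\ref{lem norm-equivalence} controlling compositions along trajectories. The paper organises the argument BSDE-first rather than PDE-first: it solves the BSDE for each $(t,x)$, applies a Girsanov change of measure so that the forward process becomes pure Brownian motion, and then derives the energy identity~\eqref{eq:lemma6:energy} by applying It\^o's formula to $|I_m Y^n_s(s,X_s)|^2$ and integrating in $x$ (the Girsanov step makes this $x$-integration exactly the $L^2$-norm). Your route---solve the linear parabolic PDE directly and test with $(1-\Delta)^m u$---reaches the same identity without Girsanov and is arguably more transparent. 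One small slip to fix: in your displayed It\^o computation, $u^n$ solves the mollified PDE with drift $b^n$, while $X$ is driven by $b$, so the drift should read $-\phi^n+((b-b^n)\cdot\nabla)u^n$ rather than $-\phi^n$; the extra term vanishes in the limit and does not affect the argument.
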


A proof is sketched in the appendix for the reader's convenience,  though it might exist elsewhere.

\begin{prop}\label{prop FBSDEandPDE}
  Assume that $\psi\in H^m$, $b\in C([0,T];H^{m})\cap L^2(0,T;H^{m+1})$, $c\in L^2(0,T;H^{m}(\bR^{d\times d}))$ and $\phi\in L^2(0,T;H^{m-1})$ with $m>d/2$. Then there is $T_0<T$ which depends on $\|\psi\|_{m}$, $\|b\|_{C([0,T];H^{m})}$, $\|c\|_{L^2(0,T;H^{m})}$, $\|\phi\|_{L^2(0,T;H^{m-1})}$, $\alpha$, $\nu$, $m$, $d$ and $T$, such that FBSDE~\eqref{FBSDS_simp} has a unique strengthened solution on $(T_0,T]$ and if $\alpha=0$, the existence time interval is $[0,T]$.  Moreover,

  (i)
  $$
  \left\{Y_t(t,x),\,(t,x)\in(T_0,T]\times\bR^d\right\}\,\,\in\,\,C_{loc}((T_0,T];H^m)\cap L^2_{loc}(T_0,T;H^{m+1});
  $$

  (ii)
  for almost every $x\in\bR^d$ and all $s\in [t,T]$,
  \begin{align}
    Y_s(s,X_s(t,x))
  =&\psi(X_T(t,x))+\!\int_s^T\!\!\!\left(\phi+cY_r\right)(r,X_r(t,x))\,dr
  -\sqrt{\nu}\int_s^T\!\!\!Z_r(r,X_r(t,x))\,dW_r,\,a.s.,
  \label{PDE_Mild_prop}\\
   Z_t(t,x)=&\nabla Y_t(t,x),\,(Y_s(t,x),\,Z_s(t,x))=(Y_s,\, Z_s)(s,X_s(t,x)),\,a.s.;\label{eq prop express}
  \end{align}

  (iii) for any $t\in(T_0, T]$, we have the following energy equality:
  \begin{equation}\label{Eq prop Energy}
    \begin{split}
      \|Y_t(t)\|_m^2
      =&
      \|\psi\|_m^2+2\int_t^T\langle Z_s(b+\alpha Y_s)(s),\,Y_s(s) \rangle_{m-1,m+1}\,ds\\
      &\ +2\int_t^T \langle c\,Y_s(s)+\phi(s),\, Y_s(s)\rangle_{m-1,m+1}\,ds
      -\nu\int_{t}^T\|Z_s(s)\|_m^2\,ds.
    \end{split}
  \end{equation}
In particular, if $m>d/2+1$, the strengthened solution on the time interval $(T_0,T]$ is the unique solution as well.

In addition, $Y_t(t,x)$ is the unique strong solution to PDE~\eqref{backward-burgers} on $(T_0,T]$.
\end{prop}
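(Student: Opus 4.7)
The plan is a Picard iteration in the space $Z_{T_0}:=C([T_0,T];H^m)\cap L^2(T_0,T;H^{m+1})$, taking $T-T_0$ small enough that the map becomes a contraction. Given $\bar u$ in a closed ball $B_R\subset Z_{T_0}$, I freeze coefficients by setting $b_{\bar u}(s,x):=b(s,x)+\alpha\bar u(s,x)$ and $\phi_{\bar u}(s,x):=\phi(s,x)+c(s,x)\bar u(s,x)$. The $H^m$--Banach algebra property and the product estimate of Remark~\ref{rmk_m=2} give $b_{\bar u}\in C([T_0,T];H^m)\cap L^2(T_0,T;H^{m+1})$ and $\phi_{\bar u}\in L^2(T_0,T;H^{m-1})$, so Lemma~\ref{lem_verif} produces a unique solution $(X^{\bar u},Y^{\bar u},Z^{\bar u})$ of the decoupled FBSDE with drift $b_{\bar u}$ and source $\phi_{\bar u}$, together with a function $u(t,x):=Y_t^{\bar u}(t,x)\in Z_{T_0}$. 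This defines the iteration map $\Phi:\bar u\mapsto u$.

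Using the energy identity \eqref{eq:lemma6:energy} of Lemma~\ref{lem_verif} for $(X^{\bar u},Y^{\bar u},Z^{\bar u})$ together with the estimate $\|hg\|_{m-1}\le C\|h\|_m\|g\|_{m-1}$ of Remark~\ref{rmk_m=2}, I expect to derive
$$\|u(t)\|_m^2+\nu\int_t^T\|u(s)\|_{m+1}^2\,ds\le \|\psi\|_m^2+C\int_t^T\bigl(1+\|\bar u(s)\|_m^2\bigr)\|u(s)\|_m\|u(s)\|_{m+1}\,ds+\text{(lower order)},$$
absorbing the $\|u\|_{m+1}$ contribution into the $\nu$-dissipation via Young's inequality and closing by Gronwall. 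Choosing $R$ slightly above $\|\psi\|_m$ and $T-T_0$ sufficiently small in terms of $\|\psi\|_m,\|b\|_{C([0,T];H^m)},\|c\|_{L^2(0,T;H^m)},\|\phi\|_{L^2(0,T;H^{m-1})},\alpha,\nu,m,d,T$ keeps $\Phi$ on $B_R$. For $\bar u_1,\bar u_2\in B_R$, the difference $\delta u:=\Phi(\bar u_1)-\Phi(\bar u_2)$ satisfies a linear FBSDE whose sources are multiples of $\bar u_1-\bar u_2$; Lemma~\ref{lem norm-equivalence} converts expectations of compositions along $X^{\bar u_j}$ into $L^2(\mathbb{R}^d)$ norms, and the same energy-identity machinery gives $\|\delta u\|_{Z_{T_0}}\le C(T-T_0)^{1/2}\|\bar u_1-\bar u_2\|_{Z_{T_0}}$. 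A final shrinking of $T-T_0$ makes $\Phi$ a contraction, yielding a unique fixed point; items (i), (ii) and the energy equality (iii) are then read off directly from the corresponding representations \eqref{PDE_Mild}, \eqref{eq expresion} and identity \eqref{eq:lemma6:energy} of Lemma~\ref{lem_verif} applied with $\bar u=u$.

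When $\alpha=0$ the forward equation is decoupled from $Y$, so $X$ exists globally on $[0,T]$ and only the linear term $c(s,X_s)Y_s$ in the backward drift must be handled. A standard weighted-norm Picard iteration in $C([0,T];H^m)$ then gives existence on the full interval $[0,T]$ without any smallness restriction. For the last uniqueness claim, if $m>d/2+1$ then Lemma~\ref{lem sobolev}(i) embeds $H^m\hookrightarrow C^{0,\delta}$ with $\nabla Y_s(s,\cdot)\in H^{m-1}\hookrightarrow L^\infty$, so for any (not necessarily strengthened) $H^m$-solution the composition $Y_s(s,X_s(t,\cdot))$ automatically lies in $L^2(t,T;L^\infty(\Omega;\mathbb{R}^d))$; the solution is therefore strengthened, and the uniqueness already established applies. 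Finally, applying It\^o--Wentzell to the composition $u(s,X_s(t,x))$ via the regularity from (i) and comparing with the BSDE in \eqref{FBSDS_simp} identifies $u(t,x):=Y_t(t,x)$ as a strong solution of \eqref{backward-burgers} on $(T_0,T]$; uniqueness of the PDE in $Z_{T_0}$ follows from a direct energy estimate on the difference of two solutions.

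The main difficulty I anticipate is the contraction step: because $\alpha Y$ enters the forward drift, the map $\bar u\mapsto X^{\bar u}$ is itself nonlinear, and terms such as $\psi(X_T^{\bar u_1})-\psi(X_T^{\bar u_2})$ must be controlled in $H^m$-norm without losing a derivative. Handling this requires carefully combining the Kunita-type regularity of the stochastic flow with the weighted-norm equivalence of Lemma~\ref{lem norm-equivalence}, and it is what forces the admissible horizon $T-T_0$ to depend on $R$ (and hence on $\|\psi\|_m$) rather than on the data norms alone.
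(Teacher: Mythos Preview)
Your existence strategy differs from the paper's: the paper approximates $(b,c,\phi,\psi)$ by smooth data, invokes classical FBSDE theory (e.g.\ \cite{MaProtterYong1994}) for each approximant, derives uniform energy bounds from \eqref{eq:lemma6:energy}, and passes to the limit; you instead run a direct contraction in $Z_{T_0}$. Both are legitimate---indeed, your scheme is essentially the one the paper later uses for Theorem~\ref{thm local}. One remark: the difficulty you anticipate in the contraction step is a red herring. You never need to compare the flows $X^{\bar u_1}$ and $X^{\bar u_2}$ or estimate $\psi(X_T^{\bar u_1})-\psi(X_T^{\bar u_2})$ pointwise. Since Lemma~\ref{lem_verif} already identifies $u_j:=\Phi(\bar u_j)$ as deterministic functions, the difference $\delta u:=u_1-u_2$ solves a problem of the form \eqref{FBSDE_verif1} with drift $b+\alpha\bar u_2$, terminal datum $0$, and source $\alpha((\bar u_1-\bar u_2)\cdot\nabla)u_1+c(\bar u_1-\bar u_2)$; the energy identity \eqref{eq:lemma6:energy} applied to this linear problem gives the $H^m$-contraction directly (cf.\ the $\delta Y^v$ estimate in the proof of Theorem~\ref{thm local}).

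The genuine gap is in uniqueness. Your fixed-point argument yields a unique $u\in Z_{T_0}$ with $\Phi(u)=u$, hence a unique \emph{Markovian} strengthened solution of the form $Y_s(t,x)=u(s,X_s(t,x))$. But the definition of strengthened solution imposes no spatial regularity on $Y_t(t,\cdot)$ and no Markovian structure: you must show that \emph{every} strengthened solution is of this form before the fixed-point uniqueness applies. This is precisely Step~2 of the paper's proof and is not automatic. Given an arbitrary strengthened $(X,Y,Z)$, one performs a Girsanov change (this is where the bound $Y_\cdot(t,x)\in L^2(t,T;L^\infty(\Omega))$ is used) so that $X$ becomes a Brownian motion under $\mathbb Q^{t,x}$, and then compares $(Y,Z)$ with the approximants $\widetilde Y^n_s(t,x):=Y^n_s(s,X_s(t,x))$ via It\^o's formula under $\mathbb Q^{t,x}$, forcing $Y_s(t,x)=\zeta(s,X_s(t,x))$ for the limit function $\zeta$. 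Your argument for $m>d/2+1$ is circular for the same reason: you invoke $\nabla Y_s(s,\cdot)\in H^{m-1}$, but ``$Y_s(s,\cdot)$'' as a deterministic function of $x$ only makes sense \emph{after} the Markovian representation has been established. The paper instead bounds $|Y_s(t,x)|$ pointwise for each fixed $(t,x)$ by applying It\^o's formula to $|Y_s(t,x)|^2$, using only that $\psi,\phi(s,\cdot),c(s,\cdot)$ are continuous (available when $m>d/2+1$), and concludes that any solution is automatically strengthened.
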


\begin{rmk}
When $\alpha=0$, $b\equiv 0$ and $c\equiv 0$, PDE \eqref{backward-burgers} becomes the classical d-dimensional Burgers' equation.  With the method of stochastic flows, Constantin and Iyer \cite{Constantin-Iyer-2008} and Wang and Zhang \cite{wang2012probabilistic} give different stochastic representations for the local regular solutions of Burgers' equations based on stochastic Lagrangian paths. Moreover, in Wang and Zhang \cite{wang2012probabilistic} the global existence results are also presented under certain assumptions on the coefficients. To focus on the Navier-Stokes equation, we shall not search such global results for the PDEs of Burgers' type in this work. Nevertheless, the conditions on the coefficients herein are much weaker than those in \cite{Constantin-Iyer-2008,wang2012probabilistic} where $G$ and $f(t,\cdot)$ are continuous in the space variable and take values in $C^{k+1,\alpha}$ and $H^{k+3,q}\,(\hookrightarrow C^{k+2})$ with $(k,\alpha,q)\in \mathbb{N}\times (0,1)\times (d\vee 2,\infty)$, respectively.  We also note that Cruzeiro and Shamarova \cite{cruzeiro2011_FB_burgers} through a forward-backward stochastic system describes a probabilistic representation of $H^{n}$-regular ($n>\frac{d}{2}+2$) solutions for the spatially periodic forced Burgers' equations.
\end{rmk}

\begin{rmk}\label{rmk_Theta_semiM}
   In Proposition \ref{prop FBSDEandPDE}, we see $Y_t(t,x)$ and $Z_t(t,x)$ are all deterministic functions on $(T_0,T]\times\bR^d$. Therefore, for each semi-martingale $\{X_s'(t,x), s\in [t, T]\}$ of the form
  $$
  X_s'(t,x)=x+\int_t^s\varphi_r(t,x)\,dr+\int_t^s\sqrt{\nu}\,dW_r,\quad T_0<t\leq s\leq T
  $$
  with $\{\varphi_s(t,x), s\in [t,T]\}$ being  bounded and  predictable, it is interesting to understand $(Y_s$, $ Z_s)(s$, $X_s'(t,x))$  in FBSDE framework. Indeed, define the following equivalent probability measure:
  \begin{equation*}
  \begin{split}
  \frac{d\mathbb{Q}^{t,x}}{d\mathbb{P}} :=&\exp\biggl(\frac{1}{\sqrt{\nu}}\int_{t}^T \big[(b+\alpha  Y_s)(s,X_s'(t,x))-\varphi_s(t,x)\big]\,dW_s
  \\
  &
  \quad\quad\quad\quad
  -\frac{1}{2\nu}\int_{t}^T|(b+\alpha Y_s)(s,X_s'(t,x))-\varphi_s(t,x)|^2\,ds\biggr)\, .
  \end{split}
\end{equation*}
  Then in view of Girsanov theorem,  there is a standard Brownian motion $(W', \mathbb{Q}^{t,x})$ such that
  \begin{align*}
    &X_s'(t,x)=x+\int_t^r (b+\alpha Y_r)(r,X_r'(t,x))\,dr+\int_t^s\sqrt{\nu}\,dW_r',\ T_0<t\leq s\leq T.
  \end{align*}
   Then, we obtain
  \begin{align*}
    Y_{\tau}(\tau,X_{\tau}'(t,x))=
    &\ \psi(X_T'(t,x))+\!\int_{\tau}^T\!\!\!\left(\phi+c\,Y_s\right)(s,X_s'(t,x))\,ds
    -\sqrt{\nu}\int_{\tau}^T\!\!\!Z_s(s,X_s'(t,x))\,dW'_s\\
    =&\
    \psi(X_T'(t,x))+\int_{\tau}^T\!\!\!\!
    \Big\{\big[\phi+c\,Y_s+Z_s(b+\alpha Y_s)\big](s,X_s'(t,x))-Z_s\varphi_s(t,x)\Big\}\,ds
  \\
    &-\sqrt{\nu}\int_{\tau}^T Z_s(s,X_s'(t,x))\,dW_s.
  \end{align*}
\end{rmk}

\begin{proof}[ of Proposition \ref{prop FBSDEandPDE}]

\textbf{Step 1. Existence of the solution. } Choose $(b_n,c_n,\phi_n,\psi_n)\in C_c^{\infty}(\bR^{1+d};\bR^d)\times C_c^{\infty}(\bR^{1+d};\bR^{d\times d})\times C_c^{\infty}(\bR^{1+d};\bR^d)\times C_c^{\infty}(\bR^d;\bR^d)$ such that, putting $(\delta b_n,\delta c_n,\delta \phi_n,\delta\psi)=(b-b_n,c-c_n,\phi-\phi_n,\psi-\psi_n)$, we have
\begin{equation*}
\begin{split}
  &\lim_{n\rightarrow \infty}\left[\|\delta b_n\|_{C([0,T];H^{m})}
  +\|\delta b_n\|_{L^2(0,T;H^{m+1})}
  +\|\delta c_n\|_{L^2(0,T;H^{m})}
  +\|\delta\phi_n\|_{L^2(0,T;H^{m-1})}
  +\|\delta \psi_n\|_{m}\right]=0,\\
  &\|b_n\|_{C([0,T];H^{m})}\leq C \|b\|_{C([0,T];H^{m})},\, \|\phi_n\|_{L^2(0,T;H^{m-1})}\leq C \|\phi\|_{L^2(0,T;H^{m-1})},\,\,\,
  \|\psi_n\|_{m}\leq C \|\psi\|_{m},
\end{split}
\end{equation*}
$\|c_n\|_{L^2(0,T;H^{m})}\leq C \|c\|_{L^2(0,T;H^{m})}$,
and $\|b_n\|_{L^2(0,T;H^{m+1})}\leq C \|b\|_{L^2(0,T;H^{m+1})}$, where $C$ is a universal constant  being independent of $n$.
By the existing FBSDE theory (for instance, see \cite{MaProtterYong1994}), for each $n$, FBSDE \eqref{FBSDS_simp} with $(b,c,\phi,\psi)$ being replaced by smooth triple $(b_n,c_n,\phi_n,\psi_n)$ admits a local  solution $(X^n,Y^n,Z^n)$ on some time interval $(\tau,T]$ such that
$({Y^n},{Z^n})$ satisfies \eqref{eq prop express}.
 Then we have by Lemma \ref{lem_verif},
\begin{align*}
  %\begin{split}
    &\| {Y^n_s}(s)\|_m^2+\nu\int_s^T\| {Z^n_r}(r)\|_m^2\,dr
    \\
    =&\ \|\psi_n\|_m^2
   		 +\int_s^T\!\!\!  2\left(\langle {Z^n_r}(b_n+\alpha {Y^n_r})(r),\,
    	{Y^n_r}(r)\rangle_{m-1,m+1}
    	+\langle (\phi_n(r)+c_nY^n_r(r), \,
    	{Y^n_r}(r)\rangle_{m-1,m+1}\right)\,dr
    \\
    \leq&
        \,C\!\!\int_s^T\!\!\!
        \Big(\|(b_n+\alpha {Y^n_r})(r)\|_{m}\|{Z^n_r}(r)\|_{m-1}\|{Y^n_r}(r)\|_{m+1}
        +\|c_n(r)\|_m\|Y_r(r)\|^2_m
        +\|\phi_n(r)\|_{m-1} \|{Y^n_r}(r)\|_{m+1} \Big)dr
    \\
    	&\,+\|\psi_n\|_m^2\\
    \leq&\
        C\bigg\{
        \int_s^T\!\!\! \left(\|b_n(r)\|_m^2
        +\|c_n(r)\|_m+1\right)\|{Y^n_r}(r)\|_m^2\,dr
        +\alpha^2\int_s^T\!\! \| {Y^n_r}(r)\|_{m}^4\,dr
        +\int_s^T\!\!\! \|\phi(r)\|_{m-1}^2\,dr\bigg\}\\
        &
        +\frac{\nu}{2}\int_s^T\|{Z^n_r}(r)\|_m^2\,dr
        +C\|\psi\|_m^2.
 % \end{split}
\end{align*}
Gronwall inequality implies
\begin{align}
  %\begin{split}
    &\| {Y^n_s}(s)\|_m^2+{\nu}\int_s^T\|{Z^n_r}(r)\|_m^2\,dr
    \nonumber\\
    \leq&\,
    C\exp\left\{\int_0^T\left(\|b_n(r)\|^2_m+\|c_n(r)\|_m+1\right)dr\right\}
    \left(\|\psi\|_m^2+\|\phi\|^2_{L^2(0,T;H^m)}
        +\alpha^2\int_s^T\!\! \| {Y^n_r}(r)\|_{m}^4\,dr\right)
    \nonumber\\
     \leq&\,
    C\exp\left\{\int_0^TC\left(\|b(r)\|^2_m+\|c(r)\|^2_m+1\right)dr\right\}
    \left(\|\psi\|_m^2+\|\phi\|^2_{L^2(0,T;H^m)}
        +\alpha^2\int_s^T\!\! \| {Y^n_r}(r)\|_{m}^4\,dr\right)
    \nonumber\\
    \leq&\,
    C_0\left(1
        +\alpha^2\int_s^T\!\! \| {Y^n_r}(r)\|_{m}^4\,dr\right)
    \label{est-alpha-pf-prop}
\end{align}
with the constant $C_0$ depending only on $\|\phi\|_{L^2(0,T;H^{m-1})}$, $\|b\|_{C([0,T];H^{m})}$, $\|c\|_{L^2(0,T;H^{m})}$, $\|\psi\|_m$, $\nu$, $m$, $d$ and $T$. Hence, applying Gronwall inequality again and setting
$$\tau_0=\left(T-\frac{1}{|C_0\alpha|^2}\right)\vee 0,$$
we have for any $s\in(\tau_0,T]$,
\begin{equation}\label{eq local est}
  \sup_{r\in[s,T]}\| {Y^n_r}(r)\|_m^2+{\nu}\int_s^T\|{Z^n_r}(r)\|_m^2\,dr\leq \frac{C_0}{1-|\alpha C_0|^2(T-s)}.
\end{equation}
As a consequence, we are allowed to choose a uniform existing time interval $(\tau,T]$ for all $(X^n,Y^n,Z^n)$, $n\in\mathbb{Z}^+$.

Put
$$
(X^{nk},Y^{nk},Z^{nk},b_{nk},c_{nk},\phi_{nk},\psi_{nk})
=
(X^n,Y^n,Z^n,b_n,c_n,\phi_n,\psi_n)-(X^k,Y^k,Z^k,b_k,c_k,\phi_k,\psi_k).
$$
Then for each  $\eps\in(0,T-\tau)$,  we have, for any $s\in(\tau+\eps,T]$
\begin{align*}
    &\| {Y^{nk}_s}(s)\|_m^2+\nu\int_s^T\|{Z^{nk}_r}(r)\|_m^2\,dr
    \\
    =&\ \|\psi_{nk}\|_m^2
    +\int_s^T\!\!\!  2\left(\langle ({Z^{n}_r}b_{nk}+{Z^{nk}_r}b_k
    +c_{nk}Y^n_r+c_kY^{nk}_r+\phi_{nk})(r), \,  {Y^{nk}_r}(r)\rangle_{m-1,m+1}
    \right)\,dr
    \\
    &
    +\int_s^T\!\!\!  2\alpha\langle {Z^{n}_r}{Y^{nk}_r}(r)+{Z^{nk}_r}{Y^{k}_r}(r),\,   {Y^{nk}_r}(r)\rangle_{m-1,m+1}\,dr\\
    \leq& \,  \|\psi_{nk}\|^2_m+\frac{\nu}{2}\int_s^T\!\!\!(\|{Z^{nk}_r}(r)\|_{m}^2+\|{Y^{nk}_r}(r)\|_m^2)\,dr\\
    &\
    +C\bigg\{
    \alpha^2\int_s^T\left( \|{Y^{nk}_r}(r)\|_{m}^2\|{Y^{n}_r}(r)\|_{m}^2+
    \|{Y^{nk}}(r)\|_{m}^2\|{Y^{k}_r}(r)\|_{m}^2\right)\,dr
    \\
    &\
    +\int_s^T\!\!\left(\|\phi_{nk}\|_{m-1}^2
    + \big(\|b_{nk}\|_{m}^2+\|c_{nk}\|_m^2)\|{Y^{n}_r}\|_{m}^2
    +\big(\|b_{k}\|_{m}^2+\|c_k\|_m\big)\|{Y^{nk}_r}\|_{m}^2\right)(r)\,dr
      \bigg\}
    \\
    \leq&\
    \|\psi_{nk}\|^2_m+\int_s^T\!\bigg[\frac{\nu}{2}\|{Z^{nk}_r}\|_{m}^2
    +C
    \left( \|b_{nk}\|_{m}^2+\|c_{nk}\|_{m}^2+\|\phi_{nk}\|_{m-1}^2
    +(1+\|c_k\|_m)\|{Y^{nk}_r}\|_{m}^2\right)\bigg](r)\,dr
    .
\end{align*}
Consequently, for a constant $C$ which is independent of $n$ and $k$, we have
\begin{equation}\label{eq_nk}
  \begin{split}
    &\sup_{s\in[T_0+\eps,T]}\|{Y^{nk}_r}(s)\|_m^2+\nu\int_{T_0+\eps}^T\|{Z^{nk}_r}(r)\|_m^2\,dr
    \\
    \leq
    &
    \,
C\bigg\{\|\psi_{nk}\|^2_m
    +
    \int_0^T\left( \|b_{nk}\|_{m}^2+\|c_{nk}\|_{m}^2+\|\phi_{nk}\|_{m-1}^2  \right)(r)\,dr
    \bigg\}
    \longrightarrow 0\textrm{ as }n,k\rightarrow\infty.
  \end{split}
\end{equation}
Then $\{({Y^{n}_r(r,x)},{Z^{n}_r(r,x)}),(t,x)\in[\tau+\eps,T]\times\bR^d\}_{ n\in\mathbb{Z}^{+}}$ is a Cauchy sequence in $C([\tau+\eps,T];H^m) \times L^2(\tau+\eps,T;H^{m})$ for any $\eps\in(0,T-\tau)$, having a limit denoted by $(\zeta(r,x), \nabla\zeta(r,x))$.

On the other hand,  FBSDE
\begin{equation}\label{FBSDS_simp2}
  \left\{\begin{array}{l}
  \begin{split}
  dX_s(t,x)&
            =\left[b(s,X_s(t,x))+\alpha \zeta(s,X_s(t,x))\right]\,ds+\sqrt{\nu}\,dW_s\,;\\
  X_t(t,x)&=x;\\
  -dY_s(t,x)&
            =(\phi+c\,\zeta)(s,X_s(t,x))\,ds-\sqrt{\nu}Z_s(t,x)\,dW_s,\quad s\in[t,T];\\
    Y_T(t,x)&=\psi(X_T(t,x)),\\
    \end{split}
  \end{array}\right.
\end{equation}
admits a unique solution $(X,Y,Z)$ on $(\tau,T]$, which by Lemma \ref{lem_verif} satisfies \eqref{PDE_Mild} and \eqref{eq expresion}. Setting $k\rightarrow \infty$ first and then $n\rightarrow \infty$
in \eqref{eq_nk}, we have  $\zeta(t,x)=Y_t(t,x)$ and $\nabla \zeta(t,x)=Z_t(t,x)$ for almost every $(t,x)\in (\tau,T]\times\bR^d$. Again from Lemma \ref{lem_verif}, we see that
the triple ($X_s(t,x)$, $\zeta(s,X_s(t,x))$, $\nabla\zeta(s,X_s(t,x))$) solves FBSDE \eqref{FBSDS_simp} and satisfies all the assertions of this proposition except the uniqueness and the relation to PDE \eqref{backward-burgers}, which are left to the next steps.
Moreover, through a bootstrap argument, we can extend the existing interval to a maximal one denoted by $(T_0,T]$ with $T_0$ depending on $\|\psi\|_{m}$, $\|b\|_{C([0,T];H^{m})}$, $\|c\|_{L^2(0,T;H^{m})}$, $\|\phi\|_{L^2(0,T;H^{m-1})}$,
 $\alpha$, $m$, $d$, $\nu$ and $T$. In particular, if $\alpha=0$, it follows from estimates \eqref{est-alpha-pf-prop} and \eqref{eq local est} that the existence time interval is $[0,T]$.

 \textbf{Step 2. Uniqueness. }
   First, for $m>d/2+1$, as $H^{m-1}\hookrightarrow C^{0,\delta}(\bR^d)$ for some $\delta\in(0,1)$, our BSDE is well-posed for each $x\in\bR^d$. Then It\^o's formula yields  that for $T_0<t\leq s\leq T$,
   \begin{align*}
   &\left|Y_s(t,x)\right|^2+E\bigg[\nu\int_s^T\!\!|Z_r(t,x)|^2dr   \Big|\sF_s\bigg]\\
   &=
   		\,E\bigg[ |\psi(X_T(t,x))|^2
   		+2\int_s^T\!\!\langle Y_r(t,x),\,\phi(r,X_r(t,x))+c(r,X_r(t,x))Y_r(t,x) \rangle\,dr
   		 \Big|\sF_s\bigg]\\
   &\leq\,
   		C(m,d)\bigg\{
   		\|\psi\|_m^2
   +\|\phi\|^2_{L^2(0,T;H^{m-1})}	+\int_s^T\!\!\left(1+\|c(r)\|_m\right)E\Big[\left|Y_r(t,x)\right|^2\big|\sF_s\Big]\,dr
   		\bigg\},
   \end{align*}
   which by Gronwall inequality implies that for each $(t,x)\in(T_0,T]\times\bR^d$, there holds almost surely
   \begin{align*}
   \left|Y_s(t,x)\right|^2\leq C(m,d)\exp \left\{ T+T^{1/2}\|c\|_{L^2(0,T;H^m)}\right\}
   \left(\|\psi\|_m^2+\|\phi\|^2_{L^2(0,T;H^{m-1})}\right), \quad \forall\,s\in[t,T].
   \end{align*}
   Thus, each solution turns out to be a strengthened one.  Hence, we need only prove the uniqueness of the strengthened solution for general $m>d/2$.

   Let $(X,Y,Z)$ be any strengthened solution of \eqref{FBSDS_simp} on $(T_0,T]$.
 For each $(t,x)\in(T_0,T]\times\bR^d$, define the following equivalent probability measure:
   \begin{align*}
   d\mathbb{Q}^{t,x}
   :=\textrm{exp} \bigg(&
   -\frac{1}{\sqrt{\nu}}\int_{t}^T \!\!\left[b(s,X_s(t,x))+\alpha Y_s(t,x)\right]\,dW_s\\
   &-\frac{1}{2\nu}\int_{t}^T\!\!|b(s,X_s(t,x))+\alpha Y_s(t,x)|^2\,ds \bigg)   d\mathbb{P},
   \end{align*}
   where we note that $Y_{\cdot}(t,x)\in L^2(t,T;L^{\infty}(\Omega;\bR^d))$.
Then  FBSDE \eqref{FBSDS_simp} reads
\begin{equation*}
  \left\{\begin{array}{l}
  \begin{split}
  dX_s(t,x)&
            =\sqrt{\nu}\,dW'_s,\quad s\in[t,T];\\
  X_t(t,x)&=x;\\
  -dY_s(t,x)&
            =\Big[\phi(s,X_s(t,x))+c(s,X_s(t,x))Y_s(t,x)+Z_s(t,x)\big(b(s,X_s(t,x))+\alpha Y_s(t,x)\big)\Big]\,ds\\
            &\quad-\sqrt{\nu}Z_s(t,x)\,dW'_s,\quad s\in[t,T];\\
    Y_T(t,x)&=\psi(X_T(t,x)),\\
    \end{split}
  \end{array}\right.
\end{equation*}
where $(W', \mathbb{Q}^{t,x})$ is a standard Brownian motion.

 Borrowing the notations from \textbf{Step 1},  define
 \begin{equation}\label{eq def tildeYZ}
 \widetilde{Y}^n_s(t,\cdot)={Y^n_s}(s,X_s(t,\cdot))\textrm{ and }\widetilde{Z}^n_s(t,\cdot)={Z^n_s}(s,X_s(t,\cdot)).
 \end{equation}
 For simplicity, we assume $\tau=T_0$. As $m>d/2$ and $H^m\hookrightarrow C^{0,\delta}(\bR^d)$ for some $\delta\in(0,1)$, there is a constant $N^{t}$ such that
 $$\sup_{n}\left(\sup_{s\in[t,T],x\in\bR^d} \big|Y^n_s(s,x)|+ \int_t^T\sup_{x\in\bR^d}\big| {Z^n_s}(s,x)\big|^2 \,ds \right)\leq N^t.$$
 By Remark \ref{rmk_Theta_semiM}, we have $\textrm{ for almost all }x\in\bR^d$,
 \begin{align*}
   \widetilde{Y}^n_s(t,x)
   =&\,\int_s^T\!\!\!
   \big[\widetilde{Z}^n_r(t,x)\big(b_n(r,X_r(t,x))+\alpha \widetilde{Y}^n_r(t,x)\big)+
   c(r,X_r(t,x))\widetilde{Y}^n_r(t,x)
   +\phi_n(r,X_r(t,x))\big]\,dr\\
   &\,+\psi_n(X_T(t,x))-\sqrt{\nu}\int_s^T\!\!\! \widetilde{Z}^n_r(t,x)\,dW'_r,\ t\leq s\leq T.
 \end{align*}

 Note that both $(\widetilde{Y}^n_{\cdot}(t,x),\widetilde{Z}^n_{\cdot}(t,x))$ and $((Y_{\cdot}(t,x),Z_{\cdot}(t,x)))$ belong to $ S^2([t,T];\bR^d)\times L^2_{\sF}(t,T;\bR^{d\times d})$.
 Put
 $
 (\delta Y^n,\delta Z^n)
 =
 (\widetilde{Y}^n-Y,\widetilde{Z}^n-Z)
 $.
 Then It\^o's formula yields
  \begin{align*}
   &E_{\mathbb{Q}^{t,x}}\left[
   |\delta Y^n_s(t,x)|^2+\nu\int_s^T|\delta Z^n_r(t,x)|^2\,dr
   \right]
   \\
   =&\
   2E_{\mathbb{Q}^{t,x}}
   \int_s^T\!\!\!\langle \delta Y^n_r(t,x),\,
   \widetilde{Z}^n_r(t,x)\big(b_n(r,X_r(t,x))+\alpha \widetilde{Y}^n_r(t,x)\big)
   +c_n(r,X_r(t,x))\widetilde{Y}^n_r(t,x)+\delta\phi_n(r,X_r(t,x))
   \\
   &\,\quad\quad\quad-Z_r(t,x)\big(b(r,X_r(t,x))+\alpha Y_r(t,x)\big)
   -c(r,X_r(t,x))Y_r(t,x)
      \rangle \,dr+E_{\mathbb{Q}^{t,x}}|\delta\psi_n(X_T(t,x))|^2
   \\
   =&\
   E_{\mathbb{Q}^{t,x}}|\delta\psi_n(X_T(t,x))|^2
   +
   2E_{\mathbb{Q}^{t,x}}
   \int_s^T\!\!\!\langle
   \delta Z^n_r(t,x)b(r,X_r(t,x))+\widetilde{Z}^n_r(t,x)\delta b^n(r,X_r(t,x))
   +\alpha \delta Z^n_rY_r(t,x)
    \\
   &\,+\alpha \widetilde{Z}^n_r \delta Y^n_r(t,x)
   +c(r,X_r(t,x))\delta Y^n_r(t,x)+\delta c_n(r,X_r(t,x))\widetilde{Y}^n_r(t,x)
    +\delta \phi_n(r,X_r(t,x)),\,\delta Y^n_r(t,x)\rangle \,dr
   \\
   \leq&\,C\|\delta\psi_n\|_m^2
        +
        \frac{\nu}{2}E_{\mathbb{Q}^{t,x}}
        \int_s^T\!\!\!\big|\delta Z^n_r(t,x)\big|^2\,dr
        +C E_{\mathbb{Q}^{t,x}}\int_s^T\!\!\!\Big(\big|\delta \phi_n(r,X_r(t,x))\big|^2
        +\big|\widetilde{Z}^n_r(t,x)\big|^2\|\delta b_n\|_{C([0,T];H^m)}^2\\
        &~+\big|\widetilde{Y}_r^n(t,x)\big|^2\|\delta c_n(r)\|_m^2+\big|\delta Y^n_r(t,x)\big|^2\big(1+\|b(r)\|_m^2+\|c(r)\|_m^2+\|{Z^n_r}(r)\|^2_{C(\bR^d)}+|Y_r(t,x)|^2\big)
        \Big)\,dr\\
   \leq&\,C\|\delta\psi_n\|_m^2
        +CN^t\left(\|\delta b_n\|_{C([0,T];H^m)}^2+\|\delta c_n\|^2_{L^2(0,T;H^m)}\right)
        +C E_{\mathbb{Q}^{t,x}}\int_s^T\!\!\!\Big(\big|\delta \phi_n(r,X_r(t,x))\big|^2
        \\
        &
        +\big|\delta Y^n_r(t,x)\big|^2\big(1+\|b(r)\|_m^2+\|c(r)\|_m^2+\|{Z^n_r}(r)\|^2_{C(\bR^d)}
        +\|Y_r(t,x)\|_{L^{\infty}(\Omega;\bR^d)}^2\big)
        \Big)\,dr.\\
        &\,+
        \frac{\nu}{2}E_{\mathbb{Q}^{t,x}}
        \int_s^T\big|\delta Z^n_r(t,x)\big|^2\,dr.
 \end{align*}
 By Gronwall inequality, we obtain
 \begin{align*}
     &\sup_{s\in[t,T]}
         E_{\mathbb{Q}^{t,x}}\left[
        |\delta Y^n_s(t,x)|^2+\nu\int_s^T|\!\!\!\delta Z^n_r(t,x)|^2\,dr
        \right]    \\
     \leq&\
     C\Big\{
     \|\delta\psi_n\|_{m}^2
     +\|\delta b_n\|_{C([0,T];H^{m})}^2
     +\|\delta c_n\|^2_{L^2(0,T;H^m)}
      +E_{\mathbb{Q}^{t,x}}\int_s^T\!\!\!
   |\delta \phi_n(r,X_r(t,x))|^2\,dr
 \Big\},
 \end{align*}
 where the constant $C$ depends only on $N^t$, $\|Y(t,x)\|_{L^2(t,T;L^{\infty}(\Omega;\bR^d))}$, $T$, $\|b\|_{C([0,T];H^m)}$, $\|c\|_{L^2(0,T;H^m)}$, $m$, $d$, $\nu$ and $\alpha$, and is independent of $n$. As
 $$
 \int_{\bR^d}
 E_{\mathbb{Q}^{t,x}}\int_s^T
   |\delta \phi_n(r,X_r(t,x))|^2\,drdx
   =\|\delta \phi_n\|^2_{L^2(t,T;L^2(\bR^d))}
   \longrightarrow 0,\quad \textrm{as }n\rightarrow \infty,
 $$
extracting a subsequence if necessary, we have
$$
\sup_{s\in[t,T]}
         E_{\mathbb{Q}^{t,x}}\left[
        |\delta Y^n_s(t,x)|^2+\nu\int_s^T|\delta Z^n_r(t,x)|^2\,dr
        \right]
        \longrightarrow 0,\textrm{ a.e. }x\in\bR^d,\quad \textrm{as }n\rightarrow\infty.
$$

 Thus, in view of \eqref{eq def tildeYZ}, we conclude that for each $t\in(T_0,T]$ and almost every $x\in\bR^d$, there holds almost surely
 $$
 Y_s(t,x)=\zeta(s,X_s(t,x))\textrm{ and }Z_s(t,x)=\nabla \zeta(s,X_s(t,x)),\quad t\leq s\leq T.
 $$
 Therefore, any  strengthened solution of FBSDE \eqref{FBSDS_simp} on $(T_0,T]$ must have the form described as above. Now, let $(X,Y,Z)$ and $(\bar{X},\bar{Y},\bar{Z})$ be any two strengthened solutions of FBSDE \eqref{FBSDS_simp} on $(T_0,T]$. By previous argument we have
 \begin{equation}
   \begin{split}
     &Y_s(t,x)=\zeta(s,X_s(t,x)),\ Z_s(t,x)=\nabla \zeta(s,X_s(t,x)),\\
     &\bar{Y}_s(t,x)=\zeta(s,X_s(t,x)),\ \bar{Z}_s(t,x)=\nabla \zeta(s,X_s(t,x)).
   \end{split}
 \end{equation}
 Hence, by Lemma \ref{lem_verif} we must have $(X,Y,Z)\equiv(\bar{X},\bar{Y},\bar{Z})$.

Finally, similar to the proof of Theorem \ref{thm local}, we verify that $Y_t(t,x)$ is the unique strong solution of  PDE \eqref{backward-burgers} on $(T_0,T]$.  The proof is complete.
 \end{proof}

\begin{rmk}\label{rmk after prop alpha=0}
From the proof of Proposition~\ref{prop FBSDEandPDE}, we have
  $$
  T_0\leq 0\vee \left[ T-\frac{1}{|\alpha C_0|^2}  \right].
  $$
  If $T|\alpha C_0|^2<1$  in \eqref{eq local est},
  the existence time interval of the strengthened solution can be taken as $[0,T]$.
\end{rmk}
\begin{rmk}
  If $\widetilde{Y}_0(\cdot,\cdot)$ of FBSDS~\eqref{1.1} lies in $L^2(T_0,T; H^{m-1})$, then by Proposition \ref{prop FBSDEandPDE}, $Y_t(t,x)$ of FBSDS \eqref{1.1} is deterministic and belongs to $L^2(T_0,T;H^{m+1})$. Therefore, by Lemma \ref{lem norm-equivalence} and Proposition \ref{prop FBSDEandPDE}, Definition \ref{def Hm solution} makes sense.
\end{rmk}

From Proposition \ref{prop FBSDEandPDE}, we have the following characterization of an $H^m$-solution to FBSDS~\eqref{1.1} for $m> d/2$, whose proof is omitted.

\begin{cor}\label{cor-prop-FBSDEs} Let $T_0<T$.
Under assumptions of Theorem \ref{thm local}, $(X,Y,Z,\widetilde{Y}_0)$ is  an $H^{m}$-solution of  FBSDS \eqref{1.1} on  $(T_0,T]$ if
  and only if
  $(X,Y,Z,\widetilde{Y}_0)$ is a solution to FBSDS~\eqref{1.1} on  $(T_0,T]$ with the function $Y_t(t,x)$ lying in $ C_{\textrm{loc}}((T_0,T];H^m)\cap L^2_{\textrm{loc}}(T_0,T;H^{m+1})$ and
    \begin{align*}
  Z_t(t,\cdot)=\nabla Y_t(t,\cdot),Y_s(t,\cdot)= Y_s(s,X_s(t,\cdot)) \textrm{ and } Z_{s}(t,\cdot)=Z_s(s,X_s(t,\cdot)), \,\,t\leq s\leq T.
  \end{align*}.
\end{cor}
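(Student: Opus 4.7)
The key observation is that FBSDS~\eqref{1.1} restricted to the finite horizon $[t,T]$ is precisely the Burgers'-type FBSDE~\eqref{FBSDS_simp} taken with coefficients $b\equiv 0$, $\alpha=1$, $c\equiv 0$, terminal condition $\psi=G$, and forcing $\phi(s,x)=f(s,x)+\widetilde{Y}_0(s,x)$. Thus, once $\widetilde{Y}_0$ is viewed as given external data, the equivalence claimed here is essentially a restatement of Proposition~\ref{prop FBSDEandPDE} combined with a direct check of Definition~\ref{def Hm solution}, and I would organise the proof as two implications.

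\textbf{$H^m$-solution $\Rightarrow$ right-hand condition.} By Definition~\ref{def Hm solution} we already have $\widetilde{Y}_0\in L^2_{\textrm{loc}}(T_0,T;H^{m-1})$, so $\phi:=f+\widetilde{Y}_0\in L^2_{\textrm{loc}}(T_0,T;H^{m-1})$, and the essential-supremum condition $Y_\cdot(t,x)\in L^2(t,T;L^\infty(\Omega;\bR^d))$ makes $(X,Y,Z)$ a strengthened solution of~\eqref{FBSDS_simp} with this data. Fix $T_1\in(T_0,T)$ close enough to $T$ that Proposition~\ref{prop FBSDEandPDE} applied on $[T_1,T]$ (with $\phi$ frozen to its current value) yields a unique strengthened solution. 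By the uniqueness there, our $(X,Y,Z)$ must coincide on $[T_1,T]$ with the solution constructed in Proposition~\ref{prop FBSDEandPDE}, and so inherits $Y_t\in C([T_1,T];H^m)\cap L^2(T_1,T;H^{m+1})$ together with the representations $Z_t=\nabla Y_t$, $Y_s(t,\cdot)=Y_s(s,X_s(t,\cdot))$, $Z_s(t,\cdot)=Z_s(s,X_s(t,\cdot))$. A continuation argument sliding $T_1$ down to $T_0$ extends these properties to all of $(T_0,T]$.

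\textbf{Right-hand condition $\Rightarrow$ $H^m$-solution.} I would verify each clause of Definition~\ref{def Hm solution} in turn. The Sobolev embedding $H^m\hookrightarrow C^{0,\delta}\subset L^\infty$ (Lemma~\ref{lem sobolev}) combined with $Y_s(t,x)=Y_s(s,X_s(t,x))$ gives pointwise boundedness of $Y_s(t,x)$, hence $Y\in S^2([t,T];\bR^d)$ and the strengthened condition $Y_\cdot(t,x)\in L^2(t,T;L^\infty(\Omega;\bR^d))$; the matching integrability for $Z_s(t,x)=Z_s(s,X_s(t,x))$ follows from $\nabla Y_t\in L^2_{\textrm{loc}}(T_0,T;H^m)$ and the push-forward measure estimate of Lemma~\ref{lem norm-equivalence}. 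For the remaining item $\widetilde{Y}_0\in L^2_{\textrm{loc}}(T_0,T;H^{m-1})$, I invoke the Leray-Hodge identification from Remark~\ref{rmk2 after lem SIO}: since $Y_t(t,\cdot)\in H^m$ and $H^m$ is a Banach algebra for $m>d/2$ (Lemma~\ref{lem sobolev}), one has $Y_t\otimes Y_t\in L^2_{\textrm{loc}}(T_0,T;H^m)$, so $\textrm{div}(Y_t\otimes Y_t)\in L^2_{\textrm{loc}}(T_0,T;H^{m-1})$, and boundedness of $\mathbf{P}^{\perp}$ on $H^{m-1}$ yields $\widetilde{Y}_0=-\mathbf{P}^{\perp}\textrm{div}(Y_t\otimes Y_t)\in L^2_{\textrm{loc}}(T_0,T;H^{m-1})$.

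\textbf{Principal technical point.} The only delicate step is the matching in the forward direction: Proposition~\ref{prop FBSDEandPDE}'s existence interval depends on $\|\phi\|_{L^2(0,T;H^{m-1})}$ and hence on the a priori only \emph{local} regularity of $\widetilde{Y}_0$ near $T_0$. This is handled by first establishing the identification on subintervals $[T_1,T]$ where Proposition~\ref{prop FBSDEandPDE} applies directly, and then bootstrapping $T_1$ down to $T_0$ using the local nature of $L^2_{\textrm{loc}}$ together with the uniqueness of the strengthened solution at each step.
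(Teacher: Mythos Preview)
Your proposal is correct and follows exactly the route the paper intends: the paper explicitly omits the proof, stating only that the corollary follows from Proposition~\ref{prop FBSDEandPDE}, and your reduction of FBSDS~\eqref{1.1} to the Burgers'-type system~\eqref{FBSDS_simp} with $b\equiv0$, $\alpha=1$, $c\equiv0$, $\phi=f+\widetilde{Y}_0$ is precisely how that reduction goes. The bootstrap you flag as the principal technical point is indeed the only place requiring care, and the paper's own Remark preceding the corollary (which asserts that $\widetilde{Y}_0\in L^2(T_0,T;H^{m-1})$ forces $Y_t\in L^2(T_0,T;H^{m+1})$ via Proposition~\ref{prop FBSDEandPDE}) implicitly relies on the same continuation argument you sketch.
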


Immediately from Proposition \ref{prop FBSDEandPDE}, we have the maximum principle for PDEs of Burgers' type.
\begin{cor}\label{MP-burgers}
  In Proposition \ref{prop FBSDEandPDE}, assuming further $\phi\in L^1(0,T;L^{\infty})$, we have
  \begin{align*}
  \sup_{x\in\bR^d} \left|Y_t(t,x)\right|
  \leq&\,\exp\left\{ C(m,d)\|c\|_{L^1(t,T;H^m)} \right\}\left(\sup_{x\in\bR^d} \left|\psi(x)\right| +
  \int_t^T \esssup_{x\in\bR^d}\left|\phi(s,x)\right|\,ds\right).
  \end{align*}
  In particular, if $c\equiv 0$,  there holds for any $t\in (T_0,T]$, $j=1,\cdots,d$,
  \begin{align*}
  \sup_{x\in\bR^d} Y_t^j(t,x)
  \leq&\,\sup_{x\in\bR^d} \psi^j(x) +
  \int_t^T \esssup_{x\in\bR^d}\phi^j(s,x)\,ds\\
  \textrm{and}\quad \sup_{x\in\bR^d} \left|Y_t^j(t,x)\right|
  \leq&\,\sup_{x\in\bR^d} \left|\psi^j(x)\right| +
  \int_t^T \esssup_{x\in\bR^d}\left|\phi^j(s,x)\right|\,ds.
  \end{align*}
  \end{cor}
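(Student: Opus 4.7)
\textbf{Proof plan for Corollary \ref{MP-burgers}.} The plan is to exploit the Feynman--Kac type representation \eqref{PDE_Mild_prop} of Proposition \ref{prop FBSDEandPDE}, evaluated at the starting time $s=t$. Since $X_t(t,x)=x$, taking expectation in \eqref{PDE_Mild_prop} (the stochastic integral being a genuine martingale thanks to the uniform bounds coming from $Y_t(t,\cdot)\in C_{\textrm{loc}}((T_0,T];H^m)\hookrightarrow C_{\textrm{loc}}((T_0,T];L^\infty)$) yields
\[
Y_t(t,x)=E\!\left[\psi(X_T(t,x))+\int_t^T\phi(r,X_r(t,x))\,dr+\int_t^T c(r,X_r(t,x))Y_r(t,x)\,dr\right].
\]
This is the workhorse identity of the whole argument.

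Next I would use the flow identity $Y_r(t,x)=Y_r(r,X_r(t,x))$ from \eqref{eq prop express}, which gives the pointwise bound $|Y_r(t,x)|\leq\sup_{y\in\bR^d}|Y_r(r,y)|$ (the supremum being well-defined because $Y_r(r,\cdot)\in H^m$ has a continuous representative for $m>d/2$ by Sobolev embedding). Taking absolute values inside the expectation, then supremum in $x$, and using the embedding $\|c(r)\|_{L^\infty}\leq C(m,d)\|c(r)\|_m$, I obtain the integral inequality
\[
\sup_x|Y_t(t,x)|\leq \sup_x|\psi(x)|+\int_t^T\esssup_y|\phi(r,y)|\,dr+C(m,d)\int_t^T\|c(r)\|_m\sup_y|Y_r(r,y)|\,dr,
\]
and Gronwall's inequality delivers the first asserted bound. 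For the case $c\equiv0$, no Gronwall is needed: the representation directly reads componentwise
\[
Y_t^j(t,x)=E[\psi^j(X_T(t,x))]+\int_t^T E[\phi^j(r,X_r(t,x))]\,dr,
\]
and both the signed and the absolute-value inequalities follow immediately by estimating each inner expectation by the corresponding $\sup$ or $\esssup$.

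The step I expect to require the most care is the validity of the pointwise representation and of the $L^\infty$ manipulations when $d/2<m\leq d/2+1$, because in this regime Proposition \ref{prop FBSDEandPDE} only guarantees the BSDE for almost every $x$. To handle it I would recycle the approximation argument used in Step 1 of the proof of Proposition \ref{prop FBSDEandPDE}: pick smooth $(b_n,c_n,\phi_n,\psi_n)$ that converge to $(b,c,\phi,\psi)$ in the relevant Sobolev norms \emph{and} additionally satisfy $\phi_n\to\phi$ in $L^1(0,T;L^\infty)$ and $\psi_n\to\psi$ in $L^\infty$ (made possible by the added hypothesis $\phi\in L^1(0,T;L^\infty)$ together with density and a standard mollification argument). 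For each $n$ the coefficients are smooth, so the BSDE is well-posed pointwise and the estimate above is justified. I would then pass to the limit using the $C_{\textrm{loc}}((T_0,T];H^m)$ convergence $Y^n_t(t,\cdot)\to Y_t(t,\cdot)$ established in the proof of Proposition \ref{prop FBSDEandPDE}, combined with the embedding $H^m\hookrightarrow C(\bR^d)$, which upgrades this convergence to uniform convergence in $x$ and thus preserves the supremum bounds in the limit.
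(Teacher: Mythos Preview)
Your proposal is correct and follows exactly the route the paper intends: the corollary is stated without proof, merely as an immediate consequence of Proposition~\ref{prop FBSDEandPDE}, and your use of the representation \eqref{PDE_Mild_prop} at $s=t$ together with \eqref{eq prop express} and Gronwall is precisely the natural argument. One small remark: the extra approximation step you outline for the range $d/2<m\leq d/2+1$ is not really needed, since \eqref{PDE_Mild_prop} already holds for almost every $x$, and combining the a.e.\ bound with the continuity of $Y_t(t,\cdot)\in H^m\hookrightarrow C(\bR^d)$ immediately upgrades it to a bound on the true supremum; similarly, the stochastic integral has zero expectation simply because $Z_\cdot(t,x)\in L^2_{\sF}(t,T;\bR^{d\times d})$ by the definition of solution.
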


\section{Proof of Theorem \ref{thm local} and global results}
\label{subsec:pf-thm}

\subsection{Proof of Theorem \ref{thm local}}
By Proposition \ref{prop FBSDEandPDE} and Corollary \ref{cor-prop-FBSDEs}, to prove the existence and uniqueness of the $H^m$-solution of FBSDS \eqref{1.1}, it is equivalent to find a unique $u\in C((T_0,T];H^m_{\sigma})\cap L^2(T_0,T;H^{m+1}_{\sigma})$ on some time interval $(T_0,T]$ such that
$$
Y_s(t,\cdot)= u(s,X_s(t,\cdot)) \textrm{ and } Z_{s}(t,\cdot)=\nabla u(s,X_s(t,\cdot)), \,\,T_0<t\leq s\leq T.
$$
 Therefore, in view of the energy equality \eqref{Eq prop Energy} and the probabilistic representation for the operator $\nabla (-\Delta)^{-1}\textrm{div }\textrm{div}$  in Lemma \ref{lem singular operator}, we can use similar techniques of the energy method for the Navier-Stokes equations (see \cite{BertozziMajda2002}) to prove the existence and uniqueness of the $H^m$-solution for FBSDS \eqref{1.1}. To give a self-contained proof, we provide the following two-step iteration scheme.

\begin{proof}[ of Theorem \ref{thm local}]
  First, for each $v\in C([0,T];H^m_{\sigma})\cap L^2(0,T;H^{m+1}_{\sigma})$ and $\zeta\in C([0,T];H^m)\cap L^2(0,T;H^{m+1})$, consider the following FBSDS:
  \begin{equation}\label{Eq FBSDE1 in thm}
  \left\{\begin{array}{l}
  \begin{split}
  dX_s(t,x)&
            =v(s,X_s(t,x))\,ds+\sqrt{\nu}\,dW_s,\quad s\in[t,T];\\
  X_t(t,x)&=x;\\
  -dY_s(t,x)&
            =\left[f(s,X_s(t,x))+\widetilde{Y}_0(s,X_s(t,x))\right]\,ds-\sqrt{\nu}Z_s(t,x)\,dW_s;\\
    Y_T(t,x)&=G(X_T(t,x));\\
  -d\widetilde{Y}_s(t,x)&=\frac{27}{2s^3}\sum_{i,j=1}^dv^i\zeta^j(t,x+B_s)\left(B^i_{\frac{2s}{3}}-B^i_{\frac{s}{3}}\right)
              \left(B_s^j-B_{\frac{2s}{3}}^j\right)B_{\frac{s}{3}} \,ds\\
              &\ \ -\,\widetilde{Z}_s(t,x)dB_s,\quad s\in(0,\infty);\\
    \widetilde{Y}_{\infty}(t,x)&=0.\ \
    \end{split}
  \end{array}\right.
\end{equation}
By BSDE theory (see \cite{Karoui_Peng_Quenez,ParPeng_90}) and Lemmas \ref{lem norm-equivalence} and \ref{lem singular operator}, FBSDS \eqref{Eq FBSDE1 in thm} has a unique solution $(X^{v,\zeta}$, $Y^{v,\zeta}$, $Z^{v,\zeta}$, $\widetilde{Y}^{v,\zeta}_0)$.  From Lemma \ref{lem singular operator} and Remark \ref{rmk2 after lem SIO}, we have
 $$Y_0(t,x)=-\textbf{P}^{\perp}\,\textrm{div}(v\otimes\zeta)(t,x)
 =-\textbf{P}^{\perp}((v\cdot\nabla)\zeta)(t,x),
 $$
where we have used the fact that $\textrm{div}(v)=0$.
 By Lemma \ref{lem_verif}, the function
 $\{Y^{v,\zeta}_t(t,x),(t,x)\in[0,T]\times\bR^d\}$ lies in $C([0,T];H^m)\cap L^2(0,T;H^{m+1})$ and ${Z_t^{v,\zeta}(t,x)}=\nabla {Y^{v,\zeta}_t(t,x)}$.

For any $\zeta_i\in C([0,T];H^m)\cap L^2(0,T;H^{m+1})$, $i=1,2$,
put
$$
(\delta {Y^{v,\zeta}},\,\delta {Z^{v,\zeta}},\, \delta  \zeta):=({Y^{v,\zeta_1}}-{Y^{v,\zeta_2}},\,
{Z^{v,\zeta_1}}-{Z^{v,\zeta_2}},\,\zeta_1-\zeta_2).
$$
By Lemma \ref{lem_verif} and Remark \ref{rmk2 after lem SIO}, we have
\begin{align}
    &\|\delta {Y^{v,\zeta}_s}(s)\|_m^2+\nu\int_s^T\|\delta {Z_r^{v,\zeta}}(r)\|_m^2\,dr
    \nonumber\\
    =&
        \int_s^T \!\!\!2\langle \delta {Z_r^{v,\zeta}}v(r),\,\delta  {Y_r^{v,\zeta}}(r)\rangle_{m-1,m+1}\, dr
     -\int_s^T\!\!\!2\langle \textbf{P}^{\perp}\left((v\cdot\nabla)\delta\zeta\right)(r),
        \, \delta {Y_r^{v,\zeta}}(r)\rangle_{m-1,m+1} \,dr
    \nonumber\\
    \leq &\
        \frac{\nu}{2}\int_s^T(\|\delta {Y_r^{v,\zeta}}(r)\|_{m}^2+\|\delta {Z_r^{v,\zeta}}(r)\|_m^2)\,dr
        \nonumber\\
        &\
        +C(\nu)\left(
        \int_s^T\!\!\!\left( \|v(r)\|_m^2\|\delta {Y_r^{v,\zeta}}(r)\|_m^2
        + \|v(r)\|_{m}^2\|\delta\zeta(r)\|_{m}^2\right)\,dr
        \right).\nonumber
\end{align}
Using Gronwall inequality, we obtain
\begin{equation}
  \begin{split}
    \sup_{s\in[t,T]}\|\delta {Y_s^{v,\zeta}}(s)\|_m^2
    +\int_t^T\|\delta {Z_r^{v,\zeta}}(r)\|_m^2\,dr
    \leq
     C (T-t)\|\delta \zeta\|^2_{C([t,T];H^{m})}
  \end{split}
\end{equation}
with the constant $C$ depending on $m,d,\nu,\|v\|_{C([0,T];H^m)}$ and $T$. Then, by the contraction mapping principle we can choose a small enough positive constant $\eps\leq T$ depending only on  $m,d,\nu,\|v\|_{C([0,T];H^m)}$ and $T$, such that there exists a unique function $\bar{\zeta}\in C([T-\eps,T];H^m)\cap L^2(T-\eps,T;H^{m+1})$ satisfying
$$
\left({Y_r^{v,\bar{\zeta}}(r,x)},\,{Z_r^{v,\bar{\zeta}}(r,x)}\right)
=\left(\bar{\zeta},\nabla\bar{\zeta}\right)(r,x),
\textrm{ in }C([T-\epsilon,T];H^m)\times L^2(T-\epsilon,T;H^{m}).$$
Then by Lemmas \ref{lem singular operator} and \ref{lem_verif}, we have for almost all $x\in\bR^d$,
  \begin{align*}
      {Z_t^{v,\bar{\zeta}}}(t,x)
    =\nabla{Y_t^{v,\bar{\zeta}}}&(t,x), \big( Y^{v,\bar{\zeta}}_r(t,x),Z^{v,\bar{\zeta}}_r(t,x)\big)
    =\big({Y_r^{v,\bar{\zeta}}},{Z_r^{v,\bar{\zeta}}}\big)(r,X_r(t,x)),a.s.,\\
  {Y_r^{v,\bar{\zeta}}}(r,X_r(t,x))
    =&\,G(X_T(t,x))-\sqrt{\nu}\int_r^T\!\!{Z_s^{v,\bar{\zeta}}}(s,X_s(t,x))\,dW_s
\\
  &+\int_r^T \!\!\!\left[ f(s,X_s(t,x))-\textbf{P}^{\perp}((v\cdot \nabla){Y_s^{v,\bar{\zeta}}})(s,X_s(t,x))  \right]\,ds,\,a.s..
  \end{align*}

  For each $(t,x)\in[T-\eps,T)\times\bR^d$, define the following equivalent probability measure:
   $$
   d\mathbb{Q}^{t,x}
   :=\exp{\left(   -\frac{1}{\sqrt{\nu}}\int_{t}^T v(s,X_s(t,x))\,dW_s
   -\frac{1}{2\nu}\int_{t}^T|v(s,X_s(t,x))|^2\,ds  \right)  }\, d\mathbb{P}.
   $$
  Then  FBSDS \eqref{Eq FBSDE1 in thm} reads
  \begin{equation}\label{Eq FBSDE1 in thm_2}
  \left\{\begin{array}{l}
  \begin{split}
  dX_s(t,x)&
            =\sqrt{\nu}\,dW'_s,\quad s\in[t,T];\\
  X_t(t,x)&=x;\\
  -dY_s(t,x)&
            =\left[f+Z_sv+\widetilde{Y}_0\right](s,X_s(t,x))\,ds
            -\sqrt{\nu}Z_s(t,x)\,dW'_s;\\
    Y_T(t,x)&=G(X_T(t,x));\\
  -d\widetilde{Y}_s(t,x)&=\frac{27}{2s^3}\sum_{i,j=1}^dv^i Y_t^j(t,x+B_s)\left(B^i_{\frac{2s}{3}}-B^i_{\frac{s}{3}}\right)
              \left(B_s^j-B_{\frac{2s}{3}}^j\right)B_{\frac{s}{3}} \,ds\\
              &\ \ -\widetilde{Z}_s(t,x)dB_s,\quad s\in(0,\infty);                          \\
    \widetilde{Y}_{\infty}(t,x)&=0,\ \
    \end{split}
  \end{array}\right.
\end{equation}
   where $(W', \mathbb{Q}^{t,x})$ is a standard Brownian motion. By taking the divergence operator on both sides of the BSDE in the finite time interval in the above FBSDS, we deduce that
   $$\{Y_r^{v,\bar{\zeta}}(r,x),\,(r,x)\in[t,T]\times\bR^d\}\quad \in \quad   C([t,T];H^m_{\sigma})\cap L^2(t,T;H^{m+1}_{\sigma}).$$

    On the other hand, by Lemma  \ref{lem_verif} and  Remarks \ref{rmk_m=2} and \ref{rmk2 after lem SIO}, we have
\begin{align*}
    &\|{Y_s^{v,\bar{\zeta}}}(s)\|_{m}^2+\nu\int_s^T\|{Z_r^{v,\bar{\zeta}}}(r)\|_m^2\,dr
    \\
    =&\
    \|G\|_m^2+
        \int_s^T\!\!\!2\left(\langle (v\cdot\nabla){Y_r^{v,\bar{\zeta}}}(r),\,
        {Y_r^{v,\bar{\zeta}}}(r)\rangle_{m-1,m+1}
        +\langle f(r),\,{Y_r^{v,\bar{\zeta}}}(r)\rangle_{m-1,m+1}\right)\,dr
    \\
    &-\int_s^T\!\!\!2\langle \textbf{P}^{\perp}\textrm{div}\,\left(v\otimes {Y_r^{v,\bar{\zeta}}}
    \right)(r),\,
        {Y_r^{v,\bar{\zeta}}}(r)\rangle_{m-1,m+1} \,dr
    \\
    &
    \textrm{(in view of Remark \ref{rmk2 after lem SIO})}
    \\
        =&\
    \|G\|_m^2+
        \int_s^T\!\!\!2\left(\langle (v\cdot\nabla){Y_r^{v,\bar{\zeta}}}(r),\,
        {Y_r^{v,\bar{\zeta}}}(r)\rangle_{m-1,m+1}
        +\langle f(r),\,{Y_r^{v,\bar{\zeta}}}(r)\rangle_{m-1,m+1}\right)\,dr
    \\
    \leq&\
    \|G\|_m^2+
    C\left(
    \int_s^T\|v(r)\|^2_m\|{Y_r^{v,\bar{\zeta}}}(r)\|_m^2\,dr+\frac{1}{\nu}\int_s^T\|f(r)\|_{m-1}^2\,dr
    \right)
    \\
    &\ +\frac{\nu}{2}\int_s^T(\|{Y_r^{v,\bar{\zeta}}}(r)\|_m^2
    +\|{Z_r^{v,\bar{\zeta}}}(r)\|_{m}^2)\,dr
\end{align*}
which together with Gronwall inequality implies
\begin{align}
  %\begin{split}
    &\sup_{s\in[t,T]}\|{Y_s^{v,\bar{\zeta}}}(s)\|_{m}^2
    +\frac{\nu}{2}\int_t^T\|{Z_r^{v,\bar{\zeta}}}(r)\|_m^2\,dr
    \nonumber\\
    \leq &\ C\left(\|f\|^2_{L^2(0,T;H^{m-1})}+\|G\|_m^2\right)
    \exp{\left(C(\|v\|^2_{C([t,T];H^m)}+\nu)(T-t)\right)}.
  \label{eq est in thm}
  %\end{split}
\end{align}
Hence, through a bootstrap argument, we conclude that
there exists a unique function $\widetilde{\zeta}\in C([0,T];H_{\sigma}^m)$
 satisfying $({Y_r^{v,\widetilde{\zeta}}(r,x)},{Z_r^{v,\widetilde{\zeta}}(r,x)})
 =(\widetilde{\zeta},\nabla\widetilde{\zeta})(r,x)$ in $C([0,T];H_{\sigma}^m)\times L^2(0,T;H_{\sigma}^m)$, and again by Lemma \ref{lem_verif}, we conclude that $$(X^{v},Y^{v},Z^{v},\widetilde{Y}^{v}_0)
 :=(X^{v,\widetilde{\zeta}},Y^{v,\widetilde{\zeta}},Z^{v,\widetilde{\zeta}},\widetilde{Y}^{v,\widetilde{\zeta}}_0)$$
 is the unique $H^m$-solution of the following FBSDS:
  \begin{equation}\label{Eq FBSDE2 in thm}
  \left\{\begin{array}{l}
  \begin{split}
  dX_s(t,x)&
            =v(s,X_s(t,x))\,ds+\sqrt{\nu}\,dW_s,\quad 0\leq t\leq s\leq T;\\
  X_t(t,x)&=x;\\
  -dY_s(t,x)&
            =\left[f(s,X_s(t,x))+\widetilde{Y}_0(s,X_s(t,x))\right]\,ds-\sqrt{\nu}Z_s(t,x)\,dW_s;\\
    Y_T(t,x)&=G(X_T(t,x));\\
  -d\widetilde{Y}_s(t,x)&=\frac{27}{2s^3}\sum_{i,j=1}^dv^iY_t^j(t,x+B_s)\left(B^i_{\frac{2s}{3}}-B^i_{\frac{s}{3}}\right)
              \left(B_s^j-B_{\frac{2s}{3}}^j\right)B_{\frac{s}{3}} \,ds\\
              &\ \ -\widetilde{Z}_s(t,x)dB_s,\quad s\in(0,\infty);                          \\
    \widetilde{Y}_{\infty}(t,x)&=0.
    \end{split}
  \end{array}\right.
\end{equation}

Choose two positive real numbers $R$ and $\eps$ ($\eps<T$) whose values are to be determined later, and define
\begin{align*}
U_R^{\eps}:=\Big\{
u&\,\in \,C([T-\eps,T];H^m_{\sigma})\cap L^2(T-\eps,T;H^{m+1}_{\sigma}):
\\
&\|u\|^2_{C([T-\eps,T];H^m_{\sigma})}
+\frac{\nu}{2}\|\nabla u\|^2_{L^2(T-\eps,T;H^{m+1}_{\sigma})}   \leq R^2
\Big\}.
\end{align*}
For any $v\in U_R^{\eps}$, there holds the following estimate by \eqref{eq est in thm}:
\begin{align}
  %\begin{split}
    &\sup_{s\in[T-\eps,T]}\|{Y^v_s}(s)\|_{m}^2
    +\frac{\nu}{2}\int_{T-\eps}^T\|{Z^v_s}(s)\|_m^2\,dr
    \leq
    \ C(m,d,\nu,\|f\|^2_{L^2(0,T;H^{m-1})},\|G\|_m^2,T)e^{CR^2\eps}.
    \label{eq est2 in thm}
  %\end{split}
\end{align}
Choosing $R$ to be big enough and $\eps$ to be small enough, we have
$$
\sup_{s\in[T-\eps,T]}\|{Y^v_s}(s)\|_{m}^2+\frac{\nu}{2}\int_{T-\eps}^T\|{Z^v_r}(r)\|_m^2\,dr\leq R^2.
$$
On the other hand, for any $v_1,v_2\in U_R^{\eps}$, setting
$$(\delta {Y^v},\delta {Z^v},\delta v):=({Y^{v_1}}-{Y^{v_2}},{Z^{v_1}}-{Z^{v_2}},v_1-v_2),$$ we have
\begin{align*}
 % \begin{split}
    &\|\delta {Y^v_s}(s)\|_m^2+\nu\int_s^T\|\delta {Z^v_r}(r)\|_m^2\,dr
    \\
    =&
    \  2\int_s^T\!\!\!\left(\langle (\delta v\cdot\nabla) {Y_r^{v_1}}(r),\, \delta {Y^v_r}(r)    \rangle_{m-1,m+1}
    +\langle  (v_2\cdot\nabla)\delta {Y^v_r}    (r),\, \delta {Y^v_r}(r)    \rangle_{m-1,m+1}
    \right)\,dr
    \\
    \leq&
    \ C(\nu,m,d)\left(
    \int_s^T\|\delta v(r)\|_{m}^2\|{Y_r^{v_1}}(r)\|_m^2 \,dr
    +\int_s^T\|v_2(r)\|_m^2\|\delta {Y_r^v}(r)\|_m^2\,dr
    \right)
    \\
    &+\frac{\nu}{2}\int_s^T(\|\delta {Y_r^v}(r)\|_{m}^2+\|\delta {Z^v_r}(r)\|_m^2)\,dr,
  %\end{split}
\end{align*}
which together with the Gronwall-Bellman inequality, implies
\begin{equation*}
  \begin{split}
    \sup_{s\in[T-\eps,T]}\|\delta {Y^v_s}(s)\|^2_m+
    \frac{\nu}{2}\int_{T-\eps}^T\|\delta {Z^v_r}(r)\|^2_m\,dr
    \leq\
    C(\nu,m,d)R^2e^{R^2T}\eps\|\delta v\|_{C([T-\eps,T];H^m_{\sigma})}^2.
  \end{split}
\end{equation*}
Therefore, if we choose $\eps$ to be small enough, the solution map $\Psi: v(t,x)\mapsto {Y^v_t}(t,x)$ is a contraction mapping on the  complete metric space $U_R^{\eps}$ and then through a bootstrap argument, we obtain a unique function $\bar{u}\in C_{loc}((T_0,T];H^m_{\sigma})\cap L^2_{loc}(T_0,T;H^{m+1}_{\sigma})$ satisfying $({Y^{\bar{u}}_t(t,x)},{Z_t^{\bar{u}}(t,x)})
=(\bar{u}(t,x),\nabla \bar{u}(t,x))$ on $(T_0,T]\times \bR^d$ with $T_0$ depending on $\nu,m,d,T,\|G\|_m$ and $\|f\|_{L^2(0,T;H^{m-1})}$. By Proposition \ref{prop FBSDEandPDE}, Corollary \ref{cor-prop-FBSDEs} and the contraction mapping principle,
$$(X,Y,Z,\widetilde{Y}_0):=(X^{\bar{u}},Y^{\bar{u}},Z^{\bar{u}},\widetilde{Y}^{\bar{u}}_0)$$
is the unique  $H^m$-solution of FBSDS \eqref{1.1} and there holds \eqref{thm-relat-Y-Z} and \eqref{Eq thm bNS-FBSDS}.

 From Remarks \ref{rmk2 after lem SIO} and \ref{rmk after lem SIO}, we deduce that there exists some $p\in L^2_{\textrm{loc}}(T_0,T;H^m)$ such that $\widetilde{Y}_0=\nabla p$. For each $(t,x)\in (T_0,T]\times\bR^d$, define the following equivalent probability $\mathbb{Q}^{t,x}$:
    \begin{align}
   d\mathbb{Q}^{t,x}
   :=\exp\left(
   -\frac{1}{\sqrt{\nu}}\int_{t}^T Y_s(s,X_s(t,x))\,dW_s
   -\frac{1}{2\nu}\int_{t}^T|Y_s(s,X_s(t,x))|^2\,ds     \right) \, d\mathbb{P}.
   \end{align}\label{Girsanov-Trans-proof-thm-loc}
   Then we have
     \begin{equation*}
  \left\{\begin{array}{l}
  \begin{split}
  dX_s(t,x)&
            =\sqrt{\nu}\,dW'_s,\quad s\in[t,T];\\
  X_t(t,x)&=x;\\
  -dY_s(t,x)&
            =\left[(Y_s\cdot\nabla) Y_s+f+\nabla p\right](s,X_s(t,x))\,ds-\sqrt{\nu}\nabla Y_s(s,X_s(t,x))\,dW'_s;\\
    Y_T(t,x)&=G(X_T(t,x)),
    \end{split}
  \end{array}\right.
\end{equation*}
 where $(W', \mathbb{Q}^{t,x})$ is a standard Brownian motion.
 For any $\zeta\in C_c^{\infty}(\bR^{1+d};\bR^d)$, It\^o's formula yields that
 \begin{equation*}
   \zeta(s,X_s(t,x))
   =\zeta(T,X_T(t,x))-\!\!\int_{s}^T\!\!\!\!(\partial_r+\frac{\nu}{2}\Delta)\zeta(r,X_r(t,x))dr
   -\sqrt{\nu}\!\!\int_s^T\!\!\!\!\!\nabla \zeta(r,X_r(t,x))\,dW_r',
 \end{equation*}
 and thus,
 \begin{align}
   &E_{\mathbb{Q}^{t,x}}[\langle Y_t,\,\zeta\rangle (t,x)]+\nu E\bigg[\int_t^T\langle \nabla\zeta,\,\nabla Y_s\rangle (s,X_s(t,x))\,ds \bigg]\nonumber\\
 =&\,E_{\mathbb{Q}^{t,x}}\bigg[
 \int_t^T(\langle -\partial_s\zeta-\frac{\nu}{2}\Delta\zeta,\ Y_s\rangle +\langle \zeta,\ (Y_s\cdot\nabla)Y_s+\nabla p+f\rangle )(s,X_s(t,x))\,ds \nonumber\\
 &\,\,+\langle \zeta(T,X_T(t,x)),\,G(x_T(t,x))\rangle
 \bigg].\nonumber
 \end{align}
Integrating both sides in $x$, we obtain
\begin{align*}
  \langle\zeta(t),\,Y_t(t)\rangle_0
  =&\!\int_t^T\!\!\!\big[-\langle \partial_s\zeta(s),\,Y_s(s)\rangle_0
  +\langle \zeta(s),\,\frac{\nu}{2}\Delta Y_s(s)+(Y_s\cdot\nabla)Y_s(s)   +\nabla p(s)+f(s)\rangle_0\big]\,ds\\
  &+ \langle\zeta(T),\,G  \rangle_0
\end{align*}
Hence,  $(Y_t(t,x),p(t,x))$ is a strong solution to Navier-Stokes equation \eqref{backward NS} (see \cite{Temam84,rT95}). Due to the reversibility
 of the above procedure, the uniqueness of the $H^m$-solution of FBSDS \eqref{1.1} implies that of the strong solution for Navier-Stokes equation \eqref{backward NS} as well. The proof is complete.
\end{proof}
\begin{rmk}\label{rmk2-thm}
   In the above proof, Proposition \ref{prop FBSDEandPDE} plays a crucial role in the characterization of the solution of
   FBSDS \eqref{1.1} (see Corollary \ref{cor-prop-FBSDEs}). This characterization  together with the contraction  principle serves to guarantee the existence and uniqueness of the $H^m$-solution of FBSDS~\eqref{1.1}.
\end{rmk}

\subsection{Global results}\label{sec:global-reslt}
%\subsection{The case of small Reynolds numbers} \label{subsec:sReynolds}

By Lemma \ref{lem singular operator},  Proposition \ref{prop FBSDEandPDE} and Corollary \ref{cor-prop-FBSDEs}, basing on the energy equality \eqref{Eq prop Energy} we can also obtain the global results in a similar way to the energy method for the Navier-Stokes (see \cite[Page 86--134]{BertozziMajda2002}).
First, for the two dimensional case, we can obtain the following global result in a similar way to \cite[Section 3.3]{BertozziMajda2002}. We omit the proof herein.

\begin{prop}\label{prop-2d}
  Let $d=2$, $m\geq 3$ and $G\in H^m_{\sigma}$. Then FBSDS~\eqref{1.1} with $f=0$ admits a unique $H^m$-solution
  $(X,Y,Z,\widetilde{Y}_0)$ on $[0,T]$.
\end{prop}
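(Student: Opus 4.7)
The plan is to combine the local existence from Theorem \ref{thm local} with global \emph{a priori} $H^m$ estimates available in the two-dimensional setting, and then conclude by a contradiction argument on the maximal existence interval.

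First I would invoke Theorem \ref{thm local} to obtain a unique $H^m$-solution $(X,Y,Z,\widetilde{Y}_0)$ on some maximal interval $(T_0,T]\subset [0,T]$, with $u(t,x):=Y_t(t,x)$ solving the Navier-Stokes equation \eqref{backward NS} with $f=0$ and $m\geq 3>d/2+1$. The contraction estimates from the proof of Theorem \ref{thm local} (compare \eqref{eq est in thm}) guarantee that if $\|u(t,\cdot)\|_m$ is bounded as $t\downarrow T_0$, then we may restart the local construction at $T_0$ and extend the solution strictly past $T_0$, contradicting maximality unless $T_0=0$. So the whole game is to prove a uniform bound
\begin{equation*}
\sup_{t\in(T_0,T]}\|u(t)\|_m^2+\nu\int_{T_0}^T\|\nabla u(s)\|_m^2\,ds\leq C(T,\nu,\|G\|_m).
\end{equation*}

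The second step is the classical chain of two-dimensional \emph{a priori} estimates on the PDE side, which are applicable here because Theorem \ref{thm local} identifies $u$ as the strong solution of the Navier-Stokes equation and Proposition \ref{prop FBSDEandPDE} provides the energy equality \eqref{Eq prop Energy} at the level of $Y_t(t,\cdot)$ (with $\widetilde Y_0=\nabla p$ playing the role of the pressure gradient, and the crucial cancellation $\langle (u\cdot\nabla)u+\nabla p,\,u\rangle_0=0$ following from divergence-freeness). From the identity at level $m=0$ I would extract the basic energy bound $\|u(t)\|_0^2+\nu\int_{t}^T\|\nabla u\|_0^2\,ds\leq \|G\|_0^2$. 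Then, because $d=2$, the vorticity $\omega:=\partial_{x_1}u^2-\partial_{x_2}u^1$ is a scalar and satisfies a pure transport--diffusion equation
\begin{equation*}
\partial_t\omega+\tfrac{\nu}{2}\Delta\omega+(u\cdot\nabla)\omega=0,\qquad \omega(T)=\nabla\times G,
\end{equation*}
with no vortex-stretching term; testing against $|\omega|^{p-2}\omega$ and using $\nabla\cdot u=0$ gives $\|\omega(t)\|_{L^p}\leq\|\nabla\times G\|_{L^p}$ uniformly for every $p\in[2,\infty]$. By the Biot--Savart law this controls $\|\nabla u(t)\|_{L^p}$ for $p<\infty$.

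The third step is the higher-order bootstrap. Going back to the energy equality \eqref{Eq prop Energy} specialised to FBSDS \eqref{1.1} and using Remark \ref{rmk2 after lem SIO} to erase the $\nabla(-\Delta)^{-1}\textrm{div}\,\textrm{div}$ term against a divergence-free test function, one obtains
\begin{equation*}
\|u(t)\|_m^2+\nu\int_t^T\|\nabla u(s)\|_m^2\,ds=\|G\|_m^2+2\int_t^T\langle (u\cdot\nabla)u(s),\,u(s)\rangle_{m-1,m+1}\,ds.
\end{equation*}
Using Remark \ref{rmk_m=2}, commutator estimates for $(u\cdot\nabla)u$ in $H^{m-1}$, and the Gagliardo--Nirenberg inequality $\|\nabla u\|_{L^\infty}\leq C\|\nabla u\|_{L^p}^{\theta}\|u\|_m^{1-\theta}\log^{1/2}(\cdot)$ (or, more robustly for $m\geq 3$, a direct Sobolev estimate using the $L^\infty$ vorticity bound), the nonlinear term is controlled by $C(1+\|\omega\|_{L^\infty})\|u\|_m^2+\tfrac{\nu}{2}\|\nabla u\|_m^2$. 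Gronwall's inequality with the uniform bound $\|\omega\|_{L^\infty}\leq \|\nabla\times G\|_{L^\infty}$ yields the desired uniform $H^m$ control on $(T_0,T]$, forcing $T_0=0$.

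The main obstacle is the higher-order nonlinear estimate: unlike in the small-Reynolds regime, one cannot absorb $\|(u\cdot\nabla)u\|_{m-1}$ merely by the viscous dissipation, so one really has to exploit the two-dimensional structure through the vorticity to obtain a global-in-time Lipschitz-type bound on $u$. Once that ingredient is in place, the remainder (transferring the PDE estimate back to the FBSDS via the characterization in Corollary \ref{cor-prop-FBSDEs} and iterating the local construction) is routine.
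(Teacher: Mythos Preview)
Your proposal is correct and follows essentially the same approach the paper indicates: the paper omits the proof and simply refers to the energy method of \cite[Section 3.3]{BertozziMajda2002}, invoking Lemma \ref{lem singular operator}, Proposition \ref{prop FBSDEandPDE}, Corollary \ref{cor-prop-FBSDEs} and the energy equality \eqref{Eq prop Energy}. Your sketch---local existence from Theorem \ref{thm local}, the $L^0$ energy bound, the scalar vorticity transport without stretching, the bootstrap to $H^m$ via Gronwall, and continuation past $T_0$---is exactly that standard two-dimensional argument.
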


The other global result is for the case of small Reynolds numbers. Let us work on the $d$-dimensional torus $\mathbb{T}^d=\mathbb{R}^d/(L\times \mathbb{Z}^d)$ where $L>0$ is a fixed length scale. Denote by $(H^{n,q}_{\sigma}(\mathbb{T}^d;\mathbb{R}^d),\|\cdot\|_{n,q;\mathbb{T}^d})$ the $\bR^d$-valued Sobolev space on $\mathbb{T}^d$, each element of which is divergence free. For $q=2$, write $(H^{n}_{\sigma}(\mathbb{T}^d;\mathbb{R}^d),\|\cdot\|_{n;\mathbb{T}^d})$ for simplicity.
Let $f=0$ and $G\in H^{m}_{\sigma}(\mathbb{T}^d;\bR^d)$  of zero mean  for $m>d/2$.  In a similar way to Theorem \ref{thm local}, FBSDS \eqref{1.1} defined on torus admits a unique $H^m$-solution $(X,Y,Z,\widetilde{Y}_0)$ on some interval $(T_0,T]$. Moreover, we have
\begin{equation*}
  \begin{split}
    &\|Y_t(t)\|^2_{m;\mathbb{T}^d}+\nu\int_t^T\|Z_s(s)\|_{m;\mathbb{T}^d}^2 \,ds\\
    =&
    \ \|G\|_{m;\mathbb{T}^d}^2
    +2\int_t^T\langle Z_s(s)Y_s(s),\, Y_s(s)   \rangle_{m-1,m+1;\mathbb{T}^d}\,ds
    \\
    \leq&
    \ \|G\|_{m;\mathbb{T}^d}^2
    +C\int_t^T\|Y_s(s)\|_{m;\mathbb{T}^d}\|Z_s(s)
    \|_{m-1;\mathbb{T}^d}\|Y_s(s)\|_{m+1;\mathbb{T}^d} \,ds
    \\
     \leq&
    \ \|G\|_{m;\mathbb{T}^d}^2
    +C\int_t^T\|Y_s(s)\|_{m;\mathbb{T}^d}\|Z_s(s)
    \|_{m-1;\mathbb{T}^d}\Big(\|Y_s(s)\|_{m;\mathbb{T}^d}+ \|Z_s(s)
    \|_{m;\mathbb{T}^d}\Big)\,ds
    \\
    \leq&
    \ \|G\|_{m;\mathbb{T}^d}^2
    +\widetilde{C}L\int_t^T\|Y_s(s)\|_{m;\mathbb{T}^d}\|Z_s(s)\|^2_{m;\mathbb{T}^d} \,ds,
  \end{split}
\end{equation*}
where $Z_s(s,\cdot)=\nabla Y_s(s,\cdot)$ and we have used the Poincar$\acute{\textrm{e}}$ inequality
\begin{equation*}
  \begin{split}
    &\|Y_s(s,\cdot)\|_{m;\mathbb{T}^d}
    \leq C L\|\nabla Y_s(s,\cdot)\|_{m;\mathbb{T}^d},\quad  s\in[t,T],
  \end{split}
\end{equation*}
with $Y_s(s, \cdot)$ being mean zero and the constant $\widetilde{C}$ being independent of $L$.
Thus,
\begin{equation*}
  \begin{split}
    &\|Y_t(t)\|^2_{m;\mathbb{T}^d}
    +\int_t^T(\nu-\widetilde{C}L\|Y_s(s)\|_{m;\mathbb{T}^d})\|Z_s(s)\|_{m;\mathbb{T}^d}^2 \,ds
    \leq \|G\|_{m;\mathbb{T}^d}^2.
  \end{split}
\end{equation*}
If we take the Reynolds number
$
R:=L\nu^{-1}\|G\|_{m;\mathbb{T}^d}<\widetilde{C}^{-1},
$
then for this local solution $(X,Y,Z,\widetilde{Y}_0)$ we always have
$$
\|Y_t(t,\cdot)\|_{m;\mathbb{T}^d}\leq \|G\|_{m;\mathbb{T}^d},\ t\in (T_0, T].
$$
Using bootstrap arguments,  the local solution can be extended to be a global one. In summary, we have

\begin{prop}\label{prop-smal-reynld}
  Assume that $f=0$ and $G\in H^{m}_{\sigma}(\mathbb{T}^d;\bR^d)$ ($m>d/2$) is mean zero. Then FBSDS \eqref{1.1} admits a unique $H^m$-solution $(X,Y,Z,\widetilde{Y}_0)$ on some time interval $(T_0,T]$ with $Y_t(t,x)$ being spacial mean zero. Moreover,
  there exists a positive constant $R_0$ ($={\widetilde{C}}^{-1}$ as above)
  such that if the Reynolds number $R<R_0$, the  local $H^m$-solution can be extended to be a time global one and for this global $H^m$-solution we have
  $$
  \|Y_t(t)\|_{m;\mathbb{T}^d}\leq \|G\|_{m;\mathbb{T}^d},\ \textrm{for any }t\in[0,T].
  $$
\end{prop}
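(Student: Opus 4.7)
The plan is to mirror the argument of Theorem \ref{thm local} on the torus $\mathbb{T}^d$ and then exploit the Poincar\'e inequality together with a continuation argument. All the computational pieces have already been displayed immediately before the statement; what remains is to organize them into a genuine existence result.

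First, I would establish local well-posedness on the torus. Replacing $\bR^d$ with $\mathbb{T}^d$ throughout Sections 2--5, the Banach algebra property for $H^m(\mathbb{T}^d)$, the probabilistic representation for the Leray projection in Lemma \ref{lem singular operator} (transferred to the torus by periodization of the Brownian integrand, or equivalently by Fourier series), and the FBSDE analysis underlying Proposition \ref{prop FBSDEandPDE} all go through verbatim. This yields a unique $H^m$-solution $(X,Y,Z,\widetilde{Y}_0)$ of FBSDS~\eqref{1.1} on a maximal interval $(T_0,T]$ together with the identification $u(t,x):=Y_t(t,x)$ as the strong solution of the Navier-Stokes equation on $\mathbb{T}^d$. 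Spatial mean-zeroness of $Y_t(t,\cdot)$ is preserved: integrating the NS equation over $\mathbb{T}^d$ annihilates $\Delta u$, $\nabla p$, and $(u\cdot\nabla)u=\textrm{div}(u\otimes u)$ using $\nabla\cdot u=0$ and periodicity, so $\tfrac{d}{dt}\int_{\mathbb{T}^d}u\,dx=0$, and since $\int G=0$ this mean vanishes for all $t$.

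Second, I would run the energy computation displayed just before the statement. From the energy equality \eqref{Eq prop Energy} (adapted to FBSDS~\eqref{1.1}; the pressure contribution $\langle \nabla p,Y_s\rangle$ vanishes against divergence-free $Y_s$), the Banach algebra estimate of Remark \ref{rmk_m=2}, and the Poincar\'e inequality $\|Y_s(s)\|_{m;\mathbb{T}^d}\leq CL\|\nabla Y_s(s)\|_{m;\mathbb{T}^d}$ applicable because $Y_s(s,\cdot)$ is mean zero, the trilinear term is controlled by $\widetilde{C}L\|Y_s(s)\|_{m;\mathbb{T}^d}\|Z_s(s)\|_{m;\mathbb{T}^d}^2$ for a constant $\widetilde{C}$ independent of $L$. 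Rearranging yields
\[
\|Y_t(t)\|_{m;\mathbb{T}^d}^2+\int_t^T\!\bigl(\nu-\widetilde{C}L\|Y_s(s)\|_{m;\mathbb{T}^d}\bigr)\|Z_s(s)\|_{m;\mathbb{T}^d}^2\,ds\leq\|G\|_{m;\mathbb{T}^d}^2.
\]

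Third, set $R_0:=\widetilde{C}^{-1}$ and assume the Reynolds number $R=L\nu^{-1}\|G\|_{m;\mathbb{T}^d}<R_0$. A continuity argument gives the pointwise bound $\|Y_t(t)\|_{m;\mathbb{T}^d}\leq\|G\|_{m;\mathbb{T}^d}$ throughout $(T_0,T]$: the bound holds at $t=T$, and on any subinterval where it holds the coefficient $\nu-\widetilde{C}L\|Y_s(s)\|_{m;\mathbb{T}^d}$ in the estimate above is strictly positive, which forces $\|Y_t(t)\|_{m;\mathbb{T}^d}\leq\|G\|_{m;\mathbb{T}^d}$ and closes the bootstrap. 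Since the local existence time from Theorem \ref{thm local} depends only on $\nu,m,d,T$ and on the $H^m$-norm of the terminal/initial data (not on the distance to $T$), this uniform bound lets us iterate the local construction backwards and extend the solution to $[0,T]$.

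The main obstacle is the torus counterpart of Lemma \ref{lem singular operator}: one must verify that the infinite-horizon BSDE representation of $\nabla(-\Delta)^{-1}\textrm{div div}$ survives when integrands are periodic (so that the Gaussian heat-kernel manipulations become theta-kernel manipulations). The mean-zero preservation and the continuation step are comparatively routine, but both are essential for the Reynolds-number threshold to emerge cleanly from the energy inequality.
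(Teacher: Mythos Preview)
Your proposal is correct and follows essentially the same approach as the paper: local existence on the torus via the analogue of Theorem \ref{thm local}, the $H^m$ energy inequality combined with the Poincar\'e inequality to get $\|Y_t(t)\|_{m;\mathbb{T}^d}\leq\|G\|_{m;\mathbb{T}^d}$ under the Reynolds-number condition, and then a bootstrap/continuation to go global. You supply more detail than the paper on the mean-zero preservation and the continuity argument closing the bootstrap, and you correctly flag the torus adaptation of Lemma \ref{lem singular operator} as the one place requiring care; the paper itself simply asserts the torus analogue of Theorem \ref{thm local} goes through ``in a similar way'' without elaborating.
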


\begin{rmk}
  FBSDS~\eqref{1.1} is a complicated version of FBSDE~\eqref{FBSDE-burgers}, including an additional nonlinear and nonlocal term in the drift of the BSDE to keep the backward state living in the divergence-free subspace. While the additional term causes difficulty in formulating probabilistic representations, it helps us to obtain the global solutions in Propositions \ref{prop-2d} and \ref{prop-smal-reynld}.
\end{rmk}

\section{Approximation of the Navier-Stokes equation}
\label{sec:numr-apprx}
In view of FBSDS \eqref{1.1} and Theorem \ref{thm local},  the Navier-Stokes equation is approximated in this section by truncating the time interval of the BSDE associated with $\widetilde{Y}$.

\begin{lem}
  For  $k\in\mathbb{N}$, $\gamma,\alpha\in(0,1)$, $
  \gamma\leq\alpha$, there is a constant $C$  such that
  \begin{align*}
    &\|\phi\|_{C^{k,\gamma}}\leq C \|\phi\|_{C^{k,\alpha}},
    \quad \forall \phi\in C^{k,\alpha},\\
    &
    \|\phi\|_{C^{k,\alpha}}\leq C \left(\|\phi\|_{C^{k,\gamma}} +
    \|\nabla\phi\|_{C^{k}}\right),
    \quad \forall \phi\in  C^{k+1}.
  \end{align*}
\end{lem}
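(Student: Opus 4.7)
My plan is to establish both inequalities by splitting the Hölder quotient according to whether $|x-y|$ is small or large, treating the $C^{k}$ portions of the norms separately by triviality.

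For the first inequality, the non-Hölder part $\|\phi\|_{C^k}$ is clearly bounded by $\|\phi\|_{C^{k,\alpha}}$, so only the seminorm with $|\alpha_0|=k$ (writing $\alpha_0$ for a multi-index to avoid clashing with the exponent $\alpha$) must be handled. For any multi-index $\alpha_0$ with $|\alpha_0|=k$ and distinct $x,y\in\bR^d$, I would distinguish two regimes. When $|x-y|\le 1$, the hypothesis $\gamma\le\alpha$ yields $|x-y|^{-\gamma}\le|x-y|^{-\alpha}$, so
\[
\frac{|D^{\alpha_0}\phi(x)-D^{\alpha_0}\phi(y)|}{|x-y|^{\gamma}}\le \frac{|D^{\alpha_0}\phi(x)-D^{\alpha_0}\phi(y)|}{|x-y|^{\alpha}}\le \|\phi\|_{C^{k,\alpha}}.
\]
When $|x-y|>1$, one just uses the crude bound $|D^{\alpha_0}\phi(x)-D^{\alpha_0}\phi(y)|\le 2\|D^{\alpha_0}\phi\|_{C}\le 2\|\phi\|_{C^k}\le 2\|\phi\|_{C^{k,\alpha}}$ together with $|x-y|^{-\gamma}\le 1$.

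For the second inequality, again the $C^k$ part is controlled by $\|\phi\|_{C^{k,\gamma}}$, so I focus on the top-order Hölder seminorm. I would again split by $|x-y|$. In the regime $|x-y|\le 1$, I apply the mean value theorem to $D^{\alpha_0}\phi\in C^1$:
\[
|D^{\alpha_0}\phi(x)-D^{\alpha_0}\phi(y)|\le \|\nabla D^{\alpha_0}\phi\|_{C}\,|x-y|\le \|\nabla \phi\|_{C^k}\,|x-y|,
\]
and since $|x-y|^{1-\alpha}\le 1$ we have $|x-y|\le |x-y|^{\alpha}$, yielding the quotient bound $\|\nabla\phi\|_{C^k}$. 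In the regime $|x-y|>1$, the inequality $\gamma\le\alpha$ combined with $|x-y|>1$ gives $|x-y|^{\gamma-\alpha}\le 1$, so
\[
\frac{|D^{\alpha_0}\phi(x)-D^{\alpha_0}\phi(y)|}{|x-y|^{\alpha}}=\frac{|D^{\alpha_0}\phi(x)-D^{\alpha_0}\phi(y)|}{|x-y|^{\gamma}}\cdot|x-y|^{\gamma-\alpha}\le \|\phi\|_{C^{k,\gamma}}.
\]
Combining the two regimes and summing over $|\alpha_0|=k$ with a dimensional constant produces the desired bound.

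There is no substantial obstacle: the argument is essentially elementary and relies only on (i) the monotonicity of the Hölder exponent when $|x-y|\le 1$ versus $|x-y|>1$ behaves oppositely, and (ii) the mean value theorem to pass from a gradient bound to a Lipschitz-type bound that is then absorbed into the $\alpha$-Hölder seminorm when $|x-y|\le 1$. The constant $C$ depends only on $k$ and $d$ (from summation over multi-indices).
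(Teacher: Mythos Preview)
Your argument is correct. Both inequalities reduce to controlling the top-order H\"older seminorm, and your case split at $|x-y|=1$ handles each regime cleanly: monotonicity of $t\mapsto t^{-\gamma}$ versus $t^{-\alpha}$ for the first inequality, and the mean value theorem plus the same monotonicity for the second. The constant indeed depends only on $k$ and $d$.

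The paper does not give a proof at all; it simply records the lemma as ``an immediate consequence of the interpolation inequalities of Gilbarg and Trudinger \cite[Lemma 6.32]{GilbargTrud1983}.'' Your approach is therefore more elementary and self-contained: you bypass the general multiplicative interpolation machinery and obtain the additive bound directly by a two-regime split. The citation route has the advantage of situating the estimate within a standard reference (and the interpolation inequalities there would also yield sharper multiplicative versions if needed), whereas your direct argument makes the dependence on the assumptions completely transparent and requires no external input.
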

It is an immediate consequence of the interpolation inequalities of Gilbarg and Trudinger \cite[Lemma 6.32]{GilbargTrud1983}.

To approximate the Navier-Stokes equations, we truncate the time interval of the infinite-time-interval BSDE of FBSDS \eqref{1.1}.
\begin{lem}\label{lem-numr}
  For any $\phi,\psi\in C^{k,\alpha}$, $k\in\mathbb{Z}^+$, $\alpha\in(0,1)$, the following BSDE
     \begin{equation}\label{bsde-lem-numr}
  \left\{\begin{array}{l}
  \begin{split} -d\widetilde{Y}_s(x)&=\frac{27}{2s^3}\sum_{i,j=1}^d\phi^i\psi^j(x+B_s)\left(B_s^j-B_{\frac{2s}{3}}^j\right)
\left(B^i_{\frac{2s}{3}}-B^i_{\frac{s}{3}}\right)
B_{\frac{s}{3}} \,ds\\
              &\ \ -\widetilde{Z}_s(t,x)dB_s,\ s\in(0,\infty)                          \\
	      \widetilde{Y}_{\infty}(x)&=0\ \
    \end{split}
  \end{array}\right.
\end{equation}
is well-posed on the time interval $(0,\infty)$ and $\widetilde{Y}_0(x):=\lim_{\eps\downarrow 0} E\widetilde{Y}_{\eps}(x)$ exists for each $x\in\bR^d$. Moreover,
\begin{equation*}
	\widetilde{Y}_0=\nabla (-\Delta)^{-1}\textrm{div }\textrm{div}(\phi\otimes\psi)
\in C^{k-1,\frac{\alpha}{2}},
\end{equation*}
and
\begin{equation}
\label{lem-numr-estimate}
\|\widetilde{Y}_0\|_{C^{k-1,\frac{\alpha}{2}}}\leq \,C\|\phi\|_{C^{k,\alpha}}\|\psi\|_{C^{k,\alpha}},
\end{equation}
with the positive constant $C$ independent of $\phi$ and $\psi$.
\end{lem}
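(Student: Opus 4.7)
The proof parallels that of Lemma \ref{lem singular operator}, with the main new content being the H\"older estimate \eqref{lem-numr-estimate}. First, since $\phi,\psi\in C^{k,\alpha}$ are bounded, the integrability bound from the proof of Lemma \ref{lem singular operator} applies with $\|\phi^i\psi^j\|_{L^\infty}\leq \|\phi\|_{C^{k,\alpha}}\|\psi\|_{C^{k,\alpha}}$, giving well-posedness of BSDE \eqref{bsde-lem-numr} on $(0,\infty)$. To establish the existence of $\widetilde{Y}_0(x)=\lim_{\eps\downarrow 0}E\widetilde{Y}_\eps(x)$ together with its Newtonian-potential representation, I would mollify $\phi,\psi$ to smooth compactly-supported $\phi_n,\psi_n$ with $C^{k,\alpha}$-norms uniformly bounded and with $\phi_n\to\phi$, $\psi_n\to\psi$ locally in $C^{k}$. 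Applying Lemma \ref{lem singular operator} to each pair $(\phi_n,\psi_n)$ and using dominated convergence to interchange the $n$- and $\eps$-limits, the identity
$\widetilde{Y}_0=\nabla(-\Delta)^{-1}\textrm{div}\,\textrm{div}(\phi\otimes\psi)$
is inherited for $(\phi,\psi)\in C^{k,\alpha}\times C^{k,\alpha}$.

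For the H\"older estimate, I would follow Remark \ref{rmk2 after lem SIO} to rewrite $\widetilde{Y}_0=-\textbf{P}^{\perp}(\textrm{div}(\phi\otimes\psi))$. Each component of $\textrm{div}(\phi\otimes\psi)$ is $\partial_j(\phi^i\psi^j)=\psi^j\partial_j\phi^i+\phi^i\partial_j\psi^j$, a product of a $C^{k,\alpha}$ factor with a $C^{k-1,\alpha}$ factor; by the Leibniz rule applied at each order $\leq k-1$, this product lies in $C^{k-1,\alpha}$ with norm dominated by $\|\phi\|_{C^{k,\alpha}}\|\psi\|_{C^{k,\alpha}}$. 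It therefore suffices to show that the operator $\textbf{P}^{\perp}=\nabla(-\Delta)^{-1}\textrm{div}$ sends $C^{k-1,\alpha}$ continuously into $C^{k-1,\alpha/2}$, which is the main content of the proof.

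The main obstacle is this Calder\'on--Zygmund H\"older estimate. The kernel of $\textbf{P}^{\perp}$ is a matrix of second-order Riesz kernels, homogeneous of degree $-d$ and with vanishing mean on spheres, so the convolution is defined in the principal value sense. For $g\in C^{k-1,\alpha}$ and $|x-x'|=h$, the standard decomposition $\int=\int_{|y|\leq 2h}+\int_{|y|>2h}$ controls the near-field via the H\"older seminorm of $g$ and the bound $\int_{|y|\leq 2h}|K(y)|\,dy<\infty$, while the far-field is estimated using the kernel-difference inequality $|K(x-y)-K(x'-y)|\leq Ch/|y|^{d+1}$. Balancing the two regions, together with the preceding interpolation lemma to absorb an $L^\infty$ bound on the output plus a gradient bound into the $C^{k-1,\alpha/2}$ norm, yields the claimed estimate; the $\alpha/2$ exponent reflects the suboptimal balance one obtains in this way. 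For higher orders $k\geq 2$, I would differentiate under the integral up to order $k-1$ and repeat the same argument on $D^{k-1}g$, with all constants tracking $\|\phi\|_{C^{k,\alpha}}\|\psi\|_{C^{k,\alpha}}$.
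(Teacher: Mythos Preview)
Your route through $\mathbf{P}^\perp$ as a Calder\'on--Zygmund operator has a genuine gap. The functions $\phi,\psi\in C^{k,\alpha}$ carry no decay assumption, so $g:=\textrm{div}(\phi\otimes\psi)\in C^{k-1,\alpha}$ is merely bounded H\"older on all of $\bR^d$. For such inputs the second-order Riesz kernel (homogeneous of degree $-d$) does not define a convergent far-field integral: $\int_{|z|>1}|K(z)|\,|g(x-z)|\,dz$ diverges, so $\mathbf{P}^\perp g(x)$ is not even well-defined pointwise. The standard CZ argument you outline controls only the H\"older \emph{seminorm} of the output (and in fact with the same exponent $\alpha$, not $\alpha/2$), never the $L^\infty$ norm---your near-field bound ``$\int_{|y|\le 2h}|K(y)|\,dy<\infty$'' is also false for a degree-$(-d)$ kernel, and the far-field balance gives only an $O(1)$ remainder, not $O(h^\beta)$. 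Your appeal to the preceding interpolation lemma presupposes exactly the $L^\infty$ bound that is missing. The mollification step does not rescue this: the CZ bound for compactly supported $g_n$ depends on the support size and is not uniform in $n$.

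The paper sidesteps all of this by never leaving the probabilistic representation. The three independent Brownian factors in the generator give, for $s\ge 1$, an integrand of order $\|\phi\otimes\psi\|_{L^\infty}\,s^{-3/2}$, which is what replaces a decay hypothesis on $\phi,\psi$ and yields both the well-posedness at infinity and an $L^\infty$ bound on $\widetilde{Y}_0$. For $s\in(0,1)$, one integration by parts together with the subtraction $\nabla(\phi^i\psi^j)(x+B_s)-\nabla(\phi^i\psi^j)(x)$ (the constant term has zero expectation against the two remaining independent increments) produces a factor $|B_s|^\alpha$ and an integrand of order $\|\phi\otimes\psi\|_{C^{1,\alpha}}\,s^{-1+\alpha/2}$. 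This establishes the existence of $\widetilde{Y}_0(x)$ and its sup-norm bound directly. The H\"older seminorm is obtained by applying the same $[\eps,1]\cup[1,N]$ splitting to $\widetilde{Y}_0(x)-\widetilde{Y}_0(y)$: on $[1,N]$ one interpolates $|\phi^i\psi^j(x+B_s)-\phi^i\psi^j(y+B_s)|\le C|x-y|^{\alpha/2}\|\phi\otimes\psi\|_{C^{0,\alpha}}^{1/2}\|\phi\otimes\psi\|_{L^\infty}^{1/2}$; on $[\eps,1]$ one telescopes the product and uses the elementary double-H\"older bound $|h(x+B_s)-h(x)-h(y+B_s)+h(y)|\le 4\|h\|_{C^{0,\alpha}}|B_s|^{\alpha/2}|x-y|^{\alpha/2}$ for $h=\phi^i,\psi^j,\nabla\phi^i,\nabla\psi^j$. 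The exponent $\alpha/2$ is forced by this splitting, and higher orders follow by differentiating the generator.
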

\begin{proof}[Sketched only]
For any $\eps\in(0,1)$ and $N\in(1,\infty)$, in a similar way to the proof of Lemma \ref{lem singular operator},
  \begin{align}
    &\bigg| E\int_{\eps}^{N}
    \frac{27}{2s^3}\phi^i\psi^j(x+B_s)\left(B_s^j-B_{\frac{2s}{3}}^j\right)
\left(B^i_{\frac{2s}{3}}-B^i_{\frac{s}{3}}\right)
B_{\frac{s}{3}} \,ds\bigg|
\nonumber\\
=&\,\bigg|
E\bigg(\int_{\eps}^1+\int_1^N\bigg)
    \frac{27}{2s^3}\phi^i\psi^j(x+B_s)\left(B_s^j-B_{\frac{2s}{3}}^j\right)
\left(B^i_{\frac{2s}{3}}-B^i_{\frac{s}{3}}\right)
B_{\frac{s}{3}} \,ds\bigg|
\nonumber\\
=&\,\bigg|
E\int_{\eps}^1\frac{9}{2s^2}\left[\nabla(\phi^i\psi^j)(x+B_s)
-\nabla(\phi^i\psi^j)(x)\right]
\left(B_s^j-B_{\frac{2s}{3}}^j\right)
\left(B^i_{\frac{2s}{3}}-B^i_{\frac{s}{3}}\right)\,ds
\nonumber\\
&
+
E\int_1^N
    \frac{27}{2s^3}\phi^i\psi^j(x+B_s)\left(B_s^j-B_{\frac{2s}{3}}^j\right)
\left(B^i_{\frac{2s}{3}}-B^i_{\frac{s}{3}}\right)
B_{\frac{s}{3}} \,ds
\bigg|\nonumber\\
\leq &\,
C \|\phi\otimes\psi\|_{C^{1,\alpha}}
\int_{\eps}^1\frac{1}{s^{1-\frac{\alpha}{2}}}\,ds
+C\|\phi\otimes\psi\|_{L^{\infty}}
\int_1^N\frac{1}{s^{\frac{3}{2}}}\,ds
\nonumber\\
\leq &\,
C \|\phi\otimes\psi\|_{C^{1,\alpha}}
\left(
2-\eps^{\frac{\alpha}{2}}-\frac{1}{\sqrt{N}}
\right),\quad i,j=1,\cdots,d.  \label{est-lem-numr-eps-N}
  \end{align}
   Letting $N\rightarrow \infty$ and $\eps\rightarrow 0$, we conclude that BSDE \eqref{bsde-lem-numr} is well-posed on the time interval $(0,\infty)$ and $\widetilde{Y}_0(x):=\lim_{\eps\downarrow 0} E \widetilde{Y}_{\eps}(x)$ exists for each $x\in\bR^d$.

  On the other hand, for each $x,y\in\bR^d$, $i,j=1,\cdots,d$,
  \begin{align}
    &\bigg|E\int_1^{N}\frac{27}{2s^3}\left(\phi^i\psi^j(x+B_s)-\phi^i\psi^j(y+B_s)\right)
    \left(B_s^j-B_{\frac{2s}{3}}^j\right)
\left(B^i_{\frac{2s}{3}}-B^i_{\frac{s}{3}}\right)
B_{\frac{s}{3}} \,ds\bigg|
\nonumber\\
&\leq\,C|x-y|^{\frac{\alpha}{2}}
 \|\phi\otimes\psi\|_{C^{1,{\alpha}}}^{\frac{1}{2}}
\|\phi\otimes\psi\|_{L^{\infty}}^{\frac{1}{2}}
 E\int_1^{N}\frac{1}{s^3}
 \left|\left(B_s^j-B_{\frac{2s}{3}}^j\right)
\left(B^i_{\frac{2s}{3}}-B^i_{\frac{s}{3}}\right)
B_{\frac{s}{3}}\right|\,ds
\nonumber\\
&\leq\,C|x-y|^{\frac{\alpha}{2}}
 \|\phi\otimes\psi\|_{C^{1,\alpha}}\left(1-\frac{1}{\sqrt{N}}  \right)
 \label{est-lem-numr-N}
  \end{align}
  and
  \begin{align}
    &\bigg| E\int_{\eps}^1\frac{9}{2s^2}\left[\nabla(\phi^i\psi^j)(x+B_s)
-\nabla(\phi^i\psi^j)(y+B_s)\right]
\left(B_s^j-B_{\frac{2s}{3}}^j\right)
\left(B^i_{\frac{2s}{3}}-B^i_{\frac{s}{3}}\right)\,ds\bigg|
\nonumber\\
&=\,\bigg|
E\int_{\eps}^1\frac{9}{2s^2}\nabla\left[
\phi^i(x+B_s)\left(\psi^j(x+B_s)-\psi^j(y+B_s)\right)
+\left(\phi^i(x+B_s)-\phi^i(y+B_s)\right)\psi^j(y+B_s)\right]
\nonumber\\
&\quad\quad\quad\quad
\left(B_s^j-B_{\frac{2s}{3}}^j\right)
\left(B^i_{\frac{2s}{3}}-B^i_{\frac{s}{3}}\right)\,ds\bigg|
\nonumber\\
&=\,\bigg|
E\int_{\eps}^1\frac{9}{2s^2}\nabla\Big[
\left(\phi^i(x+B_s)-\phi^i(x)\right)\left(\psi^j(x+B_s)-\psi^j(y+B_s)\right)
\nonumber\\
&\quad\quad\quad\quad
+\phi^i(x)\left( \psi^j(x+B_s)-\psi^j(x)-\psi^j(y+B_s)+\psi^j(y)  \right)
\nonumber\\
&\quad\quad\quad\quad
+\left(\phi^i(x+B_s)-\phi^i(y+B_s)\right)\left(\psi^j(y+B_s)-\psi^j(y)\right)
\nonumber\\
&\quad\quad\quad\quad
+\psi^j(y)\left( \phi^i(x+B_s)-\phi^i(x)-\phi^i(y+B_s)+\phi^i(y)  \right)
\Big]
%\nonumber\\
%&\quad\quad\quad\quad
\left(B_s^j-B_{\frac{2s}{3}}^j\right)
\left(B^i_{\frac{2s}{3}}-B^i_{\frac{s}{3}}\right)\,ds\bigg|
\nonumber\\
&\leq\,
C |x-y|^{\frac{\alpha}{2}}
\|\phi\|_{C^{1,\alpha}}\|\psi\|_{C^{1,\alpha}}
E\int_{\eps}^1\frac{1}{s^2}\left||B_s|^{\frac{\alpha}{2}}
\left(B_s^j-B_{\frac{2s}{3}}^j\right)
\left(B^i_{\frac{2s}{3}}-B^i_{\frac{s}{3}}\right)\right|\,ds
\nonumber\\
&\leq\,
C |x-y|^{\frac{\alpha}{2}}
\|\phi\|_{C^{1,\alpha}}\|\psi\|_{C^{1,\alpha}}
\left(
1-\eps^{\frac{\alpha}{4}}
\right),\label{est-lem-numr-eps}
  \end{align}
  where for $h=\phi^i$, $\nabla\phi^i$, $\psi^j$ or $\nabla\psi^j$, we note that
  \begin{align*}
    &\left| h(x+B_s)-h(x)-h(y+B_s)+h(y)\right|
    \\
    \leq
    &
     \left| h(x+B_s)-h(x)-h(y+B_s)+h(y)\right|^{\frac{1}{2}}
    \left(\left| h(x+B_s)-h(x)\right|^{\frac{1}{2}}+\left|h(y+B_s)+h(y)\right|^{\frac{1}{2}}
    \right)
    \\
    \leq
    &
    4\left\|  h  \right\|_{C^{1,\alpha}}
    \left| B_s\right|^{\frac{\alpha}{2}}|x-y|^{\frac{\alpha}{2}}.
  \end{align*}
  Hence, combining \eqref{est-lem-numr-eps-N}, \eqref{est-lem-numr-N} and \eqref{est-lem-numr-eps}, we obtain
  \begin{align*}
    \|\widetilde{Y}_0\|_{C^{0,\frac{\alpha}{2}}}\leq \,C\|\phi\|_{C^{1,\alpha}}\|\psi\|_{C^{1,\alpha}}.
  \end{align*}
  Taking $k-1$-th derivatives in the above arguments, we prove \eqref{lem-numr-estimate}.
\end{proof}

\begin{rmk}\label{rmk-lem--eps-NN}
  In view of \eqref{est-lem-numr-N} and \eqref{est-lem-numr-eps} of the above proof, we can deduce easily that for any $\eps\in(0,1)$ and $N\in(1,\infty)$,
  \begin{align*}
    \left\|E\left[\widetilde{Y}_{\eps}- \widetilde{Y}_{N}\right]\right\|_{C^{k-1,\frac{\alpha}{2}}}
    \leq\,C \|\phi\|_{C^{k,\alpha}}\|\psi\|_{C^{k,\alpha}}
    \left(
2-\eps^{\frac{\alpha}{4}}-\frac{1}{\sqrt{N}}
\right)
  \end{align*}
  with the constant $C$ independent of $\phi$, $\psi$, $\eps$ and $N$. Moreover, in a similar way to the above proof, we obtain
  \begin{align*}
    \|E\left[\widetilde{Y}_{\eps}- \widetilde{Y}_{N}-\widetilde{Y}_0\right]\|_{C^{k-1,\frac{\alpha}{2}}}\leq \,C\left( \eps^{\frac{\alpha}{4}}+\frac{1}{\sqrt{N}} \right)\|\phi\|_{C^{k,\alpha}}\|\psi\|_{C^{k,\alpha}},
  \end{align*}
  with the constant $C$ independent of $\phi$, $\psi$, $\eps$ and $N$.
\end{rmk}

Define the heat kernel
$$
\mathcal{H}^{\nu}(t,x):
=\frac{1}{(2\pi \nu t)^{\frac{d}{2}}} {\rm exp}\Big(-\frac{|x|^2}{2\nu t}\Big),
$$
and the convolution
\begin{align*}
  \mathcal{H}^{\nu}(t)\ast g(x)
  =\int_{\bR^d}\mathcal{H}^{\nu}(t,x-y)g(y)\,dy,\quad \forall\,g\in C(\bR^d).
\end{align*}

\begin{lem}\label{lem-HK}
There is a constant $C$ such that for any $\phi\in C^{k,\gamma}$ with $k\in \mathbb{N}$ and $\gamma\in(0,1)$,
\begin{align}
\|\mathcal{H}^{\nu}(t)\ast\phi\|_{C^{k+1,\gamma}}
\leq & \,C\left(1+\frac{1}{\sqrt{\nu t}}\right)\|\phi\|_{C^{k,\gamma}};
\label{est-lem-app-Hk-1}
\\
\sum_{i,j=1}^d\|\partial_{x^i}\partial_{x^j} \mathcal{H}^{\nu}(t)\ast\phi\|_{C^{k}}
\leq & \,\frac{C}{(\nu t)^{1-\frac{\gamma}{4}}} \|\phi\|_{C^{k,\frac{\gamma}{2}}};
\label{est-lem-app-Hk-2}
\\
\|\mathcal{H}^{\nu}(t)\ast\phi\|_{C^{k+1,\gamma}}
\leq & \,C\left(1+\frac{1}{\sqrt{\nu t}}+\frac{1}{(\nu t)^{1-\frac{\gamma}{4}}}\right)\|\phi\|_{C^{k,\frac{\gamma}{2}}}.
\label{est-lem-app-Hk-3}
\end{align}

\end{lem}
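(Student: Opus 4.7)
The plan is to prove the three estimates in order, using standard heat kernel techniques. I will exploit three properties of $\mathcal{H}^{\nu}(t,\cdot)$: (i) it is a probability density (so convolution is a contraction on $L^{\infty}$ and preserves Hölder seminorms); (ii) $\|\partial_{x^i}\mathcal{H}^{\nu}(t,\cdot)\|_{L^1(\bR^d)}\le C/\sqrt{\nu t}$, which is a one-variable Gaussian computation after scaling $y=\sqrt{\nu t}\,\xi$; and (iii) the crucial cancellation $\int_{\bR^d}\partial_{x^i}\partial_{x^j}\mathcal{H}^{\nu}(t,y)\,dy=0$.

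For \eqref{est-lem-app-Hk-1}, I would first note that convolution commutes with differentiation, so $D^{\alpha}(\mathcal{H}^{\nu}(t)\ast\phi)=\mathcal{H}^{\nu}(t)\ast D^{\alpha}\phi$ for $|\alpha|\le k$. Writing any $(k{+}1)$-st order derivative as $\partial_{x^i}D^{\beta}$ with $|\beta|=k$, I move $\beta$ onto $\phi$ and the remaining $\partial_{x^i}$ onto the kernel, so that the convolution becomes $(\partial_{x^i}\mathcal{H}^{\nu}(t))\ast D^{\beta}\phi$. Young's inequality combined with (ii) bounds the $L^{\infty}$ part by $(C/\sqrt{\nu t})\|\phi\|_{C^k}$; repeating the estimate on $D^{\beta}\phi(\cdot)-D^{\beta}\phi(\cdot+h)$ and dividing by $|h|^{\gamma}$ controls the $C^{0,\gamma}$ seminorm of the $(k{+}1)$-st derivative by $(C/\sqrt{\nu t})\|\phi\|_{C^{k,\gamma}}$. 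The terms of order $\le k$ are bounded without any singular factor, giving the $1$ in $(1+1/\sqrt{\nu t})$.

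For \eqref{est-lem-app-Hk-2} I again move $k$ of the derivatives onto $\phi$ and keep the two $\partial_{x^i},\partial_{x^j}$ on the heat kernel. The key step is to insert the cancellation (iii) and write
\begin{equation*}
\partial_{x^i}\partial_{x^j}D^{\alpha}(\mathcal{H}^{\nu}(t)\ast\phi)(x)=\int_{\bR^d}\partial_{x^i}\partial_{x^j}\mathcal{H}^{\nu}(t,x-y)\bigl[D^{\alpha}\phi(y)-D^{\alpha}\phi(x)\bigr]\,dy.
\end{equation*}
Then the Hölder bound $|D^{\alpha}\phi(y)-D^{\alpha}\phi(x)|\le\|\phi\|_{C^{k,\gamma/2}}|x-y|^{\gamma/2}$ together with the pointwise Gaussian bound $|\partial_{x^i}\partial_{x^j}\mathcal{H}^{\nu}(t,z)|\le C(\nu t)^{-d/2-1}e^{-c|z|^2/(\nu t)}$ reduces matters to $\int |z|^{\gamma/2}|\partial_{x^i}\partial_{x^j}\mathcal{H}^{\nu}(t,z)|\,dz\le C(\nu t)^{\gamma/4-1}$, which is exactly \eqref{est-lem-app-Hk-2}. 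This cancellation step is the main obstacle, because without it the integral $\int|\partial_{x^i}\partial_{x^j}\mathcal{H}^{\nu}(t,z)|\,dz$ diverges like $1/(\nu t)$ and the Hölder gain $\gamma/4$ in the exponent would be lost.

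For \eqref{est-lem-app-Hk-3} I would apply the preceding interpolation lemma to $u=\mathcal{H}^{\nu}(t)\ast\phi$ with exponents $\gamma/2\le\gamma$, obtaining $\|u\|_{C^{k+1,\gamma}}\le C\bigl(\|u\|_{C^{k+1,\gamma/2}}+\|\nabla u\|_{C^{k+1}}\bigr)$. The first summand is handled by \eqref{est-lem-app-Hk-1} applied with exponent $\gamma/2$, producing the $(1+1/\sqrt{\nu t})\|\phi\|_{C^{k,\gamma/2}}$ contribution. The second summand is bounded by splitting $\|\nabla u\|_{C^{k+1}}\le\|\nabla u\|_{C^k}+\sum_{i,j}\|\partial_{x^i}\partial_{x^j}u\|_{C^k}$, where the first piece is again controlled by \eqref{est-lem-app-Hk-1} (after trading one derivative for the $1/\sqrt{\nu t}$ factor) and the second is exactly the left side of \eqref{est-lem-app-Hk-2}, contributing $1/(\nu t)^{1-\gamma/4}\|\phi\|_{C^{k,\gamma/2}}$. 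Summing the three contributions gives the stated bound.
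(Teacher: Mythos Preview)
Your proposal is correct and follows essentially the same approach as the paper: the same derivative-shifting and Young-inequality argument for \eqref{est-lem-app-Hk-1}, the identical cancellation trick (subtracting $D^{\alpha}\phi(x)$ using $\int\partial_{x^i}\partial_{x^j}\mathcal{H}^{\nu}=0$) for \eqref{est-lem-app-Hk-2}, and the same use of the preceding interpolation lemma to deduce \eqref{est-lem-app-Hk-3} from the first two estimates. Your handling of the split $\|\nabla u\|_{C^{k+1}}\le\|\nabla u\|_{C^k}+\sum_{i,j}\|\partial_{x^i}\partial_{x^j}u\|_{C^k}$ is in fact slightly more explicit than the paper's sketch, which absorbs the low-order piece tacitly into $\|u\|_{C^{k+1,\gamma/2}}$.
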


\begin{proof}[Sketched only]
The estimate \eqref{est-lem-app-Hk-1} follows from
\begin{align*}
\left|\mathcal{H}^{\nu}(t)\ast \phi(x)\right|
  =\left|
  \int_{\bR^d}\mathcal{H}^{\nu}(t,x-y)\phi(y)\,dy\right|
   \,\leq\, \|\phi\|_{C(\bR^d)},\quad \forall\,x\in\bR^d
\end{align*}
and for any $x,z\in\bR^d$,
\begin{align*}
\left|\nabla\mathcal{H}^{\nu}(t)\ast \phi(x)-\nabla\mathcal{H}^{\nu}(t)\ast \phi(z)\right|
=&
\left|
  \int_{\bR^d} \frac{y}{(2\pi )^{\frac{d}{2}}(\nu t)^{\frac{d}{2}+1}} {\rm exp}\Big(-\frac{|y|^2}{2\nu t}\Big)\left(\phi(x-y)-\phi(z-y)\right)\,dy\right|
\\
\leq &\,|x-z|^{\gamma} \|\phi\|_{C^{0,\gamma}}
  \int_{\bR^d} \frac{|y|}{(2\pi)^{\frac{d}{2}}(\nu t)^{\frac{d}{2}+1}} {\rm exp}\Big(-\frac{|y|^2}{2\nu t}\Big)\,dy
  \\
\leq &\,\frac{C}{\sqrt{\nu t}}\,|x-z|^{\gamma} \|\phi\|_{C^{0,\gamma}}.
\end{align*}

For any $x\in\bR^d$ and $i,j=1,\cdots,d$,
\begin{align*}
\left|\partial_{x^i}\partial_{x^j} \mathcal{H}^{\nu}(t)\ast\phi(x)\right|
=&
\left|
\int_{\bR^d} \frac{1}{(2\pi \nu t)^{\frac{d}{2}}}\left(\frac{|y|^2}{(t\nu)^2}-\frac{1}{\nu t}\right) {\rm exp}\Big(-\frac{|y|^2}{2\nu t}\Big)\left(\phi(x-y)-\phi(x)\right)\,dy
\right|
\\
\leq&\,
\|\phi\|_{C^{0,\frac{\gamma}{2}}}
\int_{\bR^d} \frac{|y|^{\frac{\gamma}{2}}}{(2\pi \nu t)^{\frac{d}{2}}}\left(\frac{|y|^2}{(t\nu)^2}+\frac{1}{\nu t}\right) {\rm exp}\Big(-\frac{|y|^2}{2\nu t}\Big)\,dy
\\
\leq &
\frac{C}{(\nu t)^{1-\frac{\gamma}{4}}} \|\phi\|_{C^{0,\frac{\gamma}{2}}} ,
\end{align*}
which implies estimate \eqref{est-lem-app-Hk-2}.

Finally,
 \begin{align*}
   \|\mathcal{H}^{\nu}(t)\ast\phi\|_{C^{k+1,\gamma}}
   \leq\,&
   C\left( \|\mathcal{H}^{\nu}(t)\ast\phi\|_{C^{k+1,\frac{\gamma}{2}}}
   +\|\nabla \mathcal{H}^{\nu}(t)\ast\phi\|_{C^{k+1}}   \right)
   \,\,\,\textrm{(by Lemma \ref{lem-numr})}
   \\
   \leq\,&
   C\left( \|\mathcal{H}^{\nu}(t)\ast\phi\|_{C^{k+1,\frac{\gamma}{2}}}
   +\sum_{i,j=1}^d\|\partial_{x^i}\partial_{x^j} \mathcal{H}^{\nu}(t)\ast\phi\|_{C^{k}}   \right)
   \\
   \leq\,&
   C\left(1+
   \frac{1}{\sqrt{\nu t}}
   +\frac{1}{(\nu t)^{1-\frac{\gamma}{4}}}\right)\|\phi\|_{C^{k,\frac{\gamma}{2}}}
      \,\,\,\,   \textrm{(by estimates \eqref{est-lem-app-Hk-1} and \eqref{est-lem-app-Hk-2}).}
 \end{align*}
 The proof is completed.
\end{proof}

For each $N\in (1,\infty)$, define
\begin{align*}
  \textbf{P}_N(\phi\otimes\psi)=E\left[\widetilde{Y}_{\frac{1}{N}}
  -\widetilde{Y}_N\right],\quad \phi,\psi\in H^m,\,m> \frac{d}{2}+1,
\end{align*}
where $\widetilde{Y}_{\cdot}$ satisfies BSDE \eqref{bsde-lem-numr}. In view of Remark \ref{rmk-lem-SIO-eps}, we have
\begin{align}
  \left\|\textbf{P}_N(\phi\otimes\psi)\right\|_{k}
  \leq C\bigg(\frac{1}{\sqrt{N}}+ \sqrt{N}  \bigg)\|\phi\otimes\psi\|_k,\quad \,0\leq k\leq m.
\end{align}
In a similar way to Theorem \ref{thm local}, we have

\begin{thm}\label{thm-N}
Let $\nu>0, G\in H^m,$ and $f\in L^2(0,T;H^{m-1})$ with $m>d/2$. Then there is $T_0<T$ which depends on $\|f\|_{L^2(0,T;H^{m-1})}$, $\nu$, $m$, $d$, $T$, $N$ and $\|G\|_m$, such that FBSDS
\begin{equation}\label{FBSDS-N}
  \left\{\begin{array}{l}
  \begin{split}
  dX_s(t,x)&
            =Y_s(t,x)\,ds+\sqrt{\nu}\,dW_s,\quad s\in[t,T];\\
  X_t(t,x)&=x;\\
  -dY_s(t,x)&
            =\left[f(s,X_s(t,x))+\widetilde{Y}_0(s,X_s(t,x))\right]\,ds-\sqrt{\nu}Z_s(t,x)\,dW_s;\\
    Y_T(t,x)&=G(X_T(t,x));\\
  -d\widetilde{Y}_{s}(t,x)&=\!\!\sum_{i,j=1}^d \frac{27}{2s^3}Y^i_tY^j_t(t,x+B_{s})\left(B^i_{\frac{2s}{3}}-B^i_{\frac{s}{3}}\right)
              \left(B_{s}^j-B_{\frac{2s}{3}}^j\right)B_{\frac{s}{3}} \mathbb{I}_{\big[\frac{1}{N},N\big]}(s) \,ds\\
              &\ \ -\widetilde{Z}_s(t,x)dB_s,\quad s\in(0,\infty);\\
    \widetilde{Y}_{\infty}(t,x)&=0.
    \end{split}
  \end{array}\right.
\end{equation}
 has a unique $H^m$-solution $(X,Y,Z,\widetilde{Y}_0)$ on $(T_0,T]$ with
  $$ \{Y_t(t,x),\,(t,x)\in(T_0,T]\times\bR^d\}\quad\in\quad C_{loc}((T_0,T];H^m)\cap L^2_{loc}(T_0,T;H^{m+1}).$$
  Moreover, we have
  \begin{align}
  Z_t(t,\cdot)=\nabla Y_t(t,\cdot),Y_s(t,\cdot)= Y_s(s,X_s(t,\cdot)) \textrm{ and } Z_{s}(t,\cdot)=Z_s(s,X_s(t,\cdot)),
  \label{eq-thmN-relat-Y-Z}
  \end{align}
  for $T_0<t\leq s\leq T$, $(Y,Z,\widetilde{Y}_0)$ satisfies
  \begin{align}
    Y_r(r,X_r(t,x))=&
    \,G(X_T(t,x))
    +\int_r^T \left[f(s,X_s(t,x))+\widetilde{Y}_0(s,X_s(t,x))    \right]\,ds
    \nonumber\\
    &-\sqrt{\nu}\int_r^TZ_s(s,X_s(t,x))\,dW_s,\,T_0<t\leq r\leq T,\,a.e.x\in\bR^d,\,a.s.,
  \label{Eq thm bNS-pdeN}
  \end{align}
and ${u}^N(r,x):=Y_r(r,x)$ is the unique strong solution of the following PDE:
  \begin{equation}\label{Eq thm pdeN}
  \left\{\begin{array}{l}
          \partial_t  {u}^N + \frac{\nu}{2} \Delta  {u}^N + ( {u}^N\cdot \nabla) {u}^N + \textbf{P}_N( {u}^N\otimes {u}^N) +f=0 , \;\; T_0<t \leq T;\\
           {u}^N(T) = G.
    \end{array}\right.
    \end{equation}
  \end{thm}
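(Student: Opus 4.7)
\textbf{Proof plan for Theorem \ref{thm-N}.} The strategy is to run the two-step iteration scheme of Theorem \ref{thm local} verbatim, replacing the singular infinite-horizon generator by its truncated version supported on $[1/N, N]$. The advantage is that the truncated BSDE has an integrable generator with deterministic $L^\infty$-in-$\omega$ bound, so Lemma \ref{lem-numr} (together with its proof) immediately gives $\widetilde{Y}_0^N(t,x) := \textbf{P}_N(Y_t \otimes Y_t)(x)$ as a bounded bilinear map $H^m \times H^m \to H^m$ with operator norm at most $C(\sqrt{N} + 1/\sqrt{N})$. In particular $\widetilde{Y}_0^N$ will no longer project onto the gradient part of $\mathrm{div}(u\otimes u)$, which is precisely why we drop the divergence-free constraint on $G$ and $f$ in the hypothesis.

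First I would fix $v \in C([0,T];H^m) \cap L^2(0,T;H^{m+1})$ and $\zeta \in C([0,T];H^m)$, define $\widetilde{Y}_0^{v,\zeta}(t,x) := \textbf{P}_N(v(t) \otimes \zeta(t))(x)$ using the truncated BSDE, and consider the auxiliary coupled system of the same form as \eqref{Eq FBSDE1 in thm} but with the time integration in the $\widetilde{Y}$-equation restricted to $[1/N,N]$. By Lemma \ref{lem_verif} combined with the $H^m$-boundedness of $\textbf{P}_N$, this auxiliary FBSDS has a unique solution $(X^{v,\zeta}, Y^{v,\zeta}, Z^{v,\zeta}, \widetilde{Y}_0^{v,\zeta})$ satisfying the representation $(Y_s^{v,\zeta}(t,\cdot), Z_s^{v,\zeta}(t,\cdot)) = (Y_s^{v,\zeta}, Z_s^{v,\zeta})(s, X_s(t,\cdot))$ and $Z_t^{v,\zeta} = \nabla Y_t^{v,\zeta}$. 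A contraction argument in $\zeta$ (using the $H^m$-boundedness of $\textbf{P}_N$ to estimate $\widetilde{Y}_0^{v,\zeta_1} - \widetilde{Y}_0^{v,\zeta_2}$ by $\|v\|_m \|\zeta_1-\zeta_2\|_m$) produces a unique fixed point $\bar{\zeta} = Y^{v,\bar\zeta}$ on a short time interval, giving a map $\Psi_N : v \mapsto Y^{v,\bar\zeta}$ defined on a ball $U_R^\eps$ of $C([T-\eps,T];H^m) \cap L^2(T-\eps,T;H^{m+1})$.

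Next I would carry out the same energy estimate as in \eqref{eq est in thm}, using Lemma \ref{lem_verif} and Remark \ref{rmk_m=2}: the cubic term $\langle Z_s v(s), Y_s\rangle_{m-1,m+1}$ is bounded as before, and the new term $\langle \widetilde{Y}_0^{v,v}(s), Y_s\rangle_{m-1,m+1}$ is controlled by $\|\textbf{P}_N(v\otimes v)\|_{m-1} \|Y_s\|_{m+1}\le C(N) \|v\|_m^2 \|Y_s\|_{m+1}$; absorbing the top-order term into $\nu \int \|Z_s\|_m^2$ via Young's inequality gives $\Psi_N : U_R^\eps \to U_R^\eps$ for $R$ large and $\eps$ small enough (with the dependence now including $N$). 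A parallel computation on $Y^{v_1} - Y^{v_2}$ shows $\Psi_N$ is a contraction for $\eps$ small, hence yields a unique fixed point $u^N$. A bootstrap argument then extends $u^N$ to a maximal interval $(T_0, T]$ with $T_0$ depending on $\|f\|_{L^2(0,T;H^{m-1})}$, $\|G\|_m$, $\nu$, $m$, $d$, $T$, $N$.

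Finally, to identify $u^N$ as the strong solution of PDE \eqref{Eq thm pdeN}, I would perform the Girsanov change of measure as in \eqref{Girsanov-Trans-proof-thm-loc} (with $Y_s(s,X_s(t,x))$ as drift), apply It\^o's formula to $\zeta(s, X_s(t,x))$ for a test function $\zeta \in C_c^\infty$, take expectation under $\mathbb{Q}^{t,x}$, and integrate in $x$; the only difference from the proof of Theorem \ref{thm local} is that the Leray-projection identity $\widetilde{Y}_0 = -\textbf{P}^\perp(\mathrm{div}(u\otimes u))$ is replaced by $\widetilde{Y}_0^N = \textbf{P}_N(u^N \otimes u^N)$, giving exactly \eqref{Eq thm pdeN}. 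The main obstacle I anticipate is tracking the dependence of all constants on $N$ (in particular the $C(\sqrt{N}+1/\sqrt{N})$ factor from Remark \ref{rmk-lem-SIO-eps}), which controls how the existence window $T_0$ depends on $N$; but since the statement already allows $T_0$ to depend on $N$, this is a bookkeeping issue rather than a genuine analytic difficulty. Uniqueness of the strong solution of \eqref{Eq thm pdeN} then follows, as in Theorem \ref{thm local}, by reversing the above derivation.
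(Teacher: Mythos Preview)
Your proposal is correct and follows exactly the route the paper intends: the paper itself omits the proof of Theorem \ref{thm-N}, saying only that it ``is similar to that of Theorem \ref{thm local}'', and your outline reproduces that argument with the appropriate substitution of $\textbf{P}_N$ for $-\textbf{P}^{\perp}\mathrm{div}$. Two small corrections: in the energy estimate for the outer map $\Psi_N$, the extra term is $\langle \textbf{P}_N(v\otimes Y^{v,\bar\zeta})(s),\,Y^{v,\bar\zeta}_s\rangle_{m-1,m+1}$, not $\textbf{P}_N(v\otimes v)$ (since after the inner fixed point the second factor is $Y^{v,\bar\zeta}$); and the $H^m$-boundedness of $\textbf{P}_N$ you need comes from Remark \ref{rmk-lem-SIO-eps}, not from Lemma \ref{lem-numr} (which is the H\"older-space analogue).
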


  The proof is similar to that of Theorem \ref{thm local}, and is omitted here.

%\begin{rmk}
Letting $r=t$, using Girsanov transformation in a similar way to \eqref{Girsanov-Trans-proof-thm-loc} of the proof for Theorem \ref{thm local} and then taking expectations on both sides of
\eqref{Eq thm bNS-pdeN}, we have
\begin{align}
 {u}^N(t,x)=\mathcal{H}^{\nu}(T-t)\ast G(x)+\int_t^T\mathcal{H}^{\nu}(s-t)\ast \left(f+( {u}^N\cdot \nabla) {u}^N + \textbf{P}_N( {u}^N\otimes {u}^N)\right)(s,x)\,ds. \label{Repres-FBSDS-N}
\end{align}
Assume
\begin{align}\label{ass-hoelder}
G\in C^{k,\alpha},\,f\in C([0,T];C^{k-1,\alpha}),\quad k\in\mathbb{Z}^+,\alpha\in(0,1).
\end{align}
By Lemmas \ref{lem-numr} and \ref{lem-HK} and in view of Remark \ref{rmk-lem--eps-NN}, we get
\begin{align}
\| {u}^N(t)\|_{C^{k,\alpha}}
\leq& C \|G\|_{C^{k,\alpha}}
+C\int_{t}^T\bigg[
\Big(1+\frac{1}{\sqrt{s-t}}\Big)\big(\|f(s)+( {u}^N\cdot \nabla) {u}^N(s)\|_{C^{k-1,\alpha}}
\nonumber\\
&\quad\quad\quad
+\Big(1+\frac{1}{\sqrt{s-t}}+\frac{1}{(s-t)^{1-\frac{\alpha}{4}}}\Big)
\|\textbf{P}_N( {u}^N\otimes {u}^N)(s)\|_{C^{k-1,\frac{\alpha}{2}}}\bigg]ds
\nonumber
\\
\leq& C \|G\|_{C^{k,\alpha}}
+C\int_{t}^T\bigg[
\Big(1+\frac{1}{\sqrt{s-t}}\Big)\big(\|f(s)\|_{C^{k-1,\alpha}}+\| {u}^N(s)\|^2_{C^{k,\alpha}}\big)
\nonumber\\
&\quad\quad\quad
+\Big(1+\frac{1}{\sqrt{s-t}}+\frac{1}{(s-t)^{1-\frac{\alpha}{4}}}\Big)
\| {u}^N(s)\|^2_{C^{k,\alpha}}\bigg]ds
\nonumber\\
\leq&
C \|G\|_{C^{k,\alpha}}
+C\|f\|_{C([0,T];C^{k-1,\alpha})}
+C\int_{t}^T\!\!
\Big(1+\frac{1}{(s-t)^{1-\frac{\alpha}{4}}}\Big)
\| {u}^N(s)\|^2_{C^{k,\alpha}}\,ds,\label{deduction-apprx-N}
\end{align}
which by Gronwall inequality implies that
\begin{align}
\sup_{s\in[t,T]}\| {u}^N(s)\|_{C^{k,\alpha}}
\leq \frac{C\big( \|G\|_{C^{k,\alpha}}
+\|f\|_{C([0,T];C^{k-1,\alpha})}\big)}{1-C^2( \|G\|_{C^{k,\alpha}}
+\|f\|_{C([0,T];C^{k-1,\alpha})}) \big[ (T-t)+(T-t)^{\frac{\alpha}{4}}  \big]  },
\label{est-app-N-hoelder}
\end{align}
where $T-t$ is small enough and the constant $C$ is independent of $t$ and $N$. In view of \eqref{est-lem-app-Hk-2} of Lemma \ref{lem-HK}, we further have $ {u}^N(t)\in C^{k+1}$ when $t$ is away from $T$.

From estimate \eqref{est-app-N-hoelder} and Theorem \ref{thm-N}, similar to Theorem \ref{thm local}, we have the following corollary.

\begin{cor}\label{cor-app-N}
Let $\nu>0$. Under assumption \eqref{ass-hoelder}, there is $T_1<T$ which depends on $\|f\|_{C([0,T];C^{k-1,\alpha})}$, $\nu$, $T$ and $\|G\|_{C^{k,\alpha}}$, such that FBSDS~\eqref{FBSDS-N}
 has a unique solution $(X,Y,Z,\widetilde{Y}_0)$ on $(T_1,T]$ with $$\{Y_r(r,x),\,(r,x)\in(T_1,T]\times\bR^d\}\,\,\in\,\, C_{loc}((T_1,T];C^{k,\alpha})\cap C_{loc}((T_1,T);C^{k+1}) \textrm{ and }Z_r(r,x)=\nabla Y_r(r,x).$$
  Moreover, we have~\eqref{eq-thmN-relat-Y-Z},
  $(Y,Z,\widetilde{Y}_0)$ satisfies BSDE \eqref{Eq thm bNS-pdeN},
and  ${u}^N(r,x):=Y_r(r,x)$ is the unique solution of the PDE \eqref{Eq thm pdeN}.
\end{cor}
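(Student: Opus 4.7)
The plan is to run the same Picard/contraction scheme as in Theorem \ref{thm local}, but directly in a Hölder-space setting, reading the required regularity off the mild formula \eqref{Repres-FBSDS-N}. For fixed $N$, I would define the solution map
\[
\Phi(v)(t,x) := \mathcal{H}^{\nu}(T-t)\ast G(x) + \int_t^T \mathcal{H}^{\nu}(s-t)\ast \bigl( f + (v\cdot\nabla)v + \mathbf{P}_N(v\otimes v)\bigr)(s,x)\,ds
\]
on the closed ball $V_R^{\eps} := \{ v \in C([T-\eps,T]; C^{k,\alpha}) : \sup_{t}\|v(t)\|_{C^{k,\alpha}} \le R\}$, where $\mathbf{P}_N(\phi\otimes\psi) = E[\widetilde Y_{1/N} - \widetilde Y_N]$ is the operator introduced just before Theorem~\ref{thm-N}.

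To show $\Phi:V_R^{\eps}\to V_R^{\eps}$ for $R$ large and $\eps$ small, I would repeat, almost verbatim, the computation \eqref{deduction-apprx-N}--\eqref{est-app-N-hoelder}: Lemma \ref{lem-numr} controls $\mathbf{P}_N(v\otimes v)$ in $C^{k-1,\alpha/2}$ by $C\|v\|^2_{C^{k,\alpha}}$ (with an $N$-dependent constant coming from Remark \ref{rmk-lem--eps-NN}), and estimate \eqref{est-lem-app-Hk-3} of Lemma \ref{lem-HK} upgrades this back to $C^{k,\alpha}$ at the price of an integrable singularity $(s-t)^{-1+\alpha/4}$; the $(v\cdot\nabla)v$ term is handled analogously through \eqref{est-lem-app-Hk-1}. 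This yields
\[
\sup_{t\in[T-\eps,T]} \|\Phi(v)(t)\|_{C^{k,\alpha}} \le C\bigl(\|G\|_{C^{k,\alpha}}+\|f\|_{C([0,T];C^{k-1,\alpha})}\bigr) + C_N R^2 \bigl(\eps+\eps^{\alpha/4}\bigr),
\]
and fixing $R$ appropriately then choosing $\eps$ small makes $\Phi$ a self-map.

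Contraction is obtained the same way: since $(v\cdot\nabla)v$ and $\mathbf{P}_N(v\otimes v)$ are bilinear in $v$, the difference $\Phi(v_1)-\Phi(v_2)$ is linear in $\delta v := v_1-v_2$ with coefficients bounded by $R$, and the same kernel estimates give
\[
\sup_{t\in[T-\eps,T]} \|\Phi(v_1)-\Phi(v_2)\|_{C^{k,\alpha}} \le C_N R (\eps+\eps^{\alpha/4})\,\|\delta v\|_{C([T-\eps,T];C^{k,\alpha})}.
\]
Shrinking $\eps$ yields a unique fixed point $u^N\in C([T-\eps,T];C^{k,\alpha})$, and a bootstrap in $t$ (as in Theorem \ref{thm local}) extends it to a maximal interval $(T_1,T]$ with $T_1$ depending on the quantities listed in the statement. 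The interior regularity $u^N\in C_{\mathrm{loc}}((T_1,T);C^{k+1})$ then follows from \eqref{est-lem-app-Hk-2} applied to $\mathcal H^\nu(s-t)\ast\bigl((u^N\!\cdot\!\nabla)u^N+\mathbf{P}_N(u^N\!\otimes\! u^N)\bigr)$ with $s-t$ bounded below on compact subsets of $(T_1,T)$.

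Finally, I would reconstruct the FBSDS from $u^N$: solve the forward SDE $dX_s=u^N(s,X_s)\,ds+\sqrt{\nu}\,dW_s$, $X_t=x$ (well-posed since $u^N$ is $C^{1,\alpha}$ in space), set $Y_s(t,x):=u^N(s,X_s(t,x))$ and $Z_s(t,x):=\nabla u^N(s,X_s(t,x))$, and obtain $\widetilde{Y}$ from Lemma \ref{lem-numr} applied with $\phi=\psi=u^N(t,\cdot)$; Itô's formula then verifies \eqref{Eq thm bNS-pdeN} and \eqref{eq-thmN-relat-Y-Z}, and Lemma \ref{lem-numr} gives $\widetilde Y_0(t,x) = \mathbf{P}_N(u^N\otimes u^N)(t,x)$, so $u^N$ solves PDE \eqref{Eq thm pdeN}. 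Uniqueness transfers from the fixed point: any FBSDS solution with the claimed regularity yields, via the Girsanov change of measure used at \eqref{Girsanov-Trans-proof-thm-loc}, a fixed point of $\Phi$. The main obstacle will be carefully matching the two Hölder indices $\alpha$ and $\alpha/2$ that appear on either side of Lemma \ref{lem-numr} so that the heat-kernel gain in Lemma \ref{lem-HK} exactly absorbs the half-regularity loss, thereby closing the scheme in $C^{k,\alpha}$ rather than forcing descent to a lower exponent at each iteration.
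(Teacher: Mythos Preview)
Your contraction approach in $C^{k,\alpha}$ is correct in outline and genuinely different from the paper's route. The paper does \emph{not} run a fixed-point argument directly in H\"older spaces; instead it approximates $G$ and $f$ by elements of $C^{k,\alpha}\cap H^m$, invokes Theorem~\ref{thm-N} to produce solutions in the Sobolev setting, and then uses the a~priori estimate~\eqref{est-app-N-hoelder} (which is $N$-independent) to control the H\"older norms uniformly and pass to the limit. Uniqueness is then read off the same mild-formula estimate. Your direct contraction is more self-contained and avoids the density detour, at the cost of redoing the kernel estimates inside the iteration; the paper's route recycles the Sobolev machinery already in place.

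One point needs correction: you write that the bound on $\mathbf P_N(v\otimes v)$ in $C^{k-1,\alpha/2}$ carries ``an $N$-dependent constant coming from Remark~\ref{rmk-lem--eps-NN}.'' In fact that remark gives
\[
\bigl\| E[\widetilde Y_{1/N}-\widetilde Y_N]\bigr\|_{C^{k-1,\alpha/2}} \le C\,\|v\|_{C^{k,\alpha}}^2\Bigl(2-N^{-\alpha/4}-N^{-1/2}\Bigr)\le 2C\,\|v\|_{C^{k,\alpha}}^2,
\]
with $C$ independent of $N$. This matters: the corollary asserts that $T_1$ depends only on $\|f\|_{C([0,T];C^{k-1,\alpha})}$, $\nu$, $T$, and $\|G\|_{C^{k,\alpha}}$, \emph{not} on $N$ (the paper stresses this contrast with Theorem~\ref{thm-N}). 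If your constants $C_N$ in the self-map and contraction estimates genuinely depended on $N$, your $\eps$ and hence $T_1$ would too, and you would not have proved the stated result. Once you replace $C_N$ by the $N$-free constant above, your argument goes through and yields the sharper, $N$-independent $T_1$ as required.
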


Since $C^{l,\alpha}\cap H^{m}$ is dense in $C^{l,\alpha}$ for any $m>\frac{d}{2}$ and $l\in\mathbb{N}$, by Theorem \ref{thm-N} we can prove the existence of the solution $(X,Y,Z,\widetilde{Y}_0)$ through standard density arguments.  In view of representation \eqref{Repres-FBSDS-N},
we can prove the uniqueness of the solution through a priori estimates in a similar way to \eqref{deduction-apprx-N}. From estimate \eqref{est-app-N-hoelder},
the unique  solution can be extended to the maximal time interval $(T_1,T]$. The proof of Corollary \ref{cor-app-N} is omitted. It is worth noting that  $T_1$ is independent of $N$ in Corollary~\ref{cor-app-N}, while in Theorem~\ref{thm-N} $T_0$ depends on $N$.

Now we shall use the solution $ {u}^N$ of PDE \eqref{Eq thm pdeN} to approximate the velocity field $ {u}$ of Navier-Stokes equation \eqref{backward NS}.

\begin{thm}\label{thm-apprx-N}
Let $\nu>0, G\in H^m_{\sigma},$ and $f\in C([0,T];H^{m-1}_{\sigma})$ with $m>\frac{d}{2}+1$. Let
$$ {u}\in C_{loc}((T_0,T];H^m_{\sigma})\cap L^2_{loc}(T_0,T;H^{m+1}_{\sigma})$$ be the strong solution of Navier-Stokes equation \eqref{backward NS} in Theorem \ref{thm local}. Since
$$H^m\hookrightarrow C^{k-1,\alpha},\quad\textrm{ for some }k\in\mathbb{Z}^+ \textrm{ and }\alpha\in(0,1),$$
 we are allowed to assume that $ {u}^N\in C_{loc}((T_1,T];C^{k,\alpha})\cap C_{loc}((T_1,T);C^{k+1})$ be the solution of PDE \eqref{Eq thm pdeN} in Corollary \ref{cor-app-N}. Then, for any $t\in (T_0\wedge T_1, T]$, there exists a constant $C$ independent of $N$ such that
\begin{align}
  \left\| {u}- {u}^N\right\|_{C(t,T;C^{k,\alpha})}
  \leq\frac{C}{N^{\frac{\alpha}{4}}}.\label{est-apprx-thm-apprx-N}
\end{align}
\end{thm}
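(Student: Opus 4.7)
\medskip

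\noindent\textbf{Proof plan for Theorem \ref{thm-apprx-N}.}

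The plan is to compare $u$ and $u^N$ through their mild (Duhamel) representations and close the estimate with a singular Gronwall argument in $C^{k,\alpha}$. Using that $\widetilde{Y}_0=-\mathbf{P}^{\perp}\mathrm{div}\,(u\otimes u)=\nabla p$ from Remark \ref{rmk2 after lem SIO} and the same Girsanov-and-expectation argument that produces \eqref{Repres-FBSDS-N}, I would first write, for $t\in(T_0\wedge T_1,T]$,
\begin{equation*}
u(t)=\mathcal{H}^{\nu}(T-t)\ast G+\int_t^T\mathcal{H}^{\nu}(s-t)\ast\bigl[f+(u\cdot\nabla)u+\Pi(u,u)\bigr](s)\,ds,
\end{equation*}
where I abbreviate the bilinear operator $\Pi(\phi,\psi):=-\mathbf{P}^{\perp}\mathrm{div}(\phi\otimes\psi)$. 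The analogous formula \eqref{Repres-FBSDS-N} holds for $u^N$ with $\Pi(u^N,u^N)$ replaced by $\mathbf{P}_N(u^N\otimes u^N)$. Setting $w:=u-u^N$ and subtracting, I would decompose the integrand into
\begin{equation*}
(w\cdot\nabla)u+(u^N\cdot\nabla)w+\Pi(w,u)+\Pi(u^N,w)+\bigl[\Pi(u^N,u^N)-\mathbf{P}_N(u^N\otimes u^N)\bigr].
\end{equation*}
The last, purely ``model-error'' bracket carries no $w$ at all and is the source of the $N^{-\alpha/4}$ rate.

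Next I would control that model-error in $C^{k-1,\alpha/2}$ using Remark \ref{rmk-lem--eps-NN} (with $\epsilon=1/N$ and truncation $N$), which yields
\begin{equation*}
\bigl\|\Pi(u^N,u^N)-\mathbf{P}_N(u^N\otimes u^N)\bigr\|_{C^{k-1,\alpha/2}}\le C\bigl(N^{-\alpha/4}+N^{-1/2}\bigr)\|u^N\|_{C^{k,\alpha}}^2.
\end{equation*}
For the $w$-dependent terms, the bilinear estimates of Lemmas \ref{lem-numr} and \ref{lem-HK} give $C^{k-1,\alpha}$-bounds of the form $C\|w\|_{C^{k,\alpha}}(\|u\|_{C^{k,\alpha}}+\|u^N\|_{C^{k,\alpha}})$. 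Crucially, the uniform-in-$N$ bound \eqref{est-app-N-hoelder} (applied both to $u^N$ and, by the same computation, to $u$ via Theorem \ref{thm local}) shows that $\|u(s)\|_{C^{k,\alpha}}+\|u^N(s)\|_{C^{k,\alpha}}\le M$ on $[t,T]$ with $M$ independent of $N$.

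Applying $\mathcal{H}^{\nu}(s-t)\ast\cdot$ and the smoothing estimates \eqref{est-lem-app-Hk-1}, \eqref{est-lem-app-Hk-3} of Lemma \ref{lem-HK}, I would arrive at
\begin{equation*}
\|w(t)\|_{C^{k,\alpha}}\le C_1 N^{-\alpha/4}+C_2\int_t^T\Bigl(1+\tfrac{1}{\sqrt{s-t}}+\tfrac{1}{(s-t)^{1-\alpha/4}}\Bigr)\|w(s)\|_{C^{k,\alpha}}\,ds,
\end{equation*}
with $C_1,C_2$ independent of $N$. A standard singular Gronwall lemma (for kernels with exponent $<1$) then gives $\|w(t)\|_{C^{k,\alpha}}\le C N^{-\alpha/4}$, which is \eqref{est-apprx-thm-apprx-N}.

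The main obstacles I anticipate are two: first, keeping the estimate of the $w$-linear terms inside $C^{k,\alpha}$ (rather than $C^{k-1,\alpha/2}$), since we want a closed Gronwall inequality for the same norm that appears in the conclusion — this forces one to apply the $C^{k,\alpha}$ part of Lemma \ref{lem-HK} cleanly to bilinear expressions in $u,u^N,w$; and second, verifying that the a priori bounds on $\|u^N\|_{C^{k,\alpha}}$ are uniform in $N$ on a common subinterval $(t,T]$, which is essential for constants $C_1,C_2$ above to be $N$-independent. Both points reduce to invoking \eqref{est-app-N-hoelder} and its analogue for $u$, the latter being justified by the very same mild-formulation/heat-kernel bootstrap (applied with $\textbf{P}_N$ replaced by its $N=\infty$ limit $\Pi$, which by Remark \ref{rmk-lem--eps-NN} satisfies the same $C^{k-1,\alpha/2}$ bilinear estimate).
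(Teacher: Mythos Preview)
Your proof is correct and follows essentially the same route as the paper: mild/Duhamel representation for both $u$ and $u^N$, a bilinear decomposition of the difference isolating one ``model-error'' term controlled by Remark~\ref{rmk-lem--eps-NN}, heat-kernel smoothing from Lemma~\ref{lem-HK} to close in $C^{k,\alpha}$, and a singular Gronwall. The only cosmetic difference is that the paper places the model error on $u\otimes u$ (yielding $(\mathbf{P}_N+\mathbf{P}^{\perp}\mathrm{div})(u\otimes u)$) and the $w$-linear terms under $\mathbf{P}_N$, whereas you place the model error on $u^N\otimes u^N$ and the $w$-linear terms under $\Pi$; since both $\|u\|_{C^{k,\alpha}}$ and $\|u^N\|_{C^{k,\alpha}}$ are bounded uniformly in $N$ (by \eqref{est-app-N-hoelder} and its $N=\infty$ analogue, which you correctly flag), the two decompositions are interchangeable.
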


\begin{proof}
In a similar way to \eqref{Repres-FBSDS-N}, we get for any $\tau\in[t,T]$
\begin{align*}
 {u}(\tau,x)=\mathcal{H}^{\nu}(T-\tau)\ast G(x)+\int_{\tau}^T\mathcal{H}^{\nu}(s-\tau)\ast \left(f+( {u}\cdot \nabla) {u} - \textbf{P}^{\perp}\,\textrm{div}\,( {u}\otimes {u})\right)(s,x)\,ds.
\end{align*}
  Putting $\delta u= {u}^N- {u}$, we have for $\tau\in[t,T]$
  \begin{align*}
  &\delta u(\tau,x)\\
  =&\!
  \int_{\tau}^T\!\!\!\mathcal{H}^{\nu}(s-\tau)\ast \left(( {u}^N\cdot \nabla) {u}^N-( {u}\cdot \nabla) {u} +
  \textbf{P}^N( {u}^N\otimes {u}^N)
  + \textbf{P}^{\perp}\,\textrm{div}\,( {u}\otimes {u})\right)(s,x)\,ds
  \\
  =&\!
  \int_{\tau}^T\!\!\!\!\mathcal{H}^{\nu}(s-\tau)\ast \Big((\delta {u}\cdot \nabla) {u}^N\!\!+( {u}\cdot \nabla)\delta {u} +
  (\textbf{P}^N\!\!+\textbf{P}^{\perp}\,\textrm{div}\,)( {u}\otimes {u})+
  \textbf{P}^N(\delta {u}\otimes {u}^N\!\!+ {u}\otimes\delta {u})
  \Big)(s,x)\,ds.
    \end{align*}
   From Lemmas \ref{lem-numr} and \ref{lem-HK} and Remark \ref{rmk-lem--eps-NN}, it follows that
   \begin{align*}
   &\|\delta u(\tau)\|_{C^{k,\alpha}}\\
   \leq & \,C
   \int_{\tau}^T  \!\!\Big(1+\frac{1}{(s-\tau)^{1-\frac{\alpha}{4}}}  \Big)\Big(
   \big\|(\delta {u}\cdot \nabla) {u}^N(s)+( {u}\cdot \nabla)\delta {u}(s)
   +
  (\textbf{P}^N\!\!+\textbf{P}^{\perp}\,\textrm{div}\,)( {u}\otimes {u})(s)
  \\
  &\quad\quad\quad+
  \textbf{P}^N(\delta {u}\otimes {u}^N\!\!+ {u}\otimes\delta {u})(s)
   \big\|_{C^{k-1,\frac{\alpha}{2}}}
   %+\frac{1}{N^{\frac{\alpha}{4}}} \|u(s)\|__{C^{k-1,\alpha}}
   \Big)\,ds
   \\
   \leq &\, C
   \int_{\tau}^T \! \!\!\Big(1+\frac{1}{(s-\tau)^{1-\frac{\alpha}{4}}}  \Big)\Big(
   \left(
   \left\| {u}(s)\right\|_{C^{k,{\alpha}}} +\left\| {u}^N(s)\right\|_{C^{k,{\alpha}}}
  \right) \left\|\delta u(s)\right\|_{C^{k,{\alpha}}}
   +\frac{1}{N^{\frac{\alpha}{4}}}  \left\| {u}(s)\right\|^2_{C^{k,{\alpha}}}
   \!\!\Big)\,ds
   \\
   \leq &\,C
   \int_{\tau}^T \! \!\!\Big(1+\frac{1}{(s-\tau)^{1-\frac{\alpha}{4}}}  \Big)\Big(
    \left\|\delta u(s)\right\|_{C^{k,{\alpha}}}
   +\frac{1}{N^{\frac{\alpha}{4}}}
   \Big)\,ds,
   \end{align*}
   which  implies the estimate \eqref{est-apprx-thm-apprx-N} by Gronwall inequality. We complete the proof.
\end{proof}

\begin{rmk}\label{rmk-numrl}
  In view of Theorem \ref{thm-apprx-N}, we can approximate numerically the strong solution of Navier-Stokes equation \eqref{backward NS}, by approximating  the PDE \eqref{Eq thm pdeN}.  By Theorem \ref{thm-N} and Corollary \ref{cor-app-N}, we rewrite FBSDS \eqref{FBSDS-N} into the following form
  \begin{equation}\label{FBSDS-numrl}
  \left\{\begin{array}{l}
  \begin{split}
  dX_s(t,x)&
            =Y_s(s,X_s(t,x))\,ds+\sqrt{\nu}\,dW_s,\quad s\in[t,T];\\
  X_t(t,x)&=x;\\
  -dY_s(s,X_s(t,x))&
            =\left[f(s,X_s(t,x))+\textbf{P}^N(Y_s\otimes Y_s)(s,X_s(t,x))\right]\,ds-\sqrt{\nu}Z_s(t,x)\,dW_s;\\
    Y_T(T,x)&=G(x);\\
  \textbf{P}^N(Y_s\otimes Y_s)(s,x)&
  =\!\!\sum_{i,j=1}^d E\!\int_{\frac{1}{N}}^{N}\!\! \frac{27}{2r^3} Y_s^i Y_s^j(s,x+B_{r})
  \left(B^i_{\frac{2r}{3}}-B^i_{\frac{r}{3}}\right)
              \left(B_{r}^j-B_{\frac{2r}{3}}^j\right)B_{\frac{r}{3}} \,dr ;\\
  &=
  \!\!\sum_{i,j=1}^d E\!\int_{\frac{1}{3N}}^{\frac{N}{3}}\!\!
  \frac{3}{2r^3} Y_s^i Y_s^j(s,x+\bar{B}_{r}+\tilde{B}_{r}+\hat{B}_r)
  \bar{B}^i_{r} \tilde{B}^j_{r}\hat{B}_{r}
              \,dr,
    \end{split}
  \end{array}\right.
\end{equation}
where $\bar{B}$, $\tilde{B}$ and $\hat{B}$ are three independent d-dimensional Brownian motions. For the numerical approximation theory of FBSDEs, we refer to \cite{Bender-Zhang-08,Delarue-Menozzi-06,Delarue-Menozzi-08} and references therein. Indeed, in the spirit of Delarue and Menozzi \cite{Delarue-Menozzi-06,Delarue-Menozzi-08}, we can define roughly the following algorithm:
\begin{align*}
  &\forall \, x\in\bR^d, \quad \bar{u}^N(T,x)=G(x),
  \\
  &\forall \, k\in [0,\tilde{N}-1]\cap \mathbb{Z},\quad \forall\,x\in \Xi,
  \\
  &\quad \mathcal{J}(t_k,x)=\bar{u}^N(t_{k+1},x)h+\sqrt{\nu} \Delta W_{t_k},
  \\
  &\quad
  \mathcal{P}^N(t_{k},x)=
  \!\!\sum_{i,j=1}^d E\!\int_{\frac{1}{3N}}^{\frac{N}{3}}\!\!
  \frac{3}{2r^3}(\bar{u}^N)^i(\bar{u}^N)^j(t_{k+1},x+\bar{B}_{r}+\tilde{B}_{r}+\hat{B}_r)
  \bar{B}^i_{r} \tilde{B}^j_{r}\hat{B}_{r}
              \,dr,
  \\
  &\quad \bar{u}^N(t_k,x)
  =E\bar{u}^N(t_{k+1},x+\mathcal{J}(t_k,x))
  +h\left(
  f(t_k,x)+\mathcal{P}^N(t_{k},x)
  \right),
\end{align*}
where $h=\frac{T}{\tilde{N}}$, $t_k=kh$ ($k\in [0,\tilde{N}-1]\cap \mathbb{Z}$) and $\Xi=\delta \mathbb{Z}^d$ is the infinite Cartesian grid of step $\delta>0$. Compared with Delarue and Menozzi \cite{Delarue-Menozzi-06,Delarue-Menozzi-08}, we omit the projection mapping on the grid, quantized algorithm for the Brownian motions and the approximations for the diffusion coefficient of the BSDEs in \eqref{FBSDS-numrl}.  We can analyse the above algorithm in a similar way to Delarue and Menozzi \cite{Delarue-Menozzi-06,Delarue-Menozzi-08}, nevertheless, we shall not search such numerical applications in this work.   For more details on the forward-backward algorithms for quasi-linear PDEs and associated FBSDEs, we refer to Delarue and Menozzi \cite{Delarue-Menozzi-06,Delarue-Menozzi-08}, where the Burgers' equation and the deterministic KPZ equation are analyzed as numerical examples.
\end{rmk}

\section{Two related topics}
\subsection{Connections with the Lagrangian approach}

With the Lagrangian approach, Constantin and Iyer \cite{Constantin-Iyer-2008,Constantin-Iyer-arxiv-2011} derived a stochastic representation for the incompressible Navier-Stokes equations based on stochastic Lagrangian paths and gave a self-contained proof of the existence. Later, Zhang \cite{Zhang-bNS-2010} considered a backward analogue and provided short elegant proofs for the classical existence results. In this section, we shall derive from our representation (see Theorem \eqref{thm local}) an analogous Lagrangian formula, through which we show the connections with the Lagrangian approach.

Let $\nu>0, G\in H^m_{\sigma},$ and $f\in L^2(0,T;H^{m-1}_{\sigma})$ with $m>d/2$. By Theorem \ref{thm local}, the following FBSDS
\begin{equation}\label{FBSDS-Lag}
  \left\{\begin{array}{l}
  \begin{split}
  dX_s(t,x)&
            =Y_s(t,x)\,ds+\sqrt{\nu}\,dW_s,\quad s\in[t,T];\\
  X_t(t,x)&=x;\\
  -dY_s(t,x)&
            =\left[f(s,X_s(t,x))+\widetilde{Y}_0(s,X_s(t,x))\right]\,ds-\sqrt{\nu}Z_s(t,x)\,dW_s;\\
    Y_T(t,x)&=G(X_T(t,x));\\
  -d\widetilde{Y}_{s}(t,x)&=\!\!\sum_{i,j=1}^d \frac{27}{2s^3}Y^i_tY^j_t(t,x+B_{s})\left(B^i_{\frac{2s}{3}}-B^i_{\frac{s}{3}}\right)
              \left(B_{s}^j-B_{\frac{2s}{3}}^j\right)B_{\frac{s}{3}} \,ds\\
              &\ \ -\widetilde{Z}_s(t,x)dB_s,\quad s\in(0,\infty);\\
    \widetilde{Y}_{\infty}(t,x)&=0.
    \end{split}
  \end{array}\right.
\end{equation}
admits a unique $H^m$-solution $(X,Y,Z,\widetilde{Y}_0)$ on some time interval $(T_0,T]$, with
  \begin{align}
  Z_t(t,\cdot)=\nabla Y_t(t,\cdot),Y_s(t,\cdot)= Y_s(s,X_s(t,\cdot)) \textrm{ and } Z_{s}(t,\cdot):= Z_s(s,X_s(t,\cdot)),
  \label{thm-relat-Y-Z-Lag}
  \end{align}
  and there exists $p\in L^2(T_0,T;H^{m})$ such that $\nabla p := \widetilde{Y}_0$ and $(u,p)$ coincides  with the unique strong solution to Navier-Stokes equation:
  \begin{equation}\label{Eq thm bNS-Lag}
  \left\{\begin{array}{l}
          \partial_t {u} + \frac{\nu}{2} \Delta {u} + ({u}\cdot \nabla){u} + \nabla{p} +f=0 , \;\; T_0<t \leq T;\\
         \nabla \cdot {u} = 0, \quad {u}(T) = G.
    \end{array}\right.
    \end{equation}
    For each $t\in (T_0,T]$ a.e. $x\in\bR^d$, define the following equivalent probability $\mathbb{Q}^{t,x}$:
    \begin{align}
   d\mathbb{Q}^{t,x}
   :=\exp\left(
   -\frac{1}{\sqrt{\nu}}\int_{t}^T Y_s(s,X_s(t,x))\,dW_s
   -\frac{1}{2\nu}\int_{t}^T| Y_s(s,X_s(t,x))|^2\,ds     \right) \, d\mathbb{P}.
   \end{align}
   Then we have
     \begin{equation*}
  \left\{\begin{array}{l}
  \begin{split}
  dX_s(t,x)&
            =\sqrt{\nu}\,dW'_s,\quad s\in[t,T];\\
  X_t(t,x)&=x;\\
  -dY_s(t,x)&
            =\left[(Y_s\cdot\nabla) Y_s+f+\nabla p\right](s,X_s(t,x))\,ds-\sqrt{\nu}\nabla Y_s(s,X_s(t,x))\,dW'_s;\\
    Y_T(t,x)&=G(X_T(t,x)),
    \end{split}
  \end{array}\right.
\end{equation*}
where $(W', \mathbb{Q}^{t,x})$ is a standard Brownian motion.

    Consider the following BSDE
      \begin{equation}\label{BSDE-Lag}
  \left\{\begin{array}{l}
  \begin{split}
          -d\overline{Y}_s(t,x)=&\,[f(s,X_s(t,x))+Z^{\mathcal{T}}_s(t,x)\overline{Y}_s(t,x)]\,ds
          -\sqrt{\nu}\,\overline{Z}_s(t,x)dW_s, \;\; T_0<t\leq s \leq T;\\
        \overline{Y}_T(t,x) = &\,G(X_T(t,x)).
    \end{split}
    \end{array}\right.
    \end{equation}
    Putting
    \begin{align*}
      (\delta Y,\delta Z)_s(t,x)
      =( Y-\overline{Y}, Z-\overline{Z})_s(t,x),
    \end{align*}
    we have $\delta Y_T(t,x)=0$ and
    \begin{align*}
          -&d\delta {Y}_s(t,x)
          \\
          =&
          \big[\widetilde{Y}_0(s,X_s(t,x))
          -Z_s^{\mathcal{T}}\big(Y-\delta Y\big)_s(t,x)\big]\,ds
          -\sqrt{\nu}\,\delta {Z}_s(t,x)dW_s\\
          =&\big[(\nabla p-\nabla^{\mathcal{T}}Y_sY_s)(s,X_s(t,x))
          +\nabla^{\mathcal{T}}Y_s(s,X_s(t,x))\delta Y_s(t,x)\big]dt
         -\sqrt{\nu}\,\delta {Z}_s(t,x)dW_s
          \\
          =&\Big[\nabla\Big( p-\frac{1}{2}|Y_s|^2\Big)(s,X_s(t,x))
          +\nabla^{\mathcal{T}}Y_s(s,X_s(t,x))\delta Y_s(t,x)\Big]dt
         -\sqrt{\nu}\,\delta {Z}_s(t,x)dW_s\\
         =&\Big[\nabla\Big( p-\frac{1}{2}|Y_s|^2\Big)(s,X_s(t,x))
          +\nabla^{\mathcal{T}}Y_s(s,X_s(t,x))\delta Y_s(t,x)
          +\delta {Z}_s(t,x)Y_s(s,X_s(t,x)) \Big]dt\\
         &\,-\sqrt{\nu}\,\delta {Z}_s(t,x)dW'_s,
         \\
          =&\Big[\nabla\Big( p-\frac{1}{2}|Y_s|^2\Big)(s,X_s(t,x))
          +\nabla^{\mathcal{T}}Y_s{\delta Y_s}(s,X_s(t,x))
          +{\delta {Z}_s}Y_s(s,X_s(t,x)) \Big]dt\\
         &\,-\sqrt{\nu}\,\delta {Z}_s(t,x)dW'_s,
     \end{align*}
     where by Proposition \ref{prop FBSDEandPDE}, we have
     \begin{align}
       {\delta Z_t}(t,\cdot)=\nabla {\delta Y_t}(t,\cdot),\delta Y_s(t,\cdot)= {\delta Y_s}(s,X_s(t,\cdot)) \textrm{ and } \delta Z_{s}(t,\cdot):={\delta Z_s}(s,X_s(t,\cdot)).
     \end{align}
    On the other hand, through basic calculations it is easy to check that
    \begin{align}\label{Geom-gauge}
    v(r,x):=E\int_r^T\!\!\Big(  p-\frac{1}{2}|Y_s|^2 \Big)(s,X_s(r,x))\,ds,\quad \forall\,r\in(T_0,T],
    \end{align}
    satisfies BSDE
    \begin{equation}\label{BSDE-Lag-v}
  \left\{\begin{array}{l}
  \begin{split}
          -dv(s,X_s(t,x))=&\Big[p-\frac{1}{2}|Y_s|^2+
          \nabla v Y_s\Big](s,X_s(t,x))\,ds
          -\sqrt{\nu}\,\nabla v (s,X_s(t,x))dW'_s\\
          =&
          \Big[p-\frac{1}{2}|Y_s|^2\Big](s,X_s(t,x))\,ds
          -\sqrt{\nu}\,\nabla v (s,X_s(t,x))dW_s;\\
        v(T,x) = &0.
    \end{split}
    \end{array}\right.
    \end{equation}
    In view  of the following relation
    $$
    \nabla (\nabla v) Y_t(t,x)+\nabla^{\mathcal{T}} Y_t \nabla v(t,x)
    =\nabla\left(  (Y_t\cdot \nabla)v(t,x)  \right),
    $$
    we further check that $\nabla v(t,x)={\delta Y_t(t,x)}$. Therefore, we have
    \begin{prop}\label{prop-webber}
      Let $\nu>0, G\in H^m_{\sigma},$ and $f\in L^2(0,T;H^{m-1}_{\sigma})$ with $m>d/2$. Let $(X,Y,Z,\widetilde{Y}_0)$ be the unique  $H^m$-solution of FBSDS \eqref{FBSDS-Lag} on some time interval $(T_0,T]$ and $(\overline{Y},\overline{Z})$ satisfy BSDE \eqref{BSDE-Lag-vice}. Then, the strong solution of Navier-Stokes equation \eqref{Eq thm bNS-Lag} admits a probabilistic representation:
      \begin{align}
      u(t,x)=Y_t(t,x)= \overline{Y}_t(t,x)+\nabla v(t,x)= \textbf{P}\,\overline{Y}_t(t,x),\quad (t,x)\in(T_0,T]\times\bR^d,
      \label{Webber-formula-Lag}
    \end{align}
    with $\overline{Y}$ and $v$ satisfying BSDEs \eqref{BSDE-Lag} and \eqref{BSDE-Lag-v} respectively.
    \end{prop}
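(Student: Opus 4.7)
The plan is to exploit the fact, essentially displayed in the paragraphs preceding the statement, that both $\delta Y_t(t,x):=Y_t(t,x)-\overline Y_t(t,x)$ and $\nabla v(t,x)$ are governed by exactly the same linear backward Cauchy problem with vanishing terminal data; identifying them then yields the Weber-type decomposition $Y_t=\overline Y_t+\nabla v$, and the second equality in~\eqref{Webber-formula-Lag} follows from $\nabla v$ being a pure gradient.

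First I would establish well-posedness of~\eqref{BSDE-Lag} and~\eqref{BSDE-Lag-v}. Theorem~\ref{thm local} yields $u=Y_t(t,\cdot)\in C_{\textrm{loc}}((T_0,T];H^m_\sigma)\cap L^2_{\textrm{loc}}(T_0,T;H^{m+1}_\sigma)$ with $Z=\nabla Y$, and $\widetilde Y_0=\nabla p$ for some $p\in L^2_{\textrm{loc}}(T_0,T;H^m)$. Since $m>d/2$, $H^m$ is an algebra, so $|Y|^2\in L^2_{\textrm{loc}}(T_0,T;H^{m-1})$ and $\nabla(p-\tfrac12|Y|^2)\in L^2_{\textrm{loc}}(T_0,T;H^{m-2})$. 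On any compact subinterval $[T_1,T]\subset(T_0,T]$, standard linear BSDE theory combined with the Feynman--Kac reformulation associated to the scalar backward equation
\begin{equation*}
\partial_s v+\tfrac{\nu}{2}\Delta v+(Y_s\cdot\nabla)v+\bigl(p-\tfrac12|Y_s|^2\bigr)=0,\qquad v(T,\cdot)=0,
\end{equation*}
produces a unique $v\in L^2(T_1,T;H^{m+1})$ satisfying~\eqref{BSDE-Lag-v}, and the same argument (now using Proposition~\ref{prop FBSDEandPDE} with $\alpha=0$, $b=Y_t$ divergence free, $c=Z_t^{\mathcal T}$) yields the unique $(\overline Y,\overline Z)$ solving~\eqref{BSDE-Lag}.

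Next, writing $(\delta Y,\delta Z):=(Y-\overline Y,Z-\overline Z)$ and substituting $\widetilde Y_0=\nabla p$, the terminal value is zero and the BSDE drift evaluated at $(s,X_s(t,x))$ reads $\nabla p-Z_s^{\mathcal T}Y_s+Z_s^{\mathcal T}\delta Y_s$. The pointwise identity $Z_s^{\mathcal T}Y_s=(\nabla Y_s)^{\mathcal T}Y_s=\nabla\bigl(\tfrac12|Y_s|^2\bigr)$ collapses the first two drift terms into $\nabla(p-\tfrac12|Y_s|^2)$. Corollary~\ref{cor-prop-FBSDEs} together with Proposition~\ref{prop FBSDEandPDE} (applied with $\alpha=0$, $b=Y_t$, $c=Z_t^{\mathcal T}$, source $\phi=\nabla(p-\tfrac12|Y|^2)$ and terminal data $0$) then identifies $\delta Y_t(t,x)$ as the unique strong solution of the linear backward PDE
\begin{equation*}
\partial_t\psi+\tfrac{\nu}{2}\Delta\psi+(Y_t\cdot\nabla)\psi+Z_t^{\mathcal T}\psi+\nabla\bigl(p-\tfrac12|Y_t|^2\bigr)=0,\qquad\psi(T,\cdot)=0.
\end{equation*}

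Finally, taking the gradient of the PDE for $v$ and invoking the identity $\nabla((Y_t\cdot\nabla)v)=(Y_t\cdot\nabla)(\nabla v)+(\nabla Y_t)^{\mathcal T}\nabla v$ quoted in the paragraph preceding the proposition shows that $\nabla v$ satisfies exactly the same linear backward PDE with the same zero terminal condition. Uniqueness from Proposition~\ref{prop FBSDEandPDE} therefore forces $\delta Y_t=\nabla v$, hence $Y_t=\overline Y_t+\nabla v$; applying the Leray projection $\textbf{P}=\textbf{I}-\nabla\Delta^{-1}\textrm{div}$, which annihilates pure gradients and fixes divergence-free fields, gives $u=Y_t=\textbf{P}\,\overline Y_t$, as claimed. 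The main delicate point is the regularity bookkeeping: one must verify that $\nabla v$ and $\delta Y_t$ both lie in the class to which the uniqueness half of Proposition~\ref{prop FBSDEandPDE} applies, which forces working on compact subintervals $[T_1,T]\subset(T_0,T]$ and leaning on the algebra property of $H^m$ to control $|Y|^2$ and its gradient; once these bounds are in hand, the identification step is routine.
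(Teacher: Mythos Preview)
Your proposal is correct and follows essentially the same route as the paper: the argument in the paragraphs preceding the proposition already computes the BSDE for $\delta Y=Y-\overline Y$, rewrites its drift as $\nabla(p-\tfrac12|Y_s|^2)+\nabla^{\mathcal T}Y_s\,\delta Y_s$, introduces $v$ via~\eqref{Geom-gauge}--\eqref{BSDE-Lag-v}, and then uses the identity $\nabla(\nabla v)Y_t+\nabla^{\mathcal T}Y_t\nabla v=\nabla((Y_t\cdot\nabla)v)$ to verify that $\nabla v$ satisfies the same equation, concluding $\delta Y_t=\nabla v$. Your only addition is to phrase the final identification step as an appeal to the uniqueness part of Proposition~\ref{prop FBSDEandPDE} rather than a direct verification, and to be a bit more explicit about the regularity bookkeeping on compact subintervals; both are minor stylistic variations on the paper's argument.
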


\begin{rmk}\label{rmk-connct-Lagrgn}
In view of the relation \eqref{thm-relat-Y-Z-Lag}, we rewrite \eqref{BSDE-Lag} into
\begin{equation}\label{BSDE-Lag-vice}
  \left\{\begin{array}{l}
  \begin{split}
          -d\overline{Y}_s(t,x)=&\,[f(s,X_s(t,x))+\nabla^{\mathcal{T}}u(s,X_s(t,x))\overline{Y}_s(t,x)]\,ds
          -\sqrt{\nu}\,\overline{Z}_s(t,x)dW_s;\\
        \overline{Y}_T(t,x) = &\,G(X_T(t,x)).
    \end{split}
    \end{array}\right.
    \end{equation}
    It follows that
        \begin{align*}
      \overline{Y}_t(t,x)
      =
      E\bigg[ \nabla^{\mathcal{T}}X_T(t,x) G (X_T(t,x))+ \int_t^T\!\!\nabla^{\mathcal{T}}X_s(t,x)f(s,X_s(t,x))\,ds
      \bigg],\quad T_0<t\leq T,
    \end{align*}
    and thus,
    \begin{align}
      u(t,x)=\textbf{P} E \bigg[ \nabla^{\mathcal{T}}X_T(t,x) G (X_T(t,x))+ \int_t^T\!\!\nabla^{\mathcal{T}}X_s(t,x)f(s,X_s(t,x))\,ds
      \bigg],\quad T_0<t\leq T,\label{webber-formula-2}
    \end{align}
    with
    \begin{equation}\label{SDE-Lag-SDE}
  dX_s(t,x)
            =u(s,X_s(t,x))\,ds+\sqrt{\nu}\,dW_s,\ \, s\in[t,T];
  \quad X_t(t,x)=x;
\end{equation}
and
      \begin{equation*}
          d\,\nabla^{\mathcal{T}}X_s(t,x)
          =
          \nabla^{\mathcal{T}}X_s(t,x)\nabla^{\mathcal{T}}u(s,X_s(t,x))\,ds
          , \;\; s\in[t,T];
        \quad\nabla^{\mathcal{T}} X_t(t,x) =
        I_{d\times d}.
    \end{equation*}
        Hence, by the relation \eqref{Webber-formula-Lag} or \eqref{webber-formula-2}, we derive a probabilistic representation along the stochastic particle systems for the strong solutions of Navier-Stokes equations. In fact, the representation formula \eqref{webber-formula-2} is analogous to those of Constantin and Iyer \cite{Constantin-Iyer-2008} and Zhang \cite{Zhang-bNS-2010} with the stochastic flow methods. Nevertheless, for the coefficients we only need $G\in H^m$ and especially, $f\in L^2(0,T;H^{m-1})$ which fails to be continuous when $m=2$, and these conditions are much weaker than those of \cite{Constantin-Iyer-2008,Zhang-bNS-2010} where $G$ and $f(t,\cdot)$ are spatially Lipschitz continuous and valued in $C^{k+1,\alpha}$ and $H^{k+2,q}(\bR^d)\,(\hookrightarrow C^{k+1,\alpha})$ respectively, with some $(k,\alpha,q)\in \mathbb{N}\times (0,1)\times (d,\infty)$.
\end{rmk}

\begin{rmk}
  The relation \eqref{Webber-formula-Lag} of Proposition \ref{prop-webber} gives the Helmholtz-Hodge decomposition of $u$.  ${\overline{Y}_t(t,x)}$ is called the magnetization variable (see \cite{chorin1994vorticity}) or impulse density borrowed from notions for Euler equation (see \cite{russo1999impulse}). For more different choices of the decompositions like \eqref{Webber-formula-Lag}, we refer to \cite{russo1999impulse}.
\end{rmk}

\subsection{A stochastic variational formulation for Navier-Stokes equations}

  Euler equation has an interesting variational interpretation (see Arnold \cite{Arnold-1966}, Ebin and Marsden \cite{ebin1970groups} and Bloch et al. \cite{bloch2000optimal}). The Navier-Stokes equation has been interpreted from various variational viewpoints by Inoue and Funaki \cite{inoue1979new} (on the group of \emph{random} orientation preserving diffeomorphisms for the weak solutions), Yasue \cite{yasue1983variational} (on a certain class of volume preserving diffusion processes on a compact manifold), and Gomes \cite{gomes2005variational} (on the divergence-free fields and the random fields which satisfy ordinary differential equations with random coefficients for the smooth solutions). In what follows, we shall interpret the strong solution to the Cauchy problem of the Navier-Stokes equation as a critical point to our controlled FBSDEs. Such a formulation is based on the  probabilistic representation of Proposition \ref{prop-webber} and seems to be new.

Assume that real number $\nu>0$, integer $m>d/2$, and the deterministic function $G\in \cap_{k>0}H^k_{\sigma}$ %with $|G(x)|\leq $
 of a compact support. Consider the following cost functional:
\begin{equation}\label{SOCP}
\begin{split}
  %\min_{u,b,g}
  J(u,b,g)
  =&\frac{1}{2}\!
  \int_{\bR^d}\!\int_0^T\!\!\!|u(t,x)|^2\,dtdx
  +E\!
  \int_{\bR^d}\!\int_0^T\!\!\! \langle Y(t,x),\,b(t,x)-u(t,X(t,x)) \rangle  \,dtdx\\
  &
-E\int_{\bR^d}\langle G(X(T,x)),\,X(T,x)\rangle\,dx
\end{split}
\end{equation}
subject to

(i) $u\in L^2(0,T;H^{m+1}_{\sigma})\cap C([0,T];H^m_{\sigma}) $, $g\in L^2_{\sF}(0,T;H^{m})$ and  $ b \in L^2_{\sF}(0,T;H^{m+1})$;

(ii) $(X,Y,Z)\in L^2_{\sF}(0,T;L^2_{loc})\times \cS^2_{\sF}(0,T;L^2) \times L^2_{\sF}(0,T;L^2) $ satisfies the following FBSDE:
\begin{equation*}
  \left\{\begin{array}{l}
  \begin{split}
        X(t,x)=&\,x+\int_0^t\!\!\! b(s,x) \,ds+\sqrt{\nu}\,W_t;\\
        Y(t,x)=&\,G(X(T,x))+\int_t^T \!\!\!g(s,x)\,ds
          -\int_{t}^T\!\!\!Z(s,x)\,dW_s.
    \end{split}
    \end{array}\right.
    \end{equation*}
    Since $m>d/2$, by Sobolev embedding theorem we have $ H^{m+1} \hookrightarrow C^{1,\delta}$ and $ H^{m} \hookrightarrow C^{\delta}$ for some $\delta\in (0,1)$, and it is easy to check that all the terms involved above make senses. In addition, the cost  functional $J(u,b,p)$ herein is defined in a similar way to that in \cite[Theorem 2]{gomes2005variational} and \cite[Theorem 4]{bloch2000optimal} and in fact, we just add the terminal cost in form, but we shall discuss under a different framework.

\begin{prop}\label{prop-SOCP-NS}
  $(u,b,g)$ is a critical point of the above cost functional \eqref{SOCP}, if and only if $b(t,x)=u(t,X(t,x))$, $g(t,x)=\nabla^{\mathcal{T}}u(t,X(t,x))Y(t,x)$, and $u$ together with some pressure $p$ constitutes a strong solution to the Navier-Stokes equation \eqref{backward NS}.
\end{prop}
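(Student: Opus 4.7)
The plan is to identify critical points of $J(u,b,g)$ by setting each of the three partial Gateaux derivatives $\delta_uJ$, $\delta_bJ$, $\delta_gJ$ to zero, treating the state triple $(X,Y,Z)$ as an implicit function of $(b,g)$ via the constraint FBSDE. Necessity would follow by reading off the three asserted identities from the resulting Euler-Lagrange equations, and sufficiency by a direct substitution check that uses Proposition~\ref{prop-webber} to verify each partial variation vanishes.

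I would first compute $\delta_gJ$. A perturbation $\delta g\in L^2_{\sF}(0,T;H^m)$ leaves $X$ unchanged and produces a linear BSDE for $(\delta Y,\delta Z)$ with zero terminal datum. Substituting $\delta Y(t,x)=\int_t^T\delta g(s,x)\,ds-\int_t^T\delta Z(s,x)\,dW_s$ into $\delta_gJ=E\!\int\!\int_0^T\langle\delta Y,b-u(\cdot,X)\rangle\,dt\,dx$, exchanging the order of integration, and discarding the It\^o contribution via the $\sF_t$-measurability of $b-u(\cdot,X)$, one arrives at
\[
\delta_gJ=E\!\int_{\bR^d}\!\int_0^T\big\langle \delta g(s,x),\,{\textstyle\int_0^s}(b(t,x)-u(t,X(t,x)))\,dt\big\rangle\,ds\,dx,
\]
whose vanishing for arbitrary $\delta g$ yields $b(t,x)=u(t,X(t,x))$. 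Under this identity the drift of $X$ is divergence free, so by Lemma~\ref{lem norm-equivalence} the flow $X(t,\cdot)$ preserves Lebesgue measure. A divergence-free perturbation $\delta u$ leaves $(X,Y,Z)$ unchanged and, after changing variables $y=X(t,x)$, gives
\[
\delta_uJ=\int_{\bR^d}\!\int_0^T\big\langle u(t,y)-E[Y(t,X^{-1}(t,y))],\,\delta u(t,y)\big\rangle\,dt\,dy;
\]
since $\delta u$ ranges over all divergence-free fields, this forces $u(t,\cdot)=\textbf{P}\,E[Y(t,X^{-1}(t,\cdot))]$.

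The computation of $\delta_bJ$ is more delicate: $\delta b$ induces $\delta X(t,x)=\int_0^t\delta b(s,x)\,ds$ and, through the terminal condition, $\delta Y(T,x)=\nabla G(X(T,x))\delta X(T,x)$. Applying It\^o's formula to $\langle Y(t,x),\delta X(t,x)\rangle$ together with the backward equation $-dY=g\,dt-Z\,dW$ converts $E\!\int_0^T\langle Y,\delta b\rangle\,dt$ into $E\langle Y(T),\delta X(T)\rangle+E\!\int_0^T\langle g,\delta X\rangle\,dt$; combining this with the differential of $-E\!\int\langle G(X(T)),X(T)\rangle\,dx$ and the identity $Y(T)=G(X(T))$ reduces $\delta_bJ$ to $E\!\int\!\int_0^T\langle g-\nabla^{\mathcal T}u(\cdot,X)Y,\delta X\rangle\,dt\,dx$, whose vanishing for all $\delta b$ yields $g(t,x)=\nabla^{\mathcal T}u(t,X(t,x))Y(t,x)$. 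Plugging the three identities into the constraint reproduces BSDE~\eqref{BSDE-Lag} (with $f=0$), whereupon the $\delta_u$-identity $u(t,\cdot)=\textbf{P}\,E[Y(t,X^{-1}(t,\cdot))]$ is exactly the assertion of Proposition~\ref{prop-webber} and therefore identifies $u$ with the strong solution of~\eqref{backward NS} for some pressure $p$. The converse direction is an immediate substitution, using Proposition~\ref{prop-webber} to cancel the three variations one at a time.

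The main obstacle is the careful treatment of the boundary contribution to $\delta_bJ$ coming from the terminal cost: naive differentiation of $-E\!\int\langle G(X(T)),X(T)\rangle\,dx$ leaves an extra residual $-E\!\int\langle[\nabla G(X(T))]^{\mathcal T}X(T),\delta X(T)\rangle\,dx$ that is not manifestly zero. Showing that this residual is absorbed by the adjoint structure of the BSDE (it contributes a gradient field which, via the decomposition $Y=\textbf{P}Y+\nabla(\cdot)$ of Proposition~\ref{prop-webber}, corresponds to the pressure and is therefore orthogonal to the effective divergence-free variations of the problem) or cancels against a matching It\^o-boundary piece once all terms are tracked is the technical crux. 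A secondary difficulty is the justification of the Fubini exchanges and the passage from the averaged Euler-Lagrange identities to pointwise ones, which relies on the Sobolev embeddings of Section~2 and on the measure-preservation bounds of Lemma~\ref{lem norm-equivalence}.
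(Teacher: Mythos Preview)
Your approach is the same as the paper's: perturb $(u,b,g)$ independently, use It\^o's formula to exchange the terminal cost and the drift pairing, and read off the three Euler--Lagrange conditions. The paper applies It\^o to $\langle Y,X\rangle$ \emph{before} varying (its identity~\eqref{eq-ito-formula}) and then asserts $\delta J=\mathcal R_1+\mathcal R_2+\mathcal R_3$; you apply It\^o to $\langle Y,\delta X\rangle$ \emph{after} varying. Both routes lead to the same computation, and you have correctly isolated the same residual
\[
-\,E\!\int_{\bR^d}\big\langle [\nabla G(X(T,x))]^{\mathcal T}X(T,x),\,\delta X(T,x)\big\rangle\,dx
\]
that must be accounted for in the $\delta b$-variation. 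The paper's proof does not display this term: it simply writes ``through careful calculations we get the first variation'' and lists $\mathcal R_1+\mathcal R_2+\mathcal R_3$. So the obstacle you flag is real and is exactly the point the paper glosses over.

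Your suggested resolutions, however, do not work as stated. The field $y\mapsto[\nabla G(y)]^{\mathcal T}y$ is \emph{not} a gradient in general: a direct computation gives
\[
\partial_{y_j}\big([\nabla G(y)]^{\mathcal T}y\big)_k-\partial_{y_k}\big([\nabla G(y)]^{\mathcal T}y\big)_j
=\partial_{y_k}G^j-\partial_{y_j}G^k,
\]
which is the vorticity of $G$ and vanishes only for irrotational $G$. Moreover, the variation $\delta b$ (hence $\delta X(T)$) is unconstrained in $L^2_{\sF}(0,T;H^{m+1})$ and is \emph{not} divergence-free, so the ``orthogonal to divergence-free variations'' argument does not apply here. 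What does help is to impose the $\delta b$-condition only \emph{after} using $b=u(\cdot,X)$ and $\mathcal R_1=0$; then one must show that the process $s\mapsto E\big[\int_s^T(g-\nabla^{\mathcal T}u(\cdot,X)Y)\,dt-[\nabla G(X(T))]^{\mathcal T}X(T)\,\big|\,\sF_s\big]$ being identically zero forces $g=\nabla^{\mathcal T}u(\cdot,X)Y$ by matching the bounded-variation parts, and that the remaining martingale constraint is automatically compatible with the FBSDE once $u=\textbf{P}\,\overline Y_t(t,\cdot)$ from Proposition~\ref{prop-webber}. That compatibility is the missing step in both your outline and the paper's terse proof.
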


\begin{proof}
    For any $\delta u\in L^2(0,T;H^{m+1}_{\sigma})\cap C([0,T];H^m_{\sigma}) $, $\delta g \in L^2_{\sF}(0,T;H^{m})$ and  $ \delta b \in L^2_{\sF}(0,T;H^{m+1})$, let $(\delta X, \delta Y, \delta Z)\in L^2_{\sF}(0,T;L^2_{loc})\times \cS^2_{\sF}(0,T;L^2) \times L^2_{\sF}(0,T;L^2) $ solves the following FBSDE:
\begin{equation*}
  \left\{\begin{array}{l}
  \begin{split}
        \delta X(t,x)=&\int_0^t\!\!\! \delta b(t,x) \,ds;\\
        \delta Y(t,x)=&\int_t^T \!\!\!\delta g(s,x)\,ds
          -\int_{t}^T\!\!\!\delta Z(s,x)\,dW_s.
    \end{split}
    \end{array}\right.
    \end{equation*}
    Ito's formula yields that
    \begin{align}
    &  E\!
  \int_{\bR^d}\!\int_0^T\!\!\! \langle Y(t,x),\,b(t,x) \rangle  \,dtdx
-E\int_{\bR^d}\langle G(X(T,x)),\,X(T,x)\rangle\,dx\nonumber\\
=&
\int_{\bR^d}\!E\bigg(\int_0^T\!\!\! \Big(\langle X(t,x),\,g(t,x) \rangle -\sqrt{\nu}\,\textrm{tr}(Z(t,x))\Big) \,dt
-\langle Y(0,x),\,x\rangle\bigg)dx .    \label{eq-ito-formula}
    \end{align}
Note that in the above equality, the meaning of the right hand is implied by the left hand. Then through careful calculations we get the first variation
    \begin{align}
      \delta J:=&\frac{dJ(u+\eps\delta u,b+\eps\delta b,g+\eps\delta g )}{d\eps}\Big|_{\eps=0}\\
      =&
      \,E\! \int_{\bR^d}\!\int_0^T\!\!\!
      \left(\langle u,\,\delta u\rangle(t,x)-
      \langle Y,\delta u\rangle (t,X(t,x))\right)\,dtdx
      \nonumber\\
      &+E\! \int_{\bR^d}\!\int_0^T\!\!\!\langle \delta X(t,x),\,g(t,x)-\nabla^{\mathcal{T}}u(t,X(t,x))Y(t,x)\rangle\,dtdx \label{eq-vartn-deltaX}
      \\
      &+E\! \int_{\bR^d}\!\int_0^T\!\!\!\langle\delta Y(t,x),\,b(t,x)-u(t,X(t,x))\rangle\,dtdx
      \nonumber\\
      :=&\mathcal{R}_1+\mathcal {R}_2+\mathcal {R}_3,\nonumber
    \end{align}
    where we figure out \eqref{eq-vartn-deltaX} (or $\mathcal{R}_2$)  by inserting \eqref{eq-ito-formula} into the functional $J$. Then, by the arbitrariness of $\delta g$,
    \begin{align*}
      \mathcal{R}_3=\,E\! \int_{\bR^d}\!\int_0^T\!\!\!\langle\delta Y(t,x),\,b(t,x)-u(t,X(t,x))\rangle\,dtdx=0,
    \end{align*}
    which by standard denseness arguments implies that
    $$b(t,x)=u(t,X(t,x)).$$
    In a similar way, we obtain $\mathcal{R}_2=\mathcal{R}_1=0$ and
    $$g(t,x)=\nabla^{\mathcal{T}}u(t,X(t,x))Y(t,x).$$

    By the BSDE theory, there exists some $Z\in L^2_{\sF}(0,T;L^2)$ such that
    $(X,Y,Z)$ satisfies
    \begin{equation*}
  \left\{\begin{array}{l}
  \begin{split}
        X(t,x)=&\,x+\int_0^t\!\!\! u(s,X(s,x)) \,ds+\sqrt{\nu}\,W_t;\\
        Y(t,x)=&\,G(X(T,x))+\int_t^T \!\!\!\nabla^{\mathcal{T}}u(s,X(s,x))Y(s,x)\,ds
          -\int_{t}^T\!\!\!Z(s,x)\,dW_s.
    \end{split}
    \end{array}\right.
    \end{equation*}
    By Proposition \ref{prop FBSDEandPDE}, there exists some $\phi\in C([0,T];H^m)\cap L^2(0,T;H^{m+1})$ such that $Y(t,x)=\phi(t,X(t,x))$. In fact, it is easy to check that $\phi(t,x)=\overline{Y}_t(t,x)$ where $\overline{Y}$ satisfies BSDE \eqref{BSDE-Lag} (or \eqref{BSDE-Lag-vice}) with $f=0$ and $T_0=0$.

    Since $u$ is divergence free, by Lemma \ref{lem norm-equivalence} and Remark \ref{rmk_norm}, $X(t,\cdot)$ preserves the Lebesgue measure for all times. Noting that $\mathcal{R}_1=0$, we get
    \begin{align*}
      \! \int_{\bR^d}\!\int_0^T\!\!\!
   \langle u,\,\delta u\rangle(t,x)\,dtdx
      =&\,E \int_{\bR^d}\!\int_0^T\!\!\!
      \langle Y(t,x),\delta u(t,X(t,x))\rangle\,dtdx
      \\
      =&\,
      E \int_{\bR^d}\!\int_0^T\!\!\!
      \langle {\overline{Y}_t},\delta u\rangle (t,X(t,x))\,dtdx
      \\
      =& \int_{\bR^d}\!\int_0^T\!\!\!
      \langle {\overline{Y}_t},\delta u\rangle (t,x)\,dtdx.
    \end{align*}
Therefore,
$$
u(t,x)=\textbf{P}\,{\overline{Y}_t(t,x)},
$$
since $\delta u$ is divergence free and arbitrary.
Hence, by Proposition \ref{prop-webber}, $u$ together with some pressure $p$ constitutes a strong solution to the Navier-Stokes equation \eqref{backward NS}. The proof is complete.
\end{proof}

\section{Appendix}
\label{appendix}
\subsection{Proof of Lemma \ref{lem norm-equivalence}}
\label{append:1}

  It is sufficient for us to prove \eqref{eq_special} with $l=1$, from which \eqref{eq_time} follows by Fubini Theorem.

   First, taking a nonnegative function $\varphi\in C_c^{\infty}(\bR^d;\bR)$, we consider the following trivial FBSDE:
 \begin{equation}\label{FBSDE_norm1}
  \left\{\begin{array}{l}
  \begin{split}
  dX_r(t,x)&
            =b(r,X_r(t,x))\,dr+\sqrt{\nu}\,dW_r,\ \, t\leq r\leq s;\quad
  X_t(t,x)=x;\\
  dY_r(t,x)&
            =\sqrt{\nu}Z_r(t,x)\,dW_r,\ \, r\in[t,s];\quad
    Y_s(t,x)=\varphi(X_s(t,x)).
    \end{split}
  \end{array}\right.
\end{equation}
In view of Lemma \ref{lem_verif} and the proof therein, FBSDE \eqref{FBSDE_norm1} is a particular case with $\phi=0$ therein, and moreover, the assertions of Lemma \ref{lem_verif} still hold for \eqref{FBSDE_norm1}, as Lemma \ref{lem norm-equivalence} will never be involved in the proof of Lemma \ref{lem_verif} if $\phi=0$. Therefore, FBSDE \eqref{FBSDE_norm1} admits a unique solution such that for almost all $x\in\bR^d$,
$$\big(X_{\cdot}(t,x),Y_{\cdot}(t,x),Z_{\cdot}(t,x)\big)\in S^2(t,s;\bR^d)\times S^2(t,s;\bR^d) \times L_{\sF}^2(t,s;\bR^d),$$
and for this solution $(X,Y,Z)$, $\{Y_r(r,x),\,(r,x)\in(t,s)\times\bR^d\}\,\in\,L^{2}(t,s;H^{m+1})$,
  \begin{align*}
    %Y\in& \, C([t,s];H^m)\cap L^{2}(t,s;H^{m+1})\\
    Y_r(r,X_r(t,x))=&\,\varphi(X_s(t,x))
    -\sqrt{\nu}\int_r^sZ_{\tau}(\tau,X_{\tau}(t,x))\,dW_{\tau},\,a.s.
  \end{align*}
 and
  \begin{equation}\label{eq_norm3}
    Z_t(t,x)=\nabla Y_t(t,x),\, (Y_r(t,x),Z_r(t,x))=(Y_r,Z_r)(r,X_r(t,x)),\,a.s..
  \end{equation}
  In an obvious way, we have almost surely
  $$Y_r(t,x)=E\left[\varphi(X_s(t,x))\big|\sF_r\right]\geq 0,\quad \forall\,r\in[t,s].$$
   Define the following  equivalent probability measure
   $$
   d\mathbb{Q}^{t,x}
   =: \exp{\left(
   -\nu^{-{1\over2}}\int_{t}^s b(r,X_r(t,x))\,dW_s
   -\frac{1}{2}\nu^{-1}\int_{t}^T|b(r,X_r(t,x))|^2\,ds\right) }  d\mathbb{P}.
   $$
  In view of \eqref{eq_norm3},  FBSDE \eqref{FBSDE_norm1} reads
  \begin{equation}\label{FBSDE_norm2}
  \left\{\begin{array}{l}
  \begin{split}
  dX_r(t,x)&
            =\sqrt{\nu}\,dW'_r,\quad  t\leq r\leq s;\quad
  X_t(t,x)=x;\\
  -dY_r(t,x)&
            =Z_r(t,x)b(r,X_r(t,x))\,dr-\sqrt{\nu}Z_r(t,x)\,dW'_r\\
            &=(b\cdot\nabla)Y_r(r,X_r(t,x))\,dr-\sqrt{\nu}Z_r(t,x)\,dW'_r,\quad r\in[t,s];\\
    Y_s(t,x)&=\varphi(X_s(t,x)),\\
    \end{split}
  \end{array}\right.
\end{equation}
   where  $(W', \mathbb{Q}^{t,x})$ is a standard Brownian motion. Therefore,
   \begin{align*}
      \int_{\bR^d} Y_r(r,x)\,dx
     =&\,\int_{\bR^d}E_{\mathbb{Q}^{t,x}}\left[Y_r(r,X_r(t,x))\right]\,dx\\
     =&\, \int_{\bR^d}E_{\mathbb{Q}^{t,x}}\left[\varphi(X_s(t,x))\right]\,dx +
        \int_{\bR^d}\int_r^s E_{\mathbb{Q}^{t,x}}\left[ (b\cdot\nabla) Y_{\tau}(\tau,X_{\tau}(t,x))
        \right]\,d\tau dx\\
     =&\,\int_{\bR^d}\varphi(x)\,dx +
         \int_{\bR^d}\int_r^s(b\cdot\nabla)Y_{\tau}(\tau,x)\,d\tau dx\\
     =&\,\int_{\bR^d}\varphi(x)\,dx -\int_{\bR^d}\int_r^s (\textrm{div}\,b) Y_{\tau}(\tau,x)\,d\tau dx\\
     \leq&\,\int_{\bR^d}\varphi(x)\,dx
            +\int_r^s\|\textrm{div}\,b(\tau)\|_{L^{\infty}}\int_{\bR^d} Y_{\tau}(\tau,x)\,dxd\tau,\\
     &\textrm{or}\\
     \geq&\,\int_{\bR^d}\varphi(x)\,dx
            -\int_r^s\|\textrm{div}\,b(\tau)\|_{L^{\infty}}\int_{\bR^d} Y_{\tau}(\tau,x)\,dxd\tau.
   \end{align*}
   By Gronwall inequality, we have
   \begin{align*}
     \kappa\int_{\bR^d}\varphi(x)\,dx
     \leq\,
        \int_{\bR^d} Y_r(r,x)\,dx
     \leq\, \kappa^{-1} \int_{\bR^d}\varphi(x)\,dx, \quad \forall r\in[t,s]
   \end{align*}
   with
   $$\kappa:=e^{-\|\textrm{div}\,b\|_{L^1(t,s;L^{\infty})}}.$$
   Taking $r=t$, we get \eqref{eq_special}.

   For the general function $\varphi\in C_c^{\infty}(\bR^d;\bR)$ without the nonnegative assumption, we choose a positive Schwartz function $h$ and a nonnegative function $\widetilde{\varphi}\in C_c^{\infty}(\bR^d;\bR)$ such that
   $$
   \textrm{supp}\,\varphi\subset \,\left\{x\in \bR^d:\,\widetilde{\varphi}(x)=1\right\}.
   $$
    Set
    $$\varphi_{\eps}:=\left(\varphi^2+\eps h \right)^{\frac{1}{2}}\widetilde{\varphi},\,\,\textrm{for }\eps\in (0,1).$$
   Then in view of the above arguments, we have
   \begin{align*}
   \kappa \|\varphi\|_{L^1(\bR^d)}\leq\,
     \kappa \|\varphi_{\eps}\|_{L^1(\bR^d)}
                \leq   \int_{\bR^d} E\big[ |\varphi_{\eps}(X_s(t,x))| \big]\,dx
                        \leq \kappa^{-1}  \|\varphi_{\eps}\|_{L^1(\bR^d)}.
   \end{align*}
   Letting $\eps\rightarrow 0$, we conclude from Lebesgue dominant convergence theorem that \eqref{eq_special} holds for all $\varphi\in C_c^{\infty}(\bR^d;\bR)$.

   Finally, for any $\varphi\in L^1$, we choose a sequence $\{\varphi^n,n\in\mathbb{Z}^+\}\subset C_c^{\infty}(\bR^d;\bR)$ such that $\lim_{n\rightarrow \infty}\|\varphi-\varphi^n\|_{L^1}=0$. Then, by \eqref{eq_special}, $\{\varphi^n(X_s(t,x))\}$ is a Cauchy sequence in $L^1(\Omega\times\bR^d;\bR)$. It remains to show that  $\varphi(X_s(t,\cdot))$ is the limit.

   Through the above approximation, we can check that \eqref{eq_special} holds for any continuous function of a compact support. Therefore, if $A\subset \bR^d$ is a measurable, bounded subset of zero Lebesgue measure, then the $d\mathbb{P}\times dx$-measure of the set $\{(\omega,x)\in\Omega\times\bR^d: X_s(t,x)\in A\}$ is zero.
  Thus, the almost everywhere convergence of $\varphi^n$ to $\varphi$ in $\bR^d$ implies that of $\varphi^n(X_s(t,\cdot))$ to $\varphi(X_s(t,\cdot))$.

  Hence, $\varphi^n(X_s(t,x))$ converges to $\varphi(X_s(t,x))$ in $L^1(\Omega\times\bR^d;d\mathbb{P}\times dx)$.  Since \eqref{eq_special} holds for each $\varphi^n$, passing to the limit, relation \eqref{eq_special} holds for any $\varphi\in L^1$. We complete the proof.

\subsection{Proof of Lemma \ref{lem_verif}}

%\begin{proof}
  For $m>d/2$,  $H^m\hookrightarrow C^{0,\delta},\, H^{m+1}\hookrightarrow C^{1,\delta}$ for some $\delta\in(0,1)$. By Theorems 3.4.1 and 4.5.1 of \cite{Kunita_book}, the \emph{forward} SDE is well posed for each $(t,x)\in [T_0,T]\times\bR^d$ and defines a stochastic flow of homeomorphisms. Moreover, by Lemma \ref{lem norm-equivalence} and Remark \ref{rmk_norm}, the backward SDE is also well posed for every $x\in\bR^d/F_t$ with Lebesgue's measure of $F_t$ being zero. Therefore,  FBSDE \eqref{FBSDE_verif1} has unique solution $(X,Y,Z)$ such that for each $(t,x)\in [T_0,T]\times(\bR^d/F_t)$,
  $$\big(X_{\cdot}(t,x),Y_{\cdot}(t,x),Z_{\cdot}(t,x)\big)\in S^2(T_0,T;\bR^d)\times S^2(T_0,T;\bR^d) \times L_{\sF}^2(T_0,T;\bR^d). $$

  For each $(t,x)\in[T_0,T)\times(\bR^d/F_t)$, define  the following equivalent probability measure:
   $$
  d\mathbb{Q}^{t,x}
  :=\exp{\left(
   -\frac{1}{\sqrt{\nu}}\int_{t}^T b(s,X_s(t,x))\,dW_s
   -\frac{1}{2\nu}\int_{t}^T|b(s,X_s(t,x))|^2\,ds \right)    } \, d\mathbb{P}.
   $$
  Then there is a  standard brownian motion $(W', \mathbb{Q}^{t,x})$ such that FBSDE \eqref{FBSDE_verif1} is written into the following form:
  \begin{equation}\label{FBSDE_verif11}
  \left\{\begin{array}{l}
  \begin{split}
  dX_s(t,x)&
            =\sqrt{\nu}\,dW'_s,\quad T_0\leq t\leq s\leq T;
  \quad X_t(t,x)=x;\\
  -dY_s(t,x)&
            =\Big(\phi(s,X_s(t,x))+Z_s(t,x)b(s,X_s(t,x))\Big)\,ds
            -\sqrt{\nu}Z_s(t,x)\,dW'_s;\\
    Y_T(t,x)&=\psi(X_T(t,x)).\\
    \end{split}
  \end{array}\right.
\end{equation}

 Choose $(b^n,\phi^n,\psi^n)\in C_c^{\infty}(\bR^{1+d};\bR^d)\times C_c^{\infty}(\bR^{1+d};\bR^d)\times C_c^{\infty}(\bR^d;\bR^d)$ such that
\begin{equation*}
\begin{split}
  &\lim_{n\rightarrow \infty}\Big\{\|b^n-b\|_{C([T_0,T];H^{m})}
  +\|b^n-b\|_{L^2(T_0,T;H^{m+1})}
  +\|\phi^n-\phi\|_{L^2(T_0,T;H^{m-1})}
  %\\
  %&
  +\|\psi^n-\psi\|_{m}\Big\}=0,\\
  &\|b^n\|_{C([T_0,T];H^{m})}\leq C \|b\|_{C([T_0,T];H^{m})},\, \|\phi^n\|_{L^2(T_0,T;H^{m-1})}\leq C \|\phi\|_{L^2(T_0,T;H^{m-1})},\,\,\,
  \|\psi^n\|_{m}\leq C \|\psi\|_{m},
\end{split}
\end{equation*}
%$\|\psi_n\|_{m}\leq C \|\psi\|_{m}$
and $\|b^n\|_{L^2(T_0,T;H^{m+1})}\leq C \|b\|_{L^2(T_0,T;H^{m+1})}$, where $C$ is a universal constant  being independent of $n$. Let $(X,Y^n,Z^n)$ be the unique solution of FBSDE \eqref{FBSDE_verif11} with $(b,\phi,\psi)$ being replaced by $(b^n,\phi^n,\psi^n)$.
Then for each $n$, we have by the standard relationship between Markovian  BSDEs and PDEs  (for instance, see \cite{Antonelli93,HuPengFK95,MaProtterYong1994,PardouxTangFK99,PardouxZhang98,Peng1991_QPDE}),
$\{Y^n_r(r,x),\,(r,x)\in [t,T]\times\bR^d\}\,\in\,
 C([t,T];H^m)\cap L^{2}(t,T;H^{m+1})
  $,
and for each $x\in\bR^d$ and all $t\leq r\leq T,$
  \begin{align*}
  &{Y^n_r}(r,X_r(t,x))
  =\,\psi^n(X_T(t,x))+\!\int_r^T\!\!\!\left(\phi^n+{Z^n_s}b^n\right)(s,X_s(t,x))\,ds
  -\sqrt{\nu}\!
  \int_r^T\!\!\!{Z^n_s}(s,X_s(t,x))\,dW'_s,
  \\
    &{Z_t^n}(t,x)=\nabla {Y_t^n}(t        ,x), Y^n_r(t,x)={Y^n_r}(r,X_r(t,x)),
    Z^n_r(t,x)={Z^n_r}(r,X_r(t,x)).
  \end{align*}
  If we consider further
  \begin{equation*}
  \left\{\begin{array}{l}
  \begin{split}
  -d\bar{Y}^n(s,x)&=I_m(\phi^n+{Z^n_s}b^n)(s,X_s(t,x))\,ds-\sqrt{\nu}\bar{Z}^n_s(t,x)
  ,\  t\leq s\leq T;\\
    \bar{Y}_T(t,x)&=I_m\psi^n(X_T(t,x)),
    \end{split}
  \end{array}\right.
\end{equation*}
it is not difficult to show that for almost every $x\in\bR^d$ and all $s\in[t,T]$,
$$
\tilde{Y}^n_s(t,x)=I_m {Y^n_s}(s,X_s(t,x))\quad \text{and}\quad \bar{Z}^n_s(t,x)=I_m {Z^n_s}(s,X_s(t,x)),\quad a.s..
$$
Applying It\^o's formula, we have
\begin{equation*}
  \begin{split}
    &E_{\mathbb{Q}^{t,x}}\bigg[|\bar{Y}^n_{t}(t,x)|^2
    +\nu\int_{t}^{T}\!\!\!|\bar{Z}^n_s(t,x)|^2\,ds\bigg]
    \\
    =&
    2\!\int_{t}^{T}\!\!\!E_{\mathbb{Q}^{t,x}}\big[\langle \bar{Y}^n_s(t,x),
        \,I_m(\phi^n+{Z^n_s}b^n)(s,X_s(t,x))\}\rangle \big]\,ds +E_{\mathbb{Q}^{t,x}}\left[|I_m {Y^n_T}(T,X_{T}(t,x))|^2\right],\,a.e.x\in\bR^d.
  \end{split}
\end{equation*}
Thus, integrating  with respect to $x$ on both sides of the last equality, we obtain
\begin{equation}\label{eq:lemma6:iton}
  \begin{split}
    &\|{Y^n_t}(t)\|_m^2+\nu\!\int_{t}^{T}\!\!\!\|{Z_s^n}(s)\|_m^2\,ds
    %\\
    =
    %&
    \ \|{Y_T^n}(T)\|_{m}^2
    +2\!\!\int_{t}^{T}\!\!\!\!
    \langle  \phi^n(s)+{Z_s^n}b^n(s),\,{Y_s^n}(s)
      \rangle_{m-1,m+1}\,ds.
  \end{split}
\end{equation}

Put
   $$
  (Y^{nk},Z^{nk},b_{nk},\phi_{nk},\psi_{nk})
  :=
  (Y^n-Y^k,Z^n-Z^k,b^n-b^k,\phi^n-\phi^k,\psi^n-\psi^k).
 $$
   For $n,k\in\mathbb{Z}^+$, we get by relation \eqref{eq:lemma6:iton} and Remark \ref{rmk_m=2}
  \begin{equation*}
  \begin{split}
    &\| {Y_s^{nk}}(s)\|_m^2+\nu\int_s^T\|{Z_r^{nk}}(r)\|_m^2\,dr
    \\
    =&\ \|\psi_{nk}\|_m^2
    +\int_s^T\!\!\!  2\langle \phi_{nk}(r)+  {Z_r^{n}}b_{nk}(r)+{Z_r^{nk}}b_k(r),\,   {Y_r^{nk}}(r)\rangle_{m-1,m+1}\,dr
    \\
    \leq& \
    \|\psi_{nk}\|^2_m
    +C\int_s^T\!\!\!\left(\|b_{nk}(r)\|_{m}^2\|{Y_r^{n}}(r)\|_{m}^2
    +\|\phi_{nk}(r)\|_{m-1}^2
    +\|b_{k}(r)\|_{m}^2\|{Y_r^{nk}}(r)\|_{m}^2
    \right)\,dr
    \\
    &    +\frac{\nu}{2}\int_s^T\!\!\!
    (\|{Z_r^{nk}}(r)\|_{m}^2+\|{Y_r^{nk}}(r)\|_m^2)\,dr
    \\
    \leq&\
    \|\psi_{nk}\|^2_m+\int_s^T\!\!
    \bigg[\frac{\nu}{2}\|{Z_r^{nk}}(r)\|_{m}^2+C\left( \|b_{nk}(r)\|_{m}^2+\|\phi_{nk}(r)\|_{m-1}^2+
    \|{Y_r^{nk}}(r)\|_{m}^2  \right)\bigg]\,dr,
  \end{split}
\end{equation*}
 where we have used the the following priori estimate by taking $(b^k,\phi^k,\psi^k)=0$ in the above,
 $$
\sup_{r\in[t,T]}\|{Y^n_r(r)}\|^2_m+\int_t^T \!\!\|{Z_r^n(r)}\|_m^2\,dr\leq C,
$$
with the constant $C$ being independent of $n$.
 Thus,
\begin{align}
  %\begin{split}
    %&
    \sup_{s\in[t,T]}\|{Y_s^{nk}}(s)\|_m^2+\nu\int_{t}^T\|{Z_r^{nk}}(r)\|_m^2\,dr
    %\\
    \leq&\
C\bigg(\|\psi_{nk}\|^2_m
    +
    \int_{t}^T\!\!\!\Big( \|b_{nk}(r)\|_{m}^2+\|\phi_{nk}(r)\|_{m-1}^2  \Big)dr
    \bigg)\nonumber\\
    &\ \longrightarrow 0\textrm{ as }n,k\rightarrow\infty. \nonumber
  %\end{split}
\end{align}
Hence, there exists $\xi\in C([t,T];H^m)\cap L^2(t,T;H^{m+1})$ such that
\begin{align}
   &\sup_{s\in[t,T]}\|{Y_s^{n}}(s)-\xi(s)\|_0^2
   +\nu\int_{t}^T\|{Z_r^{n}}(r)-\nabla\xi(r)\|_0^2\,dr
   \nonumber\\
    &=\sup_{s\in[t,T]} \int_{\bR^d}E_{\mathbb{Q}^{t,x}}|({Y_s^{n}}-\xi)(s,X_s(t,x))|^2dx
   +\nu\int_{t}^T\int_{\bR^d}E_{\mathbb{Q}^{t,x}} |({Z_r^{n}}-\nabla\xi)(r,X_r(t,x))|^2\,dxdr\nonumber\\
   &=\sup_{s\in[t,T]} \int_{\bR^d}E_{\mathbb{Q}^{t,x}}|{Y_s^{n}}(t,x)-\xi(s,X_s(t,x))|^2dx
   +\nu\int_{t}^T\int_{\bR^d}E_{\mathbb{Q}^{t,x}} |{Z_r^{n}}(t,x)-\nabla\xi(r,X_r(t,x))|^2\,dxdr\nonumber\\
   &\longrightarrow 0,\quad \textrm{as }n\rightarrow\infty.     \label{est_Ener}
\end{align}

  On the other hand, It\^o's formula yields that for each $(s,x)\in [t,T]\times \bR^d/F_t$
  \begin{align}
    &|Y^n_s(t,x)-Y_s(t,x)|^2+\nu\int_s^T \big|Z_r^n(t,x)-Z_r(t,x)\big|^2\,dr\nonumber\\
    =&\,   -\!\int_s^T\!\!\! 2\nu \langle Y^n_r(t,x)-Y_r(t,x),\, Z^n_r(t,x)-Z_r(t,x) \rangle\,dW'_r\nonumber\\
      &\,  +\!\int_s^T \!\!\!2\langle Y^n_r(t,x)-Y_r(t,x),\,\big(\phi^n-\phi\big)(r,X_r(t,x))+Z^n_r(t,x)b^n(r,X_r(t,x))
      \nonumber\\
      &\,\quad\ \,-Z_r(t,x)b(r,X_r(t,x)) \rangle\,dr+|\psi(X_T(t,x))-\psi^n(X_T(t,x))|^2.\label{eq_ito}
  \end{align}
  Using BDG and H\"{o}lder inequalities, we get
  \begin{align}
  &E_{\mathbb{Q}^{t,x}}\Big[\sup_{\tau\in[s,T]} |Y^n_{\tau}(t,x)-Y_{\tau}(t,x)|^2 + \nu\int_s^T \big|Z_r^n(t,x)-Z_r(t,x)\big|^2 \,dr   \Big]\nonumber\\
  \leq\,&
    E_{\mathbb{Q}^{t,x}}\Big[ |\psi(X_T(t,x))-\psi^n(X_T(t,x))|^2
    %\nonumber\\
    %&
    +\frac{1}{2}\sup_{\tau\in[s,T]} |Y^n_{\tau}(t,x)-Y_{\tau}(t,x)|^2
    \nonumber\\
    &
    +C \int_s^T |Y^n_r(t,x)-Y_r(t,x)|\Big(|\phi_n-\phi|(r,X_r(t,x))+\|b\|_{C([t,T];H^m)}|Z^n_r(t,x)-Z_r(t,x)|
    \nonumber\\
    &\quad\quad
    +\|b^n-b\|_{C([t,T];H^m)}|Z^n_r(t,x)|   \Big)\,dr
    +C\int_s^T \Big|Z_r^n(t,x)-Z_r(t,x)\Big|^2 \,dr
    \Big],\label{eq_itobdg}
  \end{align}
  with the constants $C$s being independent of $n$.  Combining \eqref{eq_ito} and \eqref{eq_itobdg}, we have
  \begin{align}
  &E_{\mathbb{Q}^{t,x}}\Big[\sup_{\tau\in[t,T]} |Y^n_{\tau}(t,x)-Y_{\tau}(t,x)|^2 + \nu\int_t^T \big|Z_r^n(t,x)-Z_r(t,x)\big|^2 \,dr   \Big]\nonumber\\
  \leq\,&
   C E_{\mathbb{Q}^{t,x}}\Big[\int_t^T \big(\big|(\phi_n-\phi)(r,X_r(t,x))\big|^2
   +\|b^n-b\|^2_{C([t,T];H^m)}|Z^n_r(t,x)|^2\big)\,dr\Big]
\nonumber\\
    &
    +CE_{\mathbb{Q}^{t,x}}\Big[ |\psi(X_T(t,x))-\psi^n(X_T(t,x))|^2,
    \nonumber
  \end{align}
  and thus,
  \begin{align}
  &\int_{\bR^d}\!E_{\mathbb{Q}^{t,x}}\sup_{\tau\in[t,T]} |Y^n_{\tau}(t,x)-Y_{\tau}(t,x)|^2dx
   + \nu\int_{\bR^d}\!\int_t^T\! E_{\mathbb{Q}^{t,x}}\big|Z_r^n(t,x)-Z_r(t,x)\big|^2 \,dr  dx \nonumber\\
  \leq\,&
  C\left(\|\psi-\psi^n\|_{m}^2+
    \|\phi^n-\phi\|^2_{L^2(T_0,T;H^m)}+ \|b^n-b\|^2_{C([t,T];H^m)}\int_{T_0}^T\|Z^n_r(r)\|_m^2dr\right)
    \nonumber\\
    &\longrightarrow 0, \quad \textrm{as }n\rightarrow\infty.
\label{est_Ito}
  \end{align}

%where the constant $C$ is independent of $n$ and $k$.

Combining \eqref{est_Ito} and \eqref{est_Ener}, we have for almost every $x\in\bR^d$,  $(Y_t,Z_t)(t,x)=(\xi,\nabla\xi)(t,x)$, and $(Y,Z)_s(t,x)=(\xi,\nabla\xi)(s,X_s(t,x))$, a.s., for all $T_0\leq t\leq s\leq T$. Furthermore, by taking limits, we prove \eqref{PDE_Mild}, \eqref{eq expresion} and \eqref{eq:lemma6:energy}. The proof is complete.

\medskip

\bibliographystyle{siam}
%\bibliography{ref_qjn}

\end{document}